\numberwithin{equation}{section}
\newtheorem{thm}{Theorem}[section]
\newtheorem{lemma}[thm]{Lemma}
\newtheorem{prop}[thm]{Proposition}
\newtheorem{cor}[thm]{Corollary}
\theoremstyle{definition}
\newcommand{\defeq}{\stackrel{\rm{def}}{=}}
\renewcommand{\Im}{\operatorname{\rm Im}\nolimits}
\def \rank {\operatorname{rank}}
\def \supp {\operatorname{supp}}
\def \sgn {\operatorname{sgn}}
\def \supp {\operatorname{supp}}
\def \tr {\operatorname{tr}}
\def \mco {{\mathcal O}}
\def \mcr {{\mathcal R}}
\def \loc {\operatorname{loc}}
\def \bded {\operatorname{bded}}
\def \ext {\operatorname{ext}}
\def \muS {\mu_{S,\operatorname{max}}}
\def \ol {\overline{\lambda}}
\def \msca {\operatorname{m}_{sc}}
\def \restrict {\upharpoonright}
\def \mcd {{\mathcal D}}
\def \Real {{\mathbb R}}
\def \Sphere {\mathbb{S}}
\def \Complex {\mathbb{C}}
\def \Natural {{\mathbb N}}
\def \Sphere {{\mathbb S}}
\def \mcr{{\mathcal R}}
\def \Integers {{\mathbb Z}}
\def \R {{\mathbb R}}
\def \mch {\mathcal H}
\def \pot {V}
\title [Scattering resonances in even dimensions]
{Some remarks on resonances in even-dimensional Euclidean
scattering}
   \author { T.\ J.\  Christiansen and P.\ D.\ Hislop}
\thanks{T.J.C. partially supported by NSF grant DMS 1001156, P.D.H.
partially supported by NSF grant 1103104.}
\begin{document}


\begin{abstract}
The purpose of this paper is to
 prove some results about quantum mechanical black box scattering in even dimensions $d \geq 2$.
We study the scattering matrix and
 prove some identities which hold for its meromorphic continuation onto $\Lambda$, the Riemann surface of the logarithm function.
 We relate
 the multiplicities of the poles of the continued 
scattering matrix to the multiplicities of the poles of the resolvent.  
Moreover, 
we show that the poles of the scattering matrix on the $m$th sheet
of $\Lambda$ are related to the zeros of a scalar function
defined on the physical sheet.
This paper contains a number of results about ``pure imaginary'' resonances.  As an example,
in contrast with the odd-dimensional case, 
we show that in even dimensions
 there are no ``purely imaginary'' resonances on 
any sheet of $\Lambda$ for Schr\"odinger operators with potentials
$0 \leq V \in L_0^\infty (\R^d)$.  
\end{abstract}

\maketitle
\section{Introduction}

This paper presents several results about resonances in 
quantum mechanical black box 
Euclidean scattering in even 
dimensions.  There are several objects which naturally may be 
called resonances.  {\em Resolvent resonances} occur as poles of the meromorphic
continuation of the cut-off resolvent, while {\em scattering resonances} are
poles of the meromorphic continuation of the scattering matrix. 
In this setting both lie on $\Lambda,$ the logarithmic cover of 
$\Complex \setminus \{0\}$. We prove
an identity clarifying the 
relationship between these two.  Moreover, 
we show there is  a scalar function on the physical 
region, the zeros of which correspond to poles of the scattering matrix on the
$m$th sheet of $\Lambda$.  We show the absence of ``purely imaginary'' resonances
for certain classes of operators.  This extends results of Beale \cite{beale},
and is in sharp contrast with the odd-dimensional case.  We observe a small
correction to an oft-quoted identity of \cite{s-t} about symmetries
of the scattering matrix.  This correction is
important in the context of ``pure imaginary'' resonances in even dimensions.

Let $V\in L^{\infty}_{0}(\Real^d;\Complex)$, and let $\Delta\leq 0$ be the
Laplacian on $\Real^d$.  Set $P=-\Delta +V$, and, for $\Im \lambda >0$,
set $R(\lambda)=(P-\lambda^2)^{-1}$ be the resolvent, which is bounded as
an operator on $L^2(\Real^d)$ for all but finitely many $\lambda$ with $\Im \lambda>0$.  Let $\chi\in L^{\infty}_{0}(\Real^d)$ be one on 
the support of $V$.  It is well known that $\chi R(\lambda)\chi$ has
a meromorphic extension to $\Complex$ if $d$ is odd.  If $d$ is even, 
the extension is to $\Lambda$, the logarithmic cover of 
$\Complex \setminus \{ 0\}$.
 In the latter case, we identify the physical space, where 
$R(\lambda)$ is bounded on $L^2$ (except for finitely
many points), with the subset of $\Lambda$ defined by
$\Lambda_0\defeq\{ \lambda \in \Lambda: 0<\arg \lambda<\pi\}$.  
The poles of the meromorphic continuation of $\chi R(\lambda)\chi$ 
are called {\em (resolvent) resonances}. 
 Moreover, a similar extension and a similar definition can be made 
for many compactly supported perturbations of $-\Delta$ on $\Real^d$; see
the ``black box'' definition of Sj\"ostrand-Zworski \cite{sj-zw},
recalled here in Section \ref{s:correction}.  For example,
the class of operators for which one
can make this meromorphic continuation and
the subsequent  definition  of resonances 
includes the Dirichlet or Neumann Laplacian on  
$\Real^d \setminus \overline{\mco}$, where $\mco\subset \Real^d$ is
a bounded open set with smooth boundary.

We shall be particularly interested in the case of even $d$.
For $m\in \Integers$, we set 
$$\Lambda_m=\{ \lambda\in \Lambda: m\pi<\arg \lambda <(m+1)\pi\}.$$
Thus, with our convention, $\Lambda_0$ corresponds to the physical region in even dimensional
scattering.  We call a point $\lambda \in \Lambda$ ``pure imaginary'' if 
$\arg \lambda = \pi/2 +k\pi$ for some $k\in \Integers$.  An example of 
our results is the following.  
\begin{thm} \label{thm:fixedsign}
Let the dimension $d \geq 2$ be even.
Let $V \in L_0^\infty (\Real^d;\Real)$ be a potential with fixed sign, that is,
either $V \geq 0$ or $- V \geq 0$.
If $V \geq 0$, the nonnegative Schr\"odinger operator $H_V = - \Delta + V$
has no purely imaginary (resolvent) resonances on any sheet $\Lambda_m$, $m \in \Integers$. If $- V \geq 0$, suppose that the operator $H_V$
has $0 \leq N_V < \infty$ negative eigenvalues on $\Lambda_0$.
Then the lower-semi-bounded operator $H_V$ has at most $N_V$ purely imaginary 
(resolvent)
resonances on each sheet $\Lambda_m$, for $m \in \Integers \setminus\{0\}$.
\end{thm}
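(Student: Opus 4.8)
The plan is to recast the existence of a purely imaginary resonance as a non-invertibility (Birman--Schwinger) condition for an explicit operator tied to the physical sheet, after continuing the free resolvent to $\Lambda_m$. Write $V=\sigma V_1^2$ with $V_1=|V|^{1/2}\in L^\infty_0$ and $\sigma=\pm1$ the common sign of $V$. On $\Lambda_m$ the purely imaginary points have $\arg\lambda=\pi/2+m\pi$, so $\lambda^2=-r^2$ with $r>0$. The even dimension enters through the monodromy of the Hankel kernel $H^{(1)}_{d/2-1}$: continuing the argument $m$ half-turns and specializing to $\lambda=e^{im\pi}(ir)$, one finds per angular mode $\mu=\ell+\tfrac{d}{2}-1\in\Integers_{\ge0}$ that the outgoing radial solution is proportional to $K_\mu(r\rho)+i\gamma_m I_\mu(r\rho)$ with $\gamma_m=-(-1)^\mu\pi m\in\Real$. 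First I would package this as a decomposition of the continued free resolvent,
\[
  R_0(\lambda)\big|_{\Lambda_m}=(-\Delta+r^2)^{-1}+i\,m\,\beta\,\Theta_r,
\]
where $(-\Delta+r^2)^{-1}\ge0$ is the physical resolvent at energy $-r^2$, $\beta\in\Real\setminus\{0\}$, and $\Theta_r$ is the convolution with the real even kernel $c\int_{\Sphere^{d-1}}e^{-r\omega\cdot(x-y)}\,d\omega$, so that $\Theta_r=\Theta_r^*$.

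The first main step is the reduction. By the Birman--Schwinger principle together with the multiplicity identity established above, $-r^2$ is the energy of a resonance on $\Lambda_m$ of multiplicity $k$ exactly when $\dim\ker\bigl(I+\sigma V_1 R_0(\lambda)|_{\Lambda_m}V_1\bigr)=k$. Setting $A\defeq V_1(-\Delta+r^2)^{-1}V_1\ge0$ and $C\defeq V_1\Theta_r V_1=C^*$, a resonant mode is a nonzero $f$ with
\[
  \bigl(I+\sigma A+i\,\sigma m\beta\,C\bigr)f=0 .
\]
Pairing with $f$ and using that $\langle Af,f\rangle,\langle Cf,f\rangle\in\Real$ (both operators being self-adjoint), the imaginary part gives $\sigma m\beta\langle Cf,f\rangle=0$, hence $\langle Cf,f\rangle=0$ for every $m\ne0$, while the real part yields $\|f\|^2+\sigma\langle Af,f\rangle=0$. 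I emphasize that only self-adjointness of $C$ is needed here; the correction $\Theta_r$ is genuinely indefinite at negative energy, so no positivity of $C$ is available or required.

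For $V\ge0$ ($\sigma=+1$) the real part reads $\|f\|^2+\langle Af,f\rangle=0$ with $A\ge0$, forcing $f=0$; thus there are no purely imaginary resonances on any $\Lambda_m$, including $m=0$, where the correction is absent and the same computation applies. For $-V\ge0$ ($\sigma=-1$) the real part becomes $\langle Af,f\rangle=\|f\|^2$. Since $A=V_1(-\Delta+r^2)^{-1}V_1$ is exactly the Birman--Schwinger operator for $H_V=-\Delta-V_1^2$ at energy $-r^2$, the dimension of $\mathbf{1}_{[1,\infty)}(A)$ equals $\dim\mathbf{1}_{(-\infty,-r^2]}(H_V)\le N_V$. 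The concluding step is to show that, restricted to resonant modes (those additionally satisfying $\langle Cf,f\rangle=0$ and the coupled equation), the map $f\mapsto\mathbf{1}_{[1,\infty)}(A)f$ is injective, so that the monotonicity of $r\mapsto\dim\mathbf{1}_{[1,\infty)}(A(r))$ bounds the total number of such resonances, counted with multiplicity, on each sheet by $N_V$.

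The hard part will be this last count. Establishing the decomposition display and the reduction is a direct, if delicate, computation with the Hankel monodromy and the generalized Birman--Schwinger principle, and it disposes of the $V\ge0$ case at once. The $V\le0$ bound is more subtle precisely because $\langle Af,f\rangle=\|f\|^2$ only places $f$ in the nonnegative part of $A$ rather than in $\ker(I-A)$; closing the gap requires combining the coupled equation $(A-I)f=i\,m\beta\,Cf$ with the constraint $\langle Cf,f\rangle=0$ to control the component of $f$ in $\mathbf{1}_{[0,1)}(A)$, and then invoking the multiplicity identity and the monotonicity of the Birman--Schwinger function to convert non-invertibility on $\Lambda_m$ into at most $N_V$ crossings. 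I expect the injectivity to follow from unique continuation for $H_V+r^2$ applied to the would-be resonant state, tying the null space of the reduced operator back to genuine $L^2$ eigenfunctions.
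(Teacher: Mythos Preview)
Your setup is exactly the paper's: write $R_0(e^{im\pi}\lambda)=R_0(\lambda)+imT(\lambda)$ with $|V|^{1/2}T(i\sigma)|V|^{1/2}$ self-adjoint in even dimension (your $\Theta_r$), reduce via the symmetrized Birman--Schwinger operator $I+(\sgn V)\,|V|^{1/2}R_0|V|^{1/2}$, and at $\lambda=ie^{im\pi}\sigma$ split into the self-adjoint part $I+(\sgn V)A$ with $A=|V|^{1/2}(-\Delta+\sigma^2)^{-1}|V|^{1/2}\ge0$ and the skew-adjoint part $im(\sgn V)B$. For $V\ge0$ your argument and the paper's coincide and are complete: $I+A>0$ plus a skew-adjoint perturbation has trivial kernel.

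For $V\le0$ the paper is much shorter than what you propose. It does not pass through spectral projections $\mathbf{1}_{[1,\infty)}(A)$, monotonicity of the Birman--Schwinger eigenvalues in $r$, or unique continuation. It simply records that $I-A$ has a kernel iff $-\sigma^2$ is an eigenvalue of $H_V$, and then asserts that because $imB$ is skew-adjoint, $I-A+imB$ can have a zero only when $-\sigma^2$ is already an eigenvalue; the bound by $N_V$ then follows at once by counting eigenvalues. Your caution---that $\langle Af,f\rangle=\|f\|^2$ by itself yields only $\|A\|\ge1$ rather than $1\in\sigma_p(A)$---is legitimate, and the paper does not spell out why the specific structure here excludes a zero of $I-A+imB$ with $I-A$ invertible. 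But your proposed remedy (injectivity of $f\mapsto\mathbf{1}_{[1,\infty)}(A)f$ on resonant modes together with unique continuation) is both more elaborate than anything the paper uses and, as you yourself say, not yet carried through; in particular it is unclear how that injectivity combined with monotonicity in $r$ would give the sheet-by-sheet bound $\le N_V$ rather than merely $N_V\ge1$.
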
 
This theorem is proved in Section \ref{s:fixedsign}.
We note that this is in sharp contrast with the odd-dimensional case.
In odd dimensions, Lax and Phillips \cite{l-p} proved
lower bounds on the number of purely imaginary resonances for
Dirichlet or Neumann obstacle scattering.  They noted
that their technique  applies to 
Schr\"odinger operators with strictly nonnegative compactly
supported potentials.  See also \cite{m-s} for a related result.
 This was extended by A.\ Vasy to compactly-supported, bounded potentials of fixed sign \cite{vasy}.  To be more precise, 
let $V\in L^{\infty}_0(\Real^d)$ be a fixed 
sign potential so that there is an $\epsilon>0$,
and some nontrivial ball $B\subset \Real^d$ so that $|V|\geq \epsilon 
\chi_B$, where 
$\chi_B$ is the characteristic function of the ball $B$.  Then \cite{l-p,vasy}
showed that
in {\it odd dimensions} $d \geq 3$, the Schr\"odinger operator 
$H_V = - \Delta + V$
has an infinite number of purely imaginary resonances on the nonphysical sheet.
In fact, they proved a qualitative lower bound.  The number of such 
poles $N_{\operatorname{im}} (r)$ with norm at most $r$ satisfies, for large $r$:
\begin{equation}\label{eq:lower-bd1}
N_{\operatorname{im}} (r) \geq c_V r^{d-1} 
\end{equation}
for a positive constant $c_V$.  Here and always the resonances are counted with
multiplicity.

The results of Lax and Phillips \cite{l-p} for 
obstacle scattering were extended to 
 certain Robin-type boundary conditions by Beale \cite{beale}.  
Beale also noted that for even-dimensional scattering there are no 
purely imaginary resonances on $\Lambda_{-1}$ (and hence on $\Lambda_1$)
for Dirichlet or Neumann obstacle scattering, and at most finitely many
for certain Robin-type boundary conditions. 
See Corollary \ref{c:obstaclepureimag} for a more precise statement in the 
even-dimensional case
and for our extension of these results.  We prove additional results on the absence of purely
imaginary resonances in Section \ref{s:poles}.

We begin this paper with 
Proposition \ref{p:correction},
 a somewhat subtle correction to an identity from \cite{s-t}.  
We include this because we are unaware of a reference in which the correct version is explicitly
stated and because the subtle distinction has important consequences for the existence or not
of purely imaginary resonances in even dimensions.  
In fact the correct version has been implicitly mentioned
in \cite{l-p, beale} in the context of purely imaginary resonances. 
Although most of this paper is about even-dimensional Euclidean
scattering, Proposition \ref{p:correction} is a result for  compactly
supported black
box perturbations of the Laplacian on any $\Real^d$ as long
as $d\geq 2$.

Another result of our note is Proposition \ref{p:togettophys}.  
It follows from
this proposition  that
in even dimensions
to study the poles of the scattering matrix on $\Lambda_m$, it suffices to study the zeros
of a function which is holomorphic on $\Lambda_0$ with the possible exception of at most finitely many poles there.
This function is related in an explicit
 way to the scattering matrix.  This is familiar
in the odd dimensional case, where the poles of the scattering matrix in the nonphysical half plane
are, with perhaps finitely many exceptions, 
determined by the zeros of the determinant of the scattering matrix
in the physical half plane.  

Theorem \ref{thm:prpsm} and its Corollary 
\ref{c:rpsp}  give a relationship between the poles of the resolvent
(``resolvent resonances'')  and
the poles of the scattering matrix (``scattering resonances'').  
Again, this relation is well known in 
the odd dimensional case and is known for a very limited subset of $\Lambda$
in the even dimensional case-- see Section \ref{s:rpsp} for references.  To
the best of our knowledge there is not a proof of this result
 in the literature which is valid
for all points in $\Lambda$.

There are a number of results on the distribution of resonances which are
not intimately tied to the parity of the dimension.  
At least some of these rely on complex scaling, and as a consequence 
can only say something about resonances ``near'' the physical half plane.  
``Near'' generally meaning in some sector, of opening no greater than 
$\pi/2$.  We make no attempt to survey such results, but merely mention as an example  results on 
the distribution of resonances ``near'' the physical plane for the Laplacian
on the exterior of a strictly convex obstacle, \cite{sj-zwco}.  

For questions about distribution of resonances further from the 
physical half-plane, the case of even dimensions has received far less
attention than the case of odd dimensions.  Exceptions include
\cite{intissar, vodeveven, vodev2} in which upper bounds on resonance-counting
functions are obtained, and \cite{chen, ch-hi2,SaB1,TSaB, Tang1} 
for lower bounds.  Two papers which focus on resonances on the sheets $\Lambda_{\pm 1}$ are \cite{beale} and
\cite{Zworski2}.
Results of Beale \cite{beale} for purely imaginary resonances 
are recalled in Section \ref{s:poles}.  The paper \cite{Zworski2}
proves a Poisson formula in even dimensions.

\section{The black box formalism and relations for the scattering matrix}
\label{s:correction}

In this section, we allow $d\geq 2$ to be either even or odd.  Here we assume
 $P$ is a compactly supported ``black box'' perturbation of the Laplacian on $\Real^d$
satisfying the conditions of \cite{sj-zw}, including that $P$ is 
 self-adjoint.  We remark that some of the results of this 
paper use the self-adjointness of $P$ in an essential way.

   We recall the black box assumptions below for the convenience of the reader.
Note that if $\mco \subset \Real^d$ 
is a bounded open set with smooth boundary $\partial \mco$, 
and $V\in L^{\infty}_{0}(\Real^d \setminus \mco;\Real)$, 
these hypothesis are satisfied by the operator 
$-\Delta +V$ on $\Real^d\setminus\overline{ \mco}$ with
Dirichlet or Neumann boundary conditions on $\partial \mco$.

In recalling the assumptions of \cite{sj-zw} we use similar notation.
  By a black box operator we 
mean an operator $P$ defined on a domain ${\mathcal D}\subset \mch$ satisfying the conditions below.   
Let $U\subset \Real^d$ be a bounded connected open set.
Let $\mch$ be a complex Hilbert space with orthogonal decomposition
$$\mch=\mch_{U}\oplus L^2(\Real^d \setminus U).$$
Following \cite{sj-zw}, we denote the corresponding orthogonal
projections by $u\mapsto u_{\restrict U} $ and
$u \mapsto u_{ \restrict \Real^d  \setminus U} .$  We assume that the 
operator $P:\mch\rightarrow \mch$ is semibounded below, self-adjoint with
domain  ${\mathcal D}\subset \mch$.  Furthermore, if  $u\in H^2(
\Real^d \setminus U)$ and 
$u$ vanishes near $U$,  then $u\in {\mathcal D}$; and conversely
${\mathcal D}_{| \Real^d \setminus U}\subset H^2(\Real^d \setminus
U)$.
The operator $P$ is  $-\Delta $ outside $U$: 
$$Pu_{|  \Real^d \setminus U} = -\Delta
u_{| \Real^d  \setminus U}
\; \mbox{for all }u\in { \mathcal D}$$
and $${\bf 1}_{U}(P+i)^{-1} \; \mbox {is compact}$$
where ${\bf 1}_{U}$ is the characteristic
function of $U$; that is, projection onto $\mch_U$.

Let $\chi\in L^{\infty}_{0}(\Real^d)$.  Under these conditions on $P$, the cut off resolvent
$\chi (P-\lambda^2)^{-1}\chi$  defined on the 
physical sheet $0<\arg \lambda <\pi$ has a meromorphic continuation
to $\Complex$ if $d$ is odd and to $\Lambda$ if $d$ is even.  
While this is well-known for specific 
operators, in this generality we refer the reader
to, for example, \cite[Theorem 1.1]{sj-zw} or the 
proof of Proposition 4.1 of \cite{pe-zwsc}.

For future reference, we note that we shall use the notation
$$ \langle x \rangle ^s \mch  \defeq \mch_U \oplus \langle x \rangle ^s L^2(\Real^d \setminus U)$$
and similarly for $\langle x \rangle ^s {\mathcal D}\subset 
\langle x \rangle ^s\mch.$  Here $\langle x \rangle = (1+|x|^2)^{1/2}$.

We work with the scattering matrix $S(\lambda)$ associated to $P$; 
one explicit expression
for it is recalled in (\ref{eq:scatmatrix}) and another in 
Proposition \ref{p:pe-zw}.  On the positive 
real axis  $\{\arg \lambda =0\}$ 
 it is unitary, and it differs from the identity by a trace class operator.  Moreover, it 
 has a meromorphic continuation to the complex plane (for $d$ odd) or
to $\Lambda$ (for $d$ even), as follows from the meromorphic continuation of the cut-off resolvent
and the expression for the scattering matrix recalled in 
Proposition \ref{p:pe-zw} (see also \cite{s-t, l-peven}).

Following \cite{s-t} we write 
$$\overline{\lambda}=|\lambda|\exp(-i \arg \lambda)\; \text{for $\lambda \in \Lambda$}.$$
For $\lambda\in \Lambda$ the complex involution $\lambda \mapsto \overline{\lambda}$ takes 
$\Lambda_m$ to $\Lambda_{-m-1}$.  The unitarity of the scattering matrix for $\arg \lambda =0$ 
means that 
$$S^*(\overline{\lambda}) S(\lambda)=I$$
in any dimension.

In \cite[Theorem 1]{s-t}, the following identity is stated for the scattering matrix 
$S(\lambda)$ for a combination of an obstacle and potential perturbation of the 
Laplacian on $\Real^d$:
\begin{equation}\label{eq:incorrect}
S(\overline{\lambda})^*= \left\{ \begin{array}{ll}
S(-\lambda) & \text{if $d$ is odd}\\
2I-S(e^{i\pi} \lambda) & \text{if $d$ is even}.
\end{array}
\right.
\end{equation}

These relations have been widely repeated by others, including the 
present authors.  However, it appears that there is a slight error 
in the identity.  In most or all of the 
cases where  (\ref{eq:incorrect}) rather than
(\ref{eq:correct}) has been stated, the difference between the two 
versions is unimportant to the subsequent discussion.  
We shall see in Section \ref{s:poles} that the difference is important
 to results on pure imaginary resonances, and 
that is why we include Proposition \ref{p:correction} here.  

Define $\mcr:C^{\infty}(\Sphere^{d-1})\rightarrow C^\infty (\Sphere^{d-1})$ by 
$(\mcr f)(\theta)
= f(-\theta)$.  We shall use the same notation for the continuous extension of 
$\mcr$ to $L^2(\Sphere^{d-1})$.

\begin{prop}\label{p:correction} For $P$ satisfying the black box conditions, 
the scattering matrix
$S(\lambda)$ satisfies
\begin{equation}\label{eq:correct}
S(\overline{\lambda})^*= \left\{ \begin{array}{ll}
\mcr S(-\lambda) \mcr  & \text{if $d$ is odd}\\
2I-\mcr S(e^{i\pi} \lambda)\mcr  & \text{if $d$ is even}.
\end{array}
\right.
\end{equation}
\end{prop}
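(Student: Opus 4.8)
The plan is to work directly from the explicit representation of $S(\lambda)$ recalled in Proposition~\ref{p:pe-zw} (equivalently~(\ref{eq:scatmatrix})), which writes the scattering matrix schematically as
\[
S(\lambda)=I+c_d\,\lambda^{d-2}\,\mathcal{E}(\lambda)\,T(\lambda)\,\mathcal{E}^{\#}(\lambda),
\]
where $\mathcal{E}(\lambda)$ is the plane-wave (Poisson) trace operator between $L^2(\Sphere^{d-1})$ and weighted $L^2(\Real^d)$ with Schwartz kernel built from $e^{\pm i\lambda x\cdot\theta}$, $T(\lambda)$ is the ``$T$-matrix'' built from the meromorphically continued resolvent $R(\lambda)$, and $c_d$ is an explicit constant. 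I would then apply the two operations appearing in~(\ref{eq:correct})---namely $\lambda\mapsto\overline{\lambda}$ followed by the Hilbert-space adjoint on the one hand, and conjugation by $\mcr$ combined with $\lambda\mapsto-\lambda$ (for $d$ odd) or $\lambda\mapsto e^{i\pi}\lambda$ (for $d$ even) on the other---and reduce both sides to identities about the three building blocks $\mathcal{E}$, $T$, and the scalar prefactor.

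The three building-block identities are where the content sits. First, the antipodal map: since $\mcr$ acts by $\theta\mapsto-\theta$ and the kernel of $\mathcal{E}(\lambda)$ depends on $\lambda$ and $\theta$ only through $\lambda\,x\cdot\theta$, conjugating by $\mcr$ is the same as flipping the sign of the spectral parameter, $\mcr\,\mathcal{E}(\lambda)=\mathcal{E}(-\lambda)$ together with its adjoint analogue. This is exactly the reflection that must be carried along, and is the origin of the $\mcr$'s absent from~(\ref{eq:incorrect}). Second, the bar-adjoint identity: the Hilbert-space adjoint of $\mathcal{E}(\overline{\lambda})$ is the conjugate plane-wave operator, and the self-adjointness of $P$ furnishes, for the continued resolvent on $\Lambda$, the relation $R(\overline{\lambda})^{*}=R(\lambda)$---recalling that the involution $\lambda\mapsto\overline{\lambda}$ carries $\Lambda_m$ to $\Lambda_{-m-1}$, so this is genuinely a statement across sheets and not the naive adjoint on a single sheet. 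Third, the scalar: $\overline{(\overline{\lambda})^{d-2}}=\lambda^{d-2}$, together with the sign of $c_d$ under complex conjugation, supplies the remaining factor $(-1)^{d-2}$.

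Assembling these, the only residual discrepancy is between $T$ evaluated at the reflected parameter and $T$ evaluated at $\lambda$, i.e.\ between $R$ at adjacent positions on $\Lambda$. These differ by the jump of the continued resolvent across the branch point at the origin, which is itself a rank-type operator expressible through the very same plane-wave operators (Stone's formula / the spectral measure on the energy shell); feeding this back in is what lets the correction reassemble into an $S$-matrix at the shifted parameter. For $d$ odd the continuation is single-valued on $\Complex$ and the factor $(-\lambda)^{d-2}=-\lambda^{d-2}$ closes the computation cleanly into $\mcr\,S(-\lambda)\,\mcr$. For $d$ even the free resolvent carries integer-order Hankel functions, whose monodromy under $z\mapsto e^{i\pi}z$ splits off twice the regular (entire) Bessel part; that entire part is the free, ``identity'' contribution, and its doubling is precisely what produces the additive $2I$, while the remaining outgoing piece furnishes $-\mcr\,S(e^{i\pi}\lambda)\,\mcr$.

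I expect the main obstacle to be the even-dimensional monodromy bookkeeping: relating the continued resolvent---and hence $S$---on adjacent sheets, getting the sheet shift of the involution $\lambda\mapsto\overline{\lambda}$ exactly right, and tracking the constants through the Hankel-function continuation so that the correction assembles into exactly $2I$ rather than some other multiple of the identity. A secondary but essential point, and the whole reason for stating Proposition~\ref{p:correction}, is to keep each $\mcr$ in its correct position throughout; it is easy to drop it, as in~(\ref{eq:incorrect}). As consistency checks I would verify the resulting identity against the continued unitarity relation $S(\overline{\lambda})^{*}S(\lambda)=I$ and against the single-valued, hence more transparent, odd-dimensional case in order to pin down all signs.
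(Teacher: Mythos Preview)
Your approach via the Petkov--Zworski operator representation is a legitimate alternative, but there is a concrete error that sends you down an unnecessary detour. You claim $R(\overline{\lambda})^{*}=R(\lambda)$; in fact the correct relation on $\Lambda$ is $R(\overline{\lambda})^{*}=R(e^{i\pi}\lambda)$. (On $\Lambda_0$ one has $\bigl((P-\lambda^2)^{-1}\bigr)^{*}=(P-\overline{\lambda^2})^{-1}$, and the point of $\Lambda_0$ whose square is $\overline{\lambda^2}$ is $e^{i\pi}\overline{\lambda}$, not $\overline{\lambda}$; then continue analytically and substitute $\lambda\mapsto\overline{\lambda}$.) With this corrected, the resolvent appearing in $A(\overline{\lambda})^*$ is already $R(e^{i\pi}\lambda)$---the \emph{same} resolvent sitting inside $\mcr A(e^{i\pi}\lambda)\mcr$. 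So there is no ``residual discrepancy between $R$ at adjacent positions on $\Lambda$,'' no resolvent jump to feed back in, and no Hankel monodromy to chase. Once you track the plane-wave operators (where your observations $\mcr\,{\mathbb E}_\pm(\lambda)={\mathbb E}_\mp(\lambda)$ and ${\mathbb E}_\pm(e^{i\pi}\lambda)={\mathbb E}_\mp(\lambda)$ are correct and do the real work of placing the $\mcr$'s), the \emph{only} difference between the two sides lives in the scalar prefactor, whose ratio is $e^{-i\pi(d-1)}$: equal to $+1$ for odd $d$ and $-1$ for even $d$. The $2I$ then drops out of the trivial rewriting $I+A(\overline{\lambda})^*=2I-\bigl(I+\mcr A(e^{i\pi}\lambda)\mcr\bigr)$ once $A(\overline{\lambda})^*=-\mcr A(e^{i\pi}\lambda)\mcr$. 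Your anticipated ``main obstacle'' thus evaporates.

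The paper proceeds differently and more briefly: it works at the level of the scattering amplitudes $s_\pm(\theta,\lambda,\omega)$ through the kernel formula (\ref{eq:scatmatrix}), invoking the Shenk--Thoe identities (\ref{eq:stright1})--(\ref{eq:stright3}). One simply writes down the kernels of $S^*(\overline{\lambda})-I$ and of $S(e^{i\pi}\lambda)-I$, uses (\ref{eq:stright1})--(\ref{eq:stright3}) to reduce both to the same $s_-$ factor, and compares the leftover phases. The role your $\mcr$-identity for ${\mathbb E}_\pm$ plays is taken over by the corrected reciprocity $s_+(\theta,\lambda,\omega)=s_+(-\omega,\lambda,-\theta)$ of Lemma~\ref{l:correction}; the parity dichotomy again comes purely from the scalar phase, with no resolvent continuation analysis required.
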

There are certainly many instances in the literature which 
are consistent with (\ref{eq:correct}) rather than (\ref{eq:incorrect}).  These include, for example,
\cite{yafaev} and the works \cite{l-p} and \cite{beale}, 
both  of the latter related to work in this paper, see 
Corollary \ref{c:obstaclepureimag}.
However, since we are unaware of an explicit reference for (\ref{eq:correct}) and because the
distinction between the two versions is important for our results, we include Proposition
\ref{p:correction} and a proof here.  

We note that these equalities show that for the operators $P$ we consider, in even dimensions $d$, 
if $\lambda_0\in \Lambda$ is
a pole of $S(\lambda)$ with 
 $m\pi <\arg \lambda_0 < (m+1)\pi$, then $\overline{\lambda_0}\; e^{i\pi} $ is a pole of $S(\lambda)$,
 and $-m\pi<\arg(  \overline{\lambda_0}\; e^{i\pi}) < (-m+1)\pi$.  Thus, poles of $P$ on $\Lambda_m$ are
 symmetric with poles on $\Lambda_{-m}$. This replaces the symmetry relation for the case of odd-dimensional $d$, which is more familiar: for odd $d$, $\lambda_0\in \Complex$ is a pole of the scattering matrix
 if and only if $-\overline{\lambda}_0$ is a pole of the scattering matrix.



In order to prove the proposition, we recall some background.
Let $g\in \mch$ satisfy $g_{\restrict |x|>R}=0$ for some finite $R$.  Then for 
$\lambda\in \Real\setminus \{0\} $
there are  unique
$u_{\pm}\in \langle x \rangle ^{1/2+\epsilon} {\mathcal D} $ 
satisfying
\begin{align}
(P-\lambda^2) u_\pm & =g\; \\
\label{eq:radcond}
 u_\pm(x) & = 
e^{\pm i \lambda |x| }|x|^{-(d-1)/2}(\alpha_\pm (x/|x|) +o(1)) \; \text{as $|x|\rightarrow \infty$}
\end{align} 
 for some functions $\alpha_\pm \in C^{\infty}(\Sphere^{d-1})$.

Let $\omega \in \Sphere^{d-1}$ and let
$\psi \in C^{\infty}(\Real^d)$ be $0$ on $U$ and satisfy $1-\psi \in C_c^{\infty}(\Real^d)$.  
Applying the results recalled above, we see that there are unique $v_{\pm }(x,k,\omega) \in 
\langle x \rangle ^{1/2+\epsilon}{\mathcal D}$ and 
thus unique scattering amplitudes $s_{\pm}(\theta,\lambda,\omega)$ which satisfy  (Cf. 
\cite[Section 2]{s-t}; we use similar notation here.)
\begin{align}\label{eq:vpm}
(P-\lambda^2) \left[v_\pm(x,\lambda,\omega)+\psi(x) 
e^{i\lambda x \cdot \omega}\right]& =0  
\end{align}
with
\begin{equation}
v_\pm(r\theta, \lambda,\omega)= 
e^{\pm i \lambda r}r^{(1-d)/2}\left[ s_{\pm}(\theta, \lambda, \omega)
+o(1)\right] \; \text{as $r\rightarrow \infty$}.
\end{equation}

Now \cite[Lemma 2.1]{s-t} states
\begin{align}\label{eq:stright1}
\overline{s}_+(\theta,\lambda,\omega) & =
 s_-(\theta, \overline{\lambda}, -\omega)\\
 s_+ (\theta, e^{i\pi} \lambda, \omega) & 
= s_-(\theta, \lambda, -\omega)\label{eq:stright2}\\
s_- (\theta, \lambda, \omega)& = s_-(\omega, \lambda, \theta). 
\label{eq:stright3}
\end{align}
 The
reader should note that we mean by 
 the notation $\overline{s}$  the usual complex
conjugate on $\Complex$.  While this is possibly confusing, an alternate
notation with $^*$, risks being confused with the adjoint of an operator.

Strictly speaking, \cite{s-t} proved 
(\ref{eq:stright1})  -(\ref{eq:stright3}) only for $P$
which are  $-\Delta+V$ in the exterior of a smooth, compact obstacle with 
Dirichlet or Neumann boundary conditions.  However, it is 
not difficult to see that their proof extends to self-adjoint $P$ satisfying the black box type
conditions we consider here.

The following lemma is analogous to \cite[2.7]{s-t}.  It seems that it 
is this identity in which \cite{s-t} made an error.
\begin{lemma}\label{l:correction}
With the notation as above,
$s_+(\theta, \lambda, \omega)= s_+(-\omega, \lambda, -\theta)$.
\end{lemma}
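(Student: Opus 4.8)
The plan is to prove the lemma purely by algebraic manipulation of the three identities (\ref{eq:stright1})--(\ref{eq:stright3}), with no further analysis of the scattering amplitudes needed. The key observation is that (\ref{eq:stright2}) lets us trade $s_+$ for $s_-$ at the cost of advancing the spectral parameter by $e^{i\pi}$ and negating the incoming direction, while (\ref{eq:stright3}) is a reciprocity relation for $s_-$ alone. Chaining these together should turn a reciprocity-type relation for $s_-$ into one for $s_+$, but with the characteristic sign flips on both angular variables that distinguish the correct Lemma \ref{l:correction} from the naive statement.

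Concretely, I would start from (\ref{eq:stright2}), which reads $s_+(\theta, e^{i\pi}\lambda, \omega) = s_-(\theta, \lambda, -\omega)$. Next I apply the $s_-$ reciprocity (\ref{eq:stright3}) to the right-hand side, swapping the first and third angular arguments, to obtain $s_-(\theta, \lambda, -\omega) = s_-(-\omega, \lambda, \theta)$. Then I apply (\ref{eq:stright2}) a second time, but read in the opposite direction: matching $s_-(-\omega, \lambda, \theta)$ to the template $s_-(\theta', \lambda, -\omega')$ forces $\theta' = -\omega$ and $\omega' = -\theta$, so that $s_-(-\omega, \lambda, \theta) = s_+(-\omega, e^{i\pi}\lambda, -\theta)$. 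Combining the three steps yields
\begin{equation*}
s_+(\theta, e^{i\pi}\lambda, \omega) = s_+(-\omega, e^{i\pi}\lambda, -\theta).
\end{equation*}
Since $\lambda$ ranges over the relevant part of $\Lambda$ and the relation $e^{i\pi}\lambda$ is a bijection of $\Lambda$, I would finish by relabeling $e^{i\pi}\lambda \mapsto \lambda$, which gives exactly $s_+(\theta,\lambda,\omega) = s_+(-\omega,\lambda,-\theta)$.

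I do not expect a genuine obstacle in this argument, since it is entirely formal; the real point is bookkeeping. The main care needed is in the second application of (\ref{eq:stright2}), where one must correctly solve $-\omega' = \theta$ for $\omega'$ rather than carelessly identifying the directions. I suspect this is precisely where \cite{s-t} slipped: dropping the angular sign flips reduces the chain to the clean but \emph{incorrect} reciprocity $s_+(\theta,\lambda,\omega) = s_+(\omega,\lambda,\theta)$, which is the source of the discrepancy between (\ref{eq:incorrect}) and (\ref{eq:correct}). For completeness I would also verify that the substitution $e^{i\pi}\lambda \mapsto \lambda$ is legitimate on the full surface $\Lambda$ (not merely on the real axis), so that the identity is established everywhere the amplitudes are defined by meromorphic continuation.
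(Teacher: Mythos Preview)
Your argument is correct and follows essentially the same three--step pattern as the paper's proof: convert $s_+$ to $s_-$, apply the reciprocity (\ref{eq:stright3}) for $s_-$, then convert back. The only difference is that you use (\ref{eq:stright2}) for the conversion (at the cost of a spectral shift $\lambda \mapsto e^{i\pi}\lambda$ that you undo at the end), whereas the paper uses (\ref{eq:stright1}) (at the cost of a complex conjugation that cancels on the return trip, so no final relabeling is needed). Both routes are equally valid and equally short; the paper's choice has the minor cosmetic advantage of avoiding the change of variable on $\Lambda$.
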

 \begin{proof}
We have
$$s_+(\theta,\lambda, \omega)=
 \overline{s}_-(\theta, \overline{\lambda},-\omega)
= \overline{s}_-(-\omega, \overline{\lambda},\theta)=
s_+(-\omega,\lambda,-\theta)$$
where we have used respectively
(\ref{eq:stright1}), (\ref{eq:stright3}), and (\ref{eq:stright1}).
\end{proof} 

In \cite[(2.7)]{s-t} it is claimed that 
$s_+(\theta, \lambda, \omega)= 
s_+(-\theta, \lambda, -\omega)$.  This is in general not true, as we
shall see.  Here we denote by $\tilde{\mcr}$ the operator which on $
L^2(\Real^d)$ is given by $(\tilde{\mcr}f)(x) = f(-x)$.  We shall
also denote by $\tilde{\mcr}\mco$ the set $\{x\in \Real^d : -x\in \mco\}$,
with a similar definition of $\tilde{\mcr}\partial \mco.$
\begin{lemma}\label{l:reflection}
 Let $\mco \subset \Real^d$ be an open bounded set  
 so that $\mco $ has smooth boundary $\partial \mco$, and 
let $V\in L^{\infty}_{0} (\Real^d;\Real)$.  Let $P$ denote $-\Delta +V$ on $\Real^d \setminus 
\overline{\mco}$
with Dirichlet (Neumann) boundary conditions.  Let $P \tilde{\mcr}$ denote
$-\Delta + V(-x) $ on $\Real^d \setminus (\tilde{\mcr}\overline{\mco})$ with
 Dirichlet (respectively, Neumann) boundary
condition.
Then $$s_{+,P\tilde{\mcr}}(\theta, \lambda,\omega)= s_{+,P}(-\theta, \lambda, -\omega)$$
where $s_{+,P\tilde{\mcr}}$ and $s_{+,P}$ denote the functions $s_+$ corresponding to $P\tilde{\mcr}$ 
and $P$,
respectively.
\end{lemma}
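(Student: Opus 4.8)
The plan is to exploit the rotational/reflection symmetry of the operator directly. The key observation is that the scattering amplitude $s_{+,P}(\theta,\lambda,\omega)$ is defined through the solution $v_+$ of the boundary value problem $(P-\lambda^2)[v_+ + \psi e^{i\lambda x\cdot\omega}]=0$ with the outgoing radiation condition encoding the amplitude at direction $\theta$. I want to set up an explicit bijection between solutions associated to $P$ and solutions associated to $P\tilde{\mcr}$ via the reflection $x\mapsto -x$, and then read off how the scattering amplitude transforms.

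\medskip

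\emph{First I would} record how the defining data transform under the pullback by $\tilde{\mcr}$. If $w(x)$ solves the equation for $P\tilde{\mcr}$ on $\Real^d\setminus(\tilde{\mcr}\overline{\mco})$, then I claim $w(-x)$ solves the equation for $P$ on $\Real^d\setminus\overline{\mco}$. Indeed, $-\Delta$ is invariant under $x\mapsto -x$; the potential $V(-x)$ of $P\tilde{\mcr}$ becomes $V(x)$ after the substitution; and the obstacle $\tilde{\mcr}\overline{\mco}$ maps to $\overline{\mco}$, so Dirichlet (respectively Neumann) conditions are preserved because the reflection is an isometry carrying $\partial(\tilde{\mcr}\mco)$ to $\partial\mco$. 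Thus the map $w(x)\mapsto w(-x)$ intertwines the two operators.

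\medskip

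\emph{Next I would} track the incident and outgoing waves through this substitution. The incident plane wave for $P\tilde{\mcr}$ with direction $\omega$ is $e^{i\lambda x\cdot\omega}$; after $x\mapsto -x$ this becomes $e^{-i\lambda x\cdot\omega}=e^{i\lambda x\cdot(-\omega)}$, so it is the incident wave for $P$ with direction $-\omega$. Similarly, the outgoing term $e^{i\lambda r}r^{(1-d)/2}s_{+,P\tilde{\mcr}}(\theta,\lambda,\omega)$ evaluated at $r\theta$ becomes, after reflection, the outgoing data at $r(-\theta)$. By the uniqueness of $v_+$ and of the scattering amplitude $s_+$ recalled above (from the solvability statement for \eqref{eq:vpm}), the reflected solution must coincide with $v_{+,P}(\cdot,\lambda,-\omega)+\psi e^{i\lambda x\cdot(-\omega)}$, and matching the asymptotics in direction $\theta$ with the reflected asymptotics identifies $s_{+,P\tilde{\mcr}}(\theta,\lambda,\omega)=s_{+,P}(-\theta,\lambda,-\omega)$.

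\medskip

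\emph{The main obstacle} I expect is the careful bookkeeping at the two symmetry points: making sure the $\psi$ cutoff (which is $0$ on $U$ and $1-\psi$ compactly supported) can be chosen symmetric, or more precisely that the particular choice of $\psi$ does not affect $s_+$, so that the reflected cutoff $\psi(-x)$ is an admissible cutoff for $P$; and verifying that the radiation condition survives the reflection so that the $+$ index (outgoing) is preserved rather than swapped to $-$. The sign in $e^{+i\lambda r}$ is unchanged under $x\mapsto -x$ since $r=|x|$ is invariant, which is exactly why the $+$ amplitude maps to a $+$ amplitude and the identity holds with matching signs. Once these points are checked, the identity follows from the uniqueness of the scattering solution together with the intertwining established in the first step.
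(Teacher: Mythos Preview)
Your proposal is correct and follows essentially the same approach as the paper: the paper takes $v_{+,P}(-x,\lambda,-\omega)$, verifies directly that it satisfies the $P\tilde{\mcr}$ equation with incident direction $\omega$ together with the outgoing radiation condition, and invokes uniqueness to identify it with $v_{+,P\tilde{\mcr}}(x,\lambda,\omega)$. Your observation that $r=|x|$ is invariant under reflection, so the $+$ radiation condition is preserved, is exactly the point the paper uses, and your remark about the cutoff $\psi$ is a valid but inessential detail (the scattering amplitude is independent of the admissible choice of $\psi$).
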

We understand in the statement of this lemma that we may take $\mco = \emptyset$, in which case there is 
no boundary condition.
\begin{proof} We use notation similar to that of (\ref{eq:vpm}).
  We shall 
add a subscript $P$ to $v_+$   writing
$v_{+,P}$  to denote its dependence on $P$.  

Note that $v_{+,P}(-x,\lambda, -\omega)$ 
satisfies
$$(P\tilde {\mcr}-\lambda^2) 
\left[v_{+, P}(-x,\lambda,-\omega)+\psi e^{i\lambda x \cdot \omega}
\right] =0 \; \text{in $\Real^d\setminus \tilde{\mcr}\overline{\mco}$ }
$$
with $v_{+, P}(-x)$ satisfying the Dirichlet (or Neumann) boundary condition on $\partial \mcr \mco$. 
Moreover, since $v_{+,P}(-x,\lambda,-\omega)$ satisfies a radiation condition
as in (\ref{eq:radcond}) with the $+$ sign, we have 
for $\lambda\in \Real \setminus\{0\}$ that $v_{+,P \mcr }(x,\lambda,\omega)=
v_{+,P}(-x,\lambda,-\omega)$ and
thus
$$s_{+,P \tilde{\mcr}}(\theta, \lambda, \omega)= 
s_{+,P}(-\theta, \lambda, -\omega).$$
\end{proof}

We continue with the notation of the previous lemma, and 
 show that \cite[(2.7)]{s-t}, that is,
$s_+(\theta,\lambda,\omega)=s_+(-\theta,\lambda, -\omega)$,
 cannot hold in general.   
The scattering matrix
at energy $\lambda$ is determined by $s_+(\theta, \lambda, \omega)$, see (\ref{eq:scatmatrix}).
Thus by uniqueness results of inverse scattering theory
if  either $V\equiv 0$ and $\Real^d \setminus \overline{\mco}$ is
connected (e.g. \cite[Theorem 5.6]{l-pbook}) {\em or} $\mco =\emptyset$
and $V\in C_c(\Real^d;\Real)$ (e.g. \cite{e-w, faddeev, saito}),
 {\em and} if  $s_{+,P\tilde{\mcr}}(\theta,
\lambda, \omega)= s_{+,P}(\theta, \lambda, \omega)$, for all 
$\lambda\in (0,\infty)$ and all
$\theta, \omega \in \Sphere^{d-1}$, then $P\tilde{\mcr}= P$ and $\tilde{\mcr}\mco =\mco$.
Thus if we use Lemma \ref{l:correction} we see that 
\cite[(2.7)]{s-t} is not true in general.  However, we have shown, if we temporarily assume
Proposition \ref{p:correction}, the following corollary.
\begin{cor} \label{c:even}
 Suppose $\tilde{\mcr}\mco=\mco$ and $V(-x)\equiv V(x)$, with $\mco$ and $V$
satisfying the conditions of Lemma \ref{l:reflection}.  In this case, if $d$ is odd, $S(\overline{\lambda})^*=S(-\lambda)$,
and if $d$ is even, $S(\overline{\lambda})^*= 2I-S(e^{i\pi}\lambda)$.
\end{cor}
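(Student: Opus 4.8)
The plan is to combine Proposition \ref{p:correction} with the extra reflection symmetry furnished by Lemma \ref{l:reflection}, the point being that under the stated hypotheses the scattering matrix commutes with the sphere reflection $\mcr$. First I would observe that when $\tilde{\mcr}\mco=\mco$ and $V(-x)\equiv V(x)$, the reflected operator $P\tilde{\mcr}=-\Delta+V(-x)$ on $\Real^d\setminus(\tilde{\mcr}\overline{\mco})$, with the same boundary condition, is literally the same operator as $P$. Hence $s_{+,P\tilde{\mcr}}=s_{+,P}$, and Lemma \ref{l:reflection} specializes to
$$ s_{+,P}(\theta,\lambda,\omega)=s_{+,P}(-\theta,\lambda,-\omega) $$
for all $\lambda\in(0,\infty)$ and all $\theta,\omega\in\Sphere^{d-1}$.

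Next I would translate this symmetry of the scattering amplitude into an operator identity for $S(\lambda)$. Since the scattering matrix is determined by $s_+$ through the explicit formula (\ref{eq:scatmatrix}) — whose kernel is the sum of an identity part $\delta(\theta-\omega)$, invariant under $(\theta,\omega)\mapsto(-\theta,-\omega)$, and a scalar multiple of $s_+(\theta,\lambda,\omega)$ — the amplitude symmetry says exactly that the integral kernel $K(\theta,\omega)$ of $S(\lambda)$ satisfies $K(\theta,\omega)=K(-\theta,-\omega)$. A short computation shows that the kernel of $\mcr S(\lambda)\mcr$ is $K(-\theta,-\omega)$, so this identity is precisely $\mcr S(\lambda)\mcr=S(\lambda)$ for $\lambda\in(0,\infty)$; using $\mcr^2=I$, it says that $S(\lambda)$ commutes with $\mcr$.

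To promote this to all of $\Lambda$, I would note that $\lambda\mapsto \mcr S(\lambda)\mcr-S(\lambda)$ is meromorphic on $\Lambda$ and vanishes on the half-line $(0,\infty)$, hence vanishes identically by analytic continuation. Finally I would substitute the resulting relations $\mcr S(-\lambda)\mcr=S(-\lambda)$ and $\mcr S(e^{i\pi}\lambda)\mcr=S(e^{i\pi}\lambda)$ into the two cases of Proposition \ref{p:correction}, obtaining $S(\overline{\lambda})^*=S(-\lambda)$ when $d$ is odd and $S(\overline{\lambda})^*=2I-S(e^{i\pi}\lambda)$ when $d$ is even.

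I expect the main obstacle to be the middle step: pinning down the dependence of $S(\lambda)$ on $s_+$ in (\ref{eq:scatmatrix}) carefully enough to see that the identity part contributes $\delta(\theta-\omega)$, which is symmetric under the reflection, while the remaining part inherits exactly the symmetry $s_+(\theta,\lambda,\omega)=s_+(-\theta,\lambda,-\omega)$. Everything else is essentially formal: the reduction $P\tilde{\mcr}=P$ is immediate from the hypotheses, and the passage from real $\lambda$ to all of $\Lambda$ is routine analytic continuation.
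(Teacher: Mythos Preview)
Your argument is correct and follows the same route as the paper: under the symmetry hypotheses Lemma~\ref{l:reflection} collapses to $s_{+}(\theta,\lambda,\omega)=s_{+}(-\theta,\lambda,-\omega)$, which is exactly the commutation $\mcr S(\lambda)\mcr=S(\lambda)$, and then Proposition~\ref{p:correction} gives the stated identities. One small point of precision: the off-diagonal part of the kernel in (\ref{eq:scatmatrix}) is a scalar multiple of $\overline{s}_{-}(-\theta,\overline{\lambda},\omega)=s_{+}(-\theta,\lambda,-\omega)$ rather than of $s_{+}(\theta,\lambda,\omega)$, but under the symmetry you have already established these coincide, so the conclusion $K(\theta,\omega)=K(-\theta,-\omega)$ stands.
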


Now we prove the proposition.  We return to omitting the subscript $P$ on $s_+$, as we
will be working with a fixed operator $P$.

\vspace{2mm}
\noindent
{\em Proof of Proposition \ref{p:correction}}. 
We have \cite[Section 2]{s-t}
\begin{equation}\label{eq:scatmatrix}
S(\lambda)h(\theta)= h(\theta)+ \left( \frac{i\lambda}{2\pi}\right)^{(d-1)/2}
\int h(\omega) \overline{s}_-(-\theta, \overline{\lambda}, \omega)  dS_\omega.
\end{equation}
Now let $\lambda>0$, that is, $\arg \lambda =0$.
From (\ref{eq:scatmatrix}), we see that for such $\lambda$ the kernel of 
$S^*(\overline{\lambda}) -I$ is given by
\begin{equation}\label{eq:first}
\left(  \frac{\lambda}{2\pi}\right)^{(d-1)/2} e^{-\pi i(d-1)/4} 
s_-(-\omega, \lambda, \theta).
\end{equation}
On the other hand, the kernel of $S(e^{i\pi } \lambda)-I $ is
\begin{align} \label{eq:second}
\left( \frac{ i e^{i \pi }\lambda}{2\pi}\right) ^{(d-1)/2} 
\overline{s}_-(-\theta, e^{-i\pi }\overline{\lambda}, \omega) & = 
  \left( \frac{  \lambda}{2\pi}\right) ^{(d-1)/2}e^{3\pi i (d-1)/4}
   s_+(-\theta, e^{i\pi }\lambda, -\omega) \nonumber  \\
   & =  \left( \frac{  \lambda}{2\pi}\right) ^{(d-1)/2}e^{3\pi i (d-1)/4}
   s_-(-\theta, \lambda, \omega) 
   \end{align}
   from (\ref{eq:stright1}) and  (\ref{eq:stright2}).  Applying (\ref{eq:stright3}) to (\ref{eq:second}) and 
   comparing (\ref{eq:first}) finishes the proof of the 
   proposition.
   \qed

\section{Preliminary results on multiplicities of poles and some consequences of (\ref{eq:correct})} In this section we 
work only in 
even dimension $d$.  The main points of this section are to define the 
multiplicities of
poles of the resolvent and scattering matrix, and to prove
 Proposition \ref{p:togettophys} which
identifies poles of the scattering matrix on $\Lambda_{m+1}$ with zeros 
of a function defined
on $\Lambda_0$.

\subsection{Multiplicities of the poles of the resolvent}
This subsection recalls a result on the structure of the resolvent at a pole
and defines two notions for the multiplicity of the pole of the resolvent.

A result which we shall need is the following lemma, which
is essentially \cite[Lemma 2.4]{gu-zw} in a different setting.
We do not give a proof, as it follows essentially identically
the proof of that result.  We use notation similar to \cite[Lemma 2.4]{gu-zw}, 
but adapted for this context.  In the statement below and later in
this paper we use the notation $\lambda^2$ for the analytic continuation 
of the function which is equal to $\lambda^2$
for  $\lambda \in \Lambda_0
\simeq \{ z\in \Complex: \Im z>0\}$.
\begin{lemma}\label{l:gu-zw} (cf. \cite[Lemma 2.4]{gu-zw})
If $R$ has a pole at $\lambda_0\in \Lambda$, then there is a finite $p>0$
so that
$$R(\lambda)=\sum_{k=1}^p \frac{A_k(\lambda_0)}
{(\lambda^2-\lambda_0^2)^k} +H_R(\lambda_0,\lambda)$$
where $H_R(\lambda_0,\lambda)$ is holomorphic near $\lambda_0$. 
 There is a constant  $0<q<\infty$ 
so that
\begin{equation}
A_k(\lambda_0)= \sum_{l,m=1}^q a_k^{lm}
\varphi_l \otimes \varphi_m,
\end{equation}
with
$$\varphi_l \otimes \varphi_m (f) =\varphi_l \int f \varphi_m,\; $$
for $f\in \mch$ having $f \mapsto f_{ \restrict \Real^d  \setminus U}$
with compact support.
Moreover,  $\varphi_m$, $m=1,...q$ satisfy
\begin{align*} \varphi_m & \in \mch \; \text{if $\lambda_0 \in 
\partial \Lambda_0$}\\
\varphi_m & \in \mcd_{U}\oplus e^{|x|(|\lambda_0||\sin \arg \lambda_0|+\epsilon)} 
C^{\infty}_b(\Real^d \setminus U)
\; \text{otherwise}.
\end{align*}
If $a_k(\lambda_0) $ denotes the matrix
 $(a_{k}^{lm}(\lambda_0))_{1\leq l,m\leq q}$, then $a_1(\lambda_0)$ is
 symmetric with rank $q$, $d(\lambda_0)=a_{1}(\lambda_0)^{-1} a_2(\lambda_0)$ is nilpotent,
and $a_k(\lambda_0)=a_1(\lambda_0)d(\lambda_0)^{k-1}$, $k>1$.
\end{lemma}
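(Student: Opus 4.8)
The plan is to follow the strategy of \cite[Lemma 2.4]{gu-zw}, adapting each step to the logarithmic cover $\Lambda$ and to the weighted black-box spaces $\langle x\rangle^s\mch$. Since the meromorphic continuation of the cut-off resolvent to $\Lambda$ has already been recalled (via analytic Fredholm theory), one may start from the fact that near a pole $\lambda_0$ the continued resolvent, viewed as a family of operators from compactly supported data into $\langle x\rangle^{1/2+\epsilon}\mcd$, has a Laurent expansion with finitely many negative-order terms and finite-rank residues. Because $\lambda_0\neq 0$, the map $\lambda\mapsto\lambda^2$ is a local biholomorphism near $\lambda_0$, so $(\lambda^2-\lambda_0^2)$ is a legitimate local coordinate and the expansion may be rewritten in powers of $(\lambda^2-\lambda_0^2)$; this fixes the pole order $p$ and produces the residue operators $A_k(\lambda_0)$.

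First I would establish that all the residues have ranges in a common finite-dimensional subspace $W=\operatorname{span}\{\varphi_1,\dots,\varphi_q\}$ and may therefore be written as $A_k=\sum_{l,m}a_k^{lm}\varphi_l\otimes\varphi_m$. Matching coefficients in the identity $(P-\lambda^2)R(\lambda)=I$, written with $P-\lambda^2=(P-\lambda_0^2)-(\lambda^2-\lambda_0^2)$, yields $(P-\lambda_0^2)A_p=0$ together with $A_{k+1}=(P-\lambda_0^2)A_k$ for $1\le k\le p-1$, so that $A_k=(P-\lambda_0^2)^{k-1}A_1$ in the continued sense and $W$ is invariant under $(P-\lambda_0^2)$. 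Representing the action of $(P-\lambda_0^2)$ on $W$ by a matrix $N$, which is nilpotent since $(P-\lambda_0^2)^pA_1=0$, one computes $a_k=N^{k-1}a_1$; choosing $\{\varphi_m\}$ to be a basis of $W$ makes $a_1$ of full rank $q$ and hence invertible, whence $d(\lambda_0)=a_1^{-1}a_2=a_1^{-1}Na_1$ is nilpotent and $a_k=a_1d^{k-1}$. The symmetry of $a_1$ would follow from the transpose symmetry $R(\lambda)^t=R(\lambda)$ of the continued resolvent, that is, from the symmetry of the Green's kernel, which holds for the real, formally self-adjoint operators $P$ under consideration.

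The membership and growth statements for the $\varphi_m$ are where I expect the real work to lie. On the physical boundary $\partial\Lambda_0$ a pole is a genuine $L^2$-eigenvalue, so the residues are honest spectral projectors and $\varphi_m\in\mch$. Away from $\partial\Lambda_0$, each $\varphi_m$ lies in the range of $A_1$ and, outside $U$, solves $(-\Delta-\lambda_0^2)\varphi_m=0$ with the behavior inherited from the analytically continued outgoing free resolvent; the asymptotics of the continued Hankel kernel on $\Lambda$ then give the exponential rate $e^{|x|(|\lambda_0||\sin\arg\lambda_0|+\epsilon)}$, since $|\Im\lambda_0|=|\lambda_0||\sin\arg\lambda_0|$ and the $\epsilon$ absorbs subexponential factors, placing $\varphi_m$ in $\mcd_U\oplus e^{|x|(|\lambda_0||\sin\arg\lambda_0|+\epsilon)}C_b^\infty(\Real^d\setminus U)$. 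The main obstacle is thus the careful control of these continued eigenfunctions on the nonphysical sheets; once that is in hand, the algebraic relations among the matrices $a_k(\lambda_0)$ follow by the bookkeeping above, exactly as in \cite[Lemma 2.4]{gu-zw}.
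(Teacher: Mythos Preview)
The paper does not give a proof of this lemma; it explicitly states ``We do not give a proof, as it follows essentially identically the proof of that result,'' referring to \cite[Lemma~2.4]{gu-zw}. Your outline is precisely an adaptation of that argument to the black-box setting on $\Lambda$, and the steps you sketch (Laurent expansion via analytic Fredholm theory, the recursion $(P-\lambda_0^2)A_k=A_{k+1}$ from $(P-\lambda^2)R(\lambda)=I$, the nilpotency and the algebraic relations among the $a_k$, symmetry from $R(\lambda)^t=R(\lambda)$, and the growth estimates from the continued Hankel kernel) are exactly what the paper has in mind and are consistent with how it later uses the lemma, e.g.\ the relation $(P-\lambda_0^2)A_k=A_{k+1}$ in the proof of Lemma~\ref{l:diffresmult}.
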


We shall in fact need two notions related to the multiplicity 
of a pole of the resolvent.
We first define the multiplicity $\mu_R$  of a pole of the resolvent $R$ as follows.  Given $\lambda_0\in \Lambda$, define $\gamma_{\lambda_0}$ to be 
a small circle centered at $\lambda_0$ that contains no poles of 
the resolvent except, possibly, a pole at $\lambda_0$.
Define
$$\mu_R(\lambda_0) \defeq \rank  \int_{\gamma_{\lambda_0}} 
 R(\lambda) 2 \lambda d\lambda = \rank \int_{\gamma_{\lambda_0}}
 R(\lambda) d\lambda.$$
We note that by an argument just as in the proof of \cite[Lemma 2.4]{gu-zw}, this is the same as the dimension of the image of 
the singular part of $R$ at $\lambda_0$.  

We need another, more restrictive, notion of multiplicity related to the 
resolvent of $P$ satisfying the black box conditions. 
The need for this is related to 
the possibility of eigenvalues of $P_{\restrict \mch_U}$; we provide
an example below.

 Let
\begin{equation}
\label{eq:chirestrict}
\chi \in C_c^{\infty}(\Real^d) \; 
\text{satisfy $\chi\equiv 1$ on $\overline{U}$}
\end{equation}
where $U$ is as in the black box assumptions on $P$ of Section \ref{s:correction}.  Then define
$$\mu_{(1-\chi)R}(\lambda_0) \defeq \rank \int_{\gamma_{\lambda_0}}(1-\chi)R(\lambda)  d\lambda.$$
For any $\chi$, $\tilde{\chi}$ both satisfying (\ref{eq:chirestrict}),
 unique continuation together with the expansion of Lemma
\ref{l:gu-zw}  implies that
$\mu_{ (1-\chi)R}(\lambda_0) = \mu_{(1-\tilde{\chi})R}(\lambda_0).$

It is clear that 
\begin{equation}\label{eq:multineq}
\mu_{ (1-\chi)R}(\lambda_0) \leq \mu_{ R}(\lambda_0).
\end{equation}
Moreover, the inequality can be strict, and it is the strictness of this
inequality that makes having two definitions useful.  Consider the 
following example.  Let $\mco\subset \Real^d$ be an open bounded set with 
smooth boundary $\partial \mco$.
  Suppose in addition that $\Real^d \setminus \overline{\mco}$ has two 
connected components: $ \Real^d \setminus \overline{\mco}= \Omega_{\ext} 
\sqcup \Omega_{\bded}$, where $\Omega_{\ext}$ is unbounded and $\Omega_{\bded}$
is bounded, and each is connected.  An example of such an 
$\mco$ is an annulus in $\Real^2$. Then let $P= -\Delta$ with 
Dirichlet boundary conditions on $\Real^d \setminus \overline{\mco}$.  This operator
$P$ satisfies all the black box conditions.  It is really the direct sum of
two operators: one with discrete spectrum (the Dirichlet Laplacian on 
$\Omega_{\bded}$) and one with absolutely continuous spectrum
 (the Dirichlet Laplacian on $\Omega_{\ext} $).  The inequality
(\ref{eq:multineq}) is strict at points   $e^{i\pi m}\lambda_1$,
where $m\in \Integers$ and $\lambda_1^2$ is an eigenvalue of $P$.

We include the following lemma now to further explain the relationship
between the two notions of the multiplicity of a pole of the resolvent.
The proof uses a result of \cite{pe-zwsc}, a representation of 
the scattering matrix, recalled here in 
Proposition \ref{p:pe-zw}.  It also uses Lemma \ref{l:almostonedirection}, the proof
of which does not use the lemma below.
\begin{lemma}\label{l:diffresmult}
Suppose $\chi\in C_c^{\infty}(\Real^d)$ satisfies (\ref{eq:chirestrict}).
For $\lambda_0\in \Lambda$, 
$$\mu_R(\lambda_0)= \mu_{(1-\chi)R}(\lambda_0)+ 
\dim\{ f\in \mch: (P-\lambda_0^2)f=0\; \text{and}\; (1-\chi)f\equiv 0\}.$$
\end{lemma}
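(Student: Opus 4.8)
The plan is to compare the two residue-rank quantities $\mu_R(\lambda_0)$ and $\mu_{(1-\chi)R}(\lambda_0)$ by examining the singular part of the resolvent at $\lambda_0$ through the lens of Lemma \ref{l:gu-zw}. Write the singular part as $\sum_{k=1}^p A_k(\lambda_0)(\lambda^2-\lambda_0^2)^{-k}$ with $A_k(\lambda_0)=\sum_{l,m} a_k^{lm}\,\varphi_l\otimes\varphi_m$. Then $\mu_R(\lambda_0)$ equals the dimension of the span of $\{\varphi_1,\dots,\varphi_q\}$ (this is the stated reformulation as the dimension of the image of the singular part), while $\mu_{(1-\chi)R}(\lambda_0)$ is the rank of the residue of $(1-\chi)R$, i.e. the dimension of the span of $\{(1-\chi)\varphi_1,\dots,(1-\chi)\varphi_q\}$. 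So the difference $\mu_R-\mu_{(1-\chi)R}$ measures exactly the dimension of the kernel of the linear map $\varphi\mapsto(1-\chi)\varphi$ restricted to the span of the $\varphi_l$; the whole lemma reduces to identifying that kernel with the generalized-eigenspace-type set $\{f\in\mch:(P-\lambda_0^2)f=0,\ (1-\chi)f\equiv 0\}$.

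First I would make precise that $\mu_R(\lambda_0)=\dim V$ where $V=\operatorname{span}\{\varphi_1,\dots,\varphi_q\}$, using the invertibility/rank-$q$ statement for $a_1(\lambda_0)$ in Lemma \ref{l:gu-zw} to guarantee the $\varphi_l$ are linearly independent and genuinely span the image. Next I would show $\mu_{(1-\chi)R}(\lambda_0)=\dim (1-\chi)V$, since $(1-\chi)$ commutes with the residue integral and the residue of $(1-\chi)R$ is $(1-\chi)A_1(\lambda_0)$. Rank–nullity then gives
\begin{equation}
\mu_R(\lambda_0)-\mu_{(1-\chi)R}(\lambda_0)=\dim\{\varphi\in V:(1-\chi)\varphi\equiv 0\}.
\end{equation}
The remaining task is to verify that this kernel coincides with $K\defeq\{f\in\mch:(P-\lambda_0^2)f=0,\ (1-\chi)f\equiv 0\}$.

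For the inclusion $K\subseteq V$, if $(P-\lambda_0^2)f=0$ with $(1-\chi)f\equiv0$, then $f$ is supported in $\{\chi\neq0\}$, so $f\in\mch$ and pairing the resolvent identity $(P-\lambda^2)R(\lambda)=I$ against $f$ forces $f$ to lie in the range of the singular part, hence in $V$; this is where I would invoke self-adjointness of $P$ and the structure of $A_1$. Conversely, if $\varphi\in V$ satisfies $(1-\chi)\varphi\equiv0$, I would use that elements of $V$ arising as residues of $R$ formally satisfy $(P-\lambda_0^2)\varphi=0$ outside $U$ (from the pole structure and $P=-\Delta$ off $U$), combined with $(1-\chi)\varphi\equiv0$ and $\chi\equiv1$ on $\overline U$, to conclude $\varphi\in K$. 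I expect the main obstacle to be this second inclusion: one must genuinely show that a residue vector $\varphi$ with $(1-\chi)\varphi\equiv0$ is an honest $L^2$ null vector of $P-\lambda_0^2$ on all of $\mch$ (not merely off $U$), matching the boundary/black-box data across $\partial U$. This is precisely where Proposition \ref{p:pe-zw} and Lemma \ref{l:almostonedirection} enter: the representation of the scattering matrix lets me track how the $\varphi_l$ behave near $U$ and confirm that compact support forces $\varphi$ into $\Domain(P)$ with $(P-\lambda_0^2)\varphi=0$, closing the circle.
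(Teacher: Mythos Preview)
Your overall architecture---rank--nullity applied to the map $\varphi\mapsto(1-\chi)\varphi$ on $V=\operatorname{span}\{\varphi_1,\dots,\varphi_q\}$, reducing the lemma to the identification of $\{\varphi\in V:(1-\chi)\varphi\equiv 0\}$ with $K$---is exactly the paper's approach. The difficulty is that you have misdiagnosed which inclusion is hard, and your sketch for the inclusion $\{\varphi\in V:(1-\chi)\varphi\equiv0\}\subseteq K$ has a genuine gap.

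You assert that ``elements of $V$ arising as residues of $R$ formally satisfy $(P-\lambda_0^2)\varphi=0$ outside $U$.'' This is false when the pole is not simple. From $(P-\lambda^2)R(\lambda)=I$ one obtains $(P-\lambda_0^2)A_k=A_{k+1}$, so the $\varphi_l$ only satisfy $(P-\lambda_0^2)^p\varphi_l=0$; for $p>1$ a generic $\varphi\in V$ is a \emph{generalized} eigenfunction, and $(-\Delta-\lambda_0^2)\varphi$ need not vanish anywhere outside $U$. The paper's actual argument for this inclusion is different and uses self-adjointness in an essential way: once $(1-\chi)\varphi\equiv0$, the vector $\varphi$ lies in $\mch$, and $(P-\lambda_0^2)$ maps the subspace $\{\psi\in V:(1-\chi)\psi\equiv0\}$ into itself (via the relation $(P-\lambda_0^2)A_k=A_{k+1}$ and the observation that the image stays compactly supported). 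One then argues that a self-adjoint operator cannot have a nontrivial Jordan block on $L^2$ vectors: if $(P-\lambda_0^2)^p\varphi=0$ with $\varphi\in\mch$, then $(P-\lambda_0^2)\varphi=0$. This is the step you are missing.

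Correspondingly, you have the roles of the auxiliary results reversed. Proposition \ref{p:pe-zw} and Lemma \ref{l:almostonedirection} are \emph{not} used to push compactly supported residue vectors into $K$; they are used for the other inequality, $\mu_R(\lambda_0)\geq\mu_{(1-\chi)R}(\lambda_0)+\dim K$, i.e., to show that every genuine compactly supported eigenfunction is detected by the singular part of $R$ at $\lambda_0$ on \emph{every} sheet of $\Lambda$. Your proposed route for that direction (pairing $(P-\lambda^2)R(\lambda)=I$ against $f$) would at best place $f$ in the range of the singular part on the physical sheet; transferring this to an arbitrary $\lambda_0\in\Lambda$ is precisely what Lemma \ref{l:almostonedirection} provides.
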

\begin{proof}
We use the notation of Lemma \ref{l:gu-zw}.
 By taking linear combinations
of the $\varphi_m$ if necessary and relabeling, we can assume that
there is an $n \in \{ 1,2, ...,q +1\}$ so that $(1-\chi)\varphi_m\equiv 0$ for 
$m=1,2,...,n-1$ and so that $(1-\chi)\varphi_n,...,(1-\chi)\varphi_q $ are 
linearly independent.  This $n$ is uniquely determined, and 
$n= 1+\mu_R(\lambda_0)- \mu_{R(1-\chi)}(\lambda_0)$.
If $n>1$ then since $(P-\lambda_0^2) A_k(\lambda_0)=A_{k+1}(\lambda_0) 
$ and $ A_{p+1}(\lambda_0)=0$
(see \cite[(2.22)]{gu-zw}),  we must have at least one 
eigenfunction of $P$ in the span of $\{ \varphi_1,...,\varphi_{n-1}\}$,
and must have $\lambda_0^2 \in \Real$.
Suppose there is an $l_0\in \{ 1,2,...,n-1\} $ so that
$\varphi_{l_0}$ is {\em not} in the null
space of $P-\lambda_0^2$.  Since $(P-\lambda_0^2)\varphi_{l_0}$ is in 
the span of $\{ \varphi_1,...,\varphi_{n-1}\}$, it is in $\mch$. 
But since $(P-\lambda_0^2)^p\varphi_m=0$ for $m=1,...,q$,  using
that $P$ is a self-adjoint and $\varphi_l \in \mch$
 this is a contradiction.  Thus
$$\mu_R(\lambda_0)\leq   \mu_{(1-\chi)R}(\lambda_0)+ 
\dim\{ f\in \mch: (P-\lambda_0^2)f=0\; \text{and}\; (1-\chi)f\equiv 0\}.$$
To see that 
 $$\mu_R(\lambda_0)\geq  \mu_{(1-\chi)R}(\lambda_0)+ 
\dim\{ f\in \mch: (P-\lambda_0^2)f=0\; \text{and}\; (1-\chi)f\equiv 0\}$$
one can use the expression for $S(\lambda)$ from Proposition \ref{p:pe-zw}
along with Lemma \ref{l:almostonedirection}.
\end{proof}

Thus we can see that it is the poles of $(1-\chi)R(\lambda)$ 
which are traditionally called resonances.
Often the poles $\lambda_0$ with $\arg \lambda_0/\pi \in \Integers$ are excluded, as they 
correspond to embedded eigenvalues.

\subsection{Multiplicities of the poles of the scattering matrix}

This subsection contains a number notions related to the multiplicity of
poles of the scattering matrix.  One such is measure of the multiplicities
of the zeros and poles of a scalar function, which we shall denote
$\msca$, with the ''sc'' for scalar.
Let $f$ be a scalar function meromorphic on $\Lambda$, not identically $0$,
and let $\lambda_o\in\Lambda$.  If $f(\lambda_0)=0$, define $\msca (f,\lambda_0)$ to be the 
multiplicity of $\lambda_0$ as a zero of $f$.  If $f$  has a pole at $\lambda_0$, define $\msca (f,\lambda_0)$ to be 
minus the order of the pole of $f$ at $\lambda_0$.  If $\lambda_0$ is neither a pole nor a zero of $f$, 
set $\msca (f,\lambda_0)=0$.  Thus $\msca(f, \cdot)$ is positive at zeros and negative at poles.



Next we define the (maximum) multiplicity $\muS (\lambda_0)$ of $\lambda_0\in \Lambda$ as a pole of $S$.  Near $\lambda_0$, we may for some finite $p$ 
write $$S(\lambda) =\sum_{j=1}^p \frac{B_j(\lambda_0)}{(\lambda -\lambda _0)^j} + H_S(\lambda_0,\lambda)\defeq S_s(\lambda_0, \lambda)+
H_S(\lambda_0,\lambda)$$
where $H_S(\lambda_0,\lambda)$ is holomorphic near $\lambda_0$.
  Note that $B_j(\lambda_0) $ is finite rank for $j=1,...,p$, and since the 
$B_j$ are uniquely determined, so is $S_s(\lambda_0,\lambda)$.  Then set
\begin{equation}\label{eq:ourSmult}
\muS (\lambda_0)\defeq - \msca (\det( I+ S_s(\lambda_0,\lambda)), \lambda_0).
\end{equation}
We discuss this definition further
in Lemma \ref{l:wellknown}.
We note that our definition of the multiplicity of a pole of $S(\lambda)$ differs from one commonly used in scattering theory,
that is 
\begin{equation}\label{eq:othersmult}
- \frac{1}{2\pi i} \tr\left(  \int_{\gamma_{\lambda_0}} S^{-1}(\lambda) S'(\lambda) d\lambda \right) 
\end{equation}
where 
$\gamma_{\lambda_0}$ is a small circle centered at $\lambda_0$ and enclosing no singularities of $S$
or $S^{-1}$ except possibly $\lambda_0$
(see, for example, \cite[Equation 1.3]{bo-pe}).  Roughly speaking, the expression in (\ref{eq:othersmult}) counts the multiplicity
of the  
pole of $S$ at $\lambda_0$ minus the multiplicity of the  zero of $S$ at $\lambda_0$;
see Lemma \ref{l:wellknown}.  For many applications in scattering
theory this is sufficient, as one expects all but a finite number of the poles in one half plane of $\Complex$, 
and all but a finite number of 
zeros in the other half plane of $\Complex$.  
The even dimensional Euclidean scattering case is more complicated.  If $m\in \Integers $
has $|m|>1$, we expect in general that $\Lambda_m$ contains both infinitely many poles and infinitely many
 zeros of $S$.  Thus the definition (\ref{eq:ourSmult})
we use here counts the multiplicities of the poles without subtracting the multiplicities of the zeros.

The following lemma is well known, using that $S(\lambda )= (S^*(\overline{\lambda}))^{-1}$.   We outline a proof, in part in an effort to make notions
of multiplicities of a pole of the scattering matrix more transparent.
\begin{lemma} \label{l:wellknown}  Let $P$ be a self-adjoint operator satisfying the black box conditions
recalled in Section \ref{s:correction}.  Let $\gamma_{\lambda_0}$ be a small, positively 
oriented curve enclosing $\lambda_0$ and no zeros nor poles of
$S(\lambda)$, except possibly at $\lambda_0$ itself.  Then
$$ \frac{1}{2\pi i} \tr \int_{\gamma_{\lambda_0}}
S'(\lambda)S^{-1}(\lambda) d\lambda =  \muS(\overline{\lambda}_0)-\muS(\lambda_0)=
\msca( \det S(\lambda), \lambda_0).$$
\end{lemma}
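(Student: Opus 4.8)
The plan is to establish the two equalities separately, reading the displayed chain from right to left. The right-hand equality, that $\frac{1}{2\pi i}\tr\int_{\gamma_{\lambda_0}} S'(\lambda)S^{-1}(\lambda)\,d\lambda = \msca(\det S(\lambda),\lambda_0)$, is the scalar argument principle applied to $\det S$. Since $S(\lambda)-I$ is trace class and $S$ is meromorphic on $\Lambda$, the function $\det S(\lambda)$ is a well-defined scalar meromorphic function near $\lambda_0$, and Jacobi's formula gives $\tr\!\big(S^{-1}(\lambda)S'(\lambda)\big) = (\det S)'(\lambda)/\det S(\lambda)$ on the punctured neighbourhood where $S$ is holomorphic and invertible; cyclicity of the trace lets me write $S'S^{-1}$ in place of $S^{-1}S'$. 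Integrating over $\gamma_{\lambda_0}$ and invoking the argument principle identifies the integral with $N_0-N_\infty$, the order of the zero minus the order of the pole of $\det S$ at $\lambda_0$, which is exactly $\msca(\det S,\lambda_0)$ by the definition of $\msca$.

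It then remains to show $N_\infty=\muS(\lambda_0)$ and $N_0=\muS(\overline{\lambda}_0)$. For the pole order, recall that $\muS(\lambda_0)$ is by definition the order of the pole of $\det\big(I+S_s(\lambda_0,\lambda)\big)$ at $\lambda_0$, where $S_s$ is the finite-rank principal part of $S$. Writing $S=(I+S_s)+(H_S-I)$ with $H_S$ holomorphic, I would factor, on the punctured disc, $S=(I+S_s)\big(I+(I+S_s)^{-1}(H_S-I)\big)$, so that $\det S=\det(I+S_s)\cdot C(\lambda)$ with $C(\lambda)=\det\big(I+(I+S_s)^{-1}(H_S-I)\big)$. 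Because the $B_j$ are finite rank, $I+S_s$ preserves the splitting of $\mch$ into the finite-dimensional span $W$ of the ranges and coranges of the $B_j$ and its orthogonal complement, acting as the identity on $W^\perp$; hence $(I+S_s)^{-1}$ stays bounded as $\lambda\to\lambda_0$ (the pole directions are collapsed) and extends holomorphically across $\lambda_0$. This makes $C$ holomorphic there, and once one knows $C(\lambda_0)\neq 0$ it follows that the pole order of $\det S$ equals that of $\det(I+S_s)$, i.e. $N_\infty=\muS(\lambda_0)$.

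For the zero order I use the hinted relation $S(\lambda)=(S^*(\overline{\lambda}))^{-1}$, which comes from continuing the unitarity $S^*(\overline{\lambda})S(\lambda)=I$. Taking determinants and using $\det(I+K^*)=\overline{\det(I+K)}$ for trace-class $K$ yields $\det S(\lambda)\,\overline{\det S(\overline{\lambda})}=1$. The map $\lambda\mapsto\overline{\det S(\overline{\lambda})}$ is the Schwarz reflection of the meromorphic function $\det S$ through the anti-holomorphic involution $\lambda\mapsto\overline{\lambda}$ on $\Lambda$, so this identity says that a zero of $\det S$ of order $k$ at $\lambda_0$ is precisely a pole of $\det S$ of order $k$ at $\overline{\lambda}_0$. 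Hence $N_0=N_\infty(\overline{\lambda}_0)=\muS(\overline{\lambda}_0)$, the last step being the pole-order identification applied at $\overline{\lambda}_0$. Combining the three steps gives $\msca(\det S,\lambda_0)=N_0-N_\infty=\muS(\overline{\lambda}_0)-\muS(\lambda_0)$.

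The main obstacle is the nonvanishing $C(\lambda_0)\neq 0$, equivalently the assertion that adding the holomorphic remainder $H_S-I$ to the principal part does not lower the pole order of the determinant. Holomorphy of $C$ at $\lambda_0$ follows from the finite-rank reduction sketched above, but ruling out a zero of $C$ at $\lambda_0$ is the one place where the argument is genuinely delicate; I expect it to require the finite-rank structure of the $B_j$ together with properties specific to the scattering matrix, rather than following from soft determinant identities alone. Everything else is the standard determinant calculus for identity-plus-trace-class families combined with the reflection symmetry.
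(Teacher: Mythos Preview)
Your overall plan has a structural flaw, not just the ``delicate'' gap you flag at the end. You want to prove \emph{separately} that the pole order $N_\infty$ of $\det S$ at $\lambda_0$ equals $\muS(\lambda_0)$ and that the zero order $N_0$ equals $\muS(\overline{\lambda}_0)$. But a scalar meromorphic function cannot have both a pole and a zero at the same point, so at most one of $N_0,N_\infty$ is nonzero. In even-dimensional scattering, however, it is perfectly possible for $S(\lambda)$ to have both a pole and a ``zero'' at the same $\lambda_0$ (in the Gohberg--Sigal sense, partial indices $k_j$ of both signs); in that case $\muS(\lambda_0)>0$ and $\muS(\overline{\lambda}_0)>0$ simultaneously, and your two identifications cannot both hold. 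Only the \emph{difference} $\muS(\overline{\lambda}_0)-\muS(\lambda_0)$ is recorded by $\det S$, which is exactly the content of the lemma.

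A concrete symptom is your claim that $(I+S_s)^{-1}$ stays bounded near $\lambda_0$. Take $W=\operatorname{span}(e_1,e_2)$ and $S_s=(\lambda-\lambda_0)^{-1}\,e_1\otimes e_2$; then on $W$ one has $I+S_s=\begin{pmatrix}1&(\lambda-\lambda_0)^{-1}\\0&1\end{pmatrix}$, whose inverse $\begin{pmatrix}1&-(\lambda-\lambda_0)^{-1}\\0&1\end{pmatrix}$ blows up, while $\det(I+S_s)\equiv 1$, so $\muS(\lambda_0)=0$. Thus neither the boundedness nor the holomorphy of your correction factor $C$ follows from the finite-rank reduction. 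The paper avoids all of this by invoking the Gohberg--Sigal local factorization $S=EDF$ with $D=P_0+\sum_j(\lambda-\lambda_0)^{k_j}P_j$: then $\muS(\lambda_0)=\sum_j\max(0,-k_j)$, the relation $S^{-1}(\lambda)=S^*(\overline{\lambda})$ gives $\muS(\overline{\lambda}_0)=\sum_j\max(k_j,0)$, and the trace integral equals $\sum_j k_j$, which is the desired difference. Your argument-principle step for the rightmost equality is fine; it is the middle equality that needs the factorization machinery rather than determinant manipulations.
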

\begin{proof}
We note that $S(\lambda)-I$ is a compact operator which is finitely meromorphic
on $\Lambda$;
that is, the only singularities of 
$S(\lambda)$ are poles, and at each pole the singular part is 
of finite rank.\footnote{Note that $\Lambda$ does not include any points which
project to the origin on the boundary of the physical half plane.} From 
\cite[Theorem 3.1]{g-s}, one can write near $\lambda = \lambda_0$,
$$ S(\lambda)= E(\lambda) D(\lambda) F(\lambda)$$
where $E,\; F$ are holomorphic with holomorphic inverses for $\lambda$ in 
a neighborhood of $\lambda_0$.  Moreover,
\begin{equation}\label{eq:D}
D(\lambda)= P_0 +\sum_{j=1}^n (\lambda-\lambda_0)^{k_j} P_j
\end{equation}
and the $P_j$, $j=0,\;1,...n$ are mutually orthogonal projections,
$\tr P_j=1$ for $j\geq 1$, $k_1,...,k_n$ are integers, and $Q\defeq 
I-\sum_{j=0}^nP_j$
is finite dimensional.  In fact, using that $S(\lambda)$ has a meromorphic
inverse, $Q=0$.  Moreover, the set $\{ k_1,...,k_n\}$ is uniquely 
determined by $S$ and $\lambda_0$.  Now
$$\muS(\lambda_0) = \sum_{j=1}^{n} \max(0,-k_j)$$
where the $k_j$ are as in (\ref{eq:D}).  
A comparison with  \cite{g-s} shows that this is what
is called  $P(S(\lambda_0))$ there.   In the notation of \cite{g-s}
$$N(S(\lambda_0))= \sum_{j=1}^n \max(k_j,0)$$
and in ours, using that $S(\lambda)^{-1}=
S^*(\overline{\lambda})$,
$$\muS(\overline{\lambda_0})= \sum_{j=1}^n \max(k_j,0).$$
Then from \cite[Theorem 2.1]{g-s}
we have
$$\frac{1}{2\pi i} \tr \int_{\gamma_{\lambda_0}}
S'(\lambda)S^{-1}(\lambda) d\lambda= \sum_{j=1}^n k_j
= \muS(\overline{\lambda_0})- \muS(\lambda_0).$$
 Finally, the second equality of the lemma is a special case of
\cite[Theorem 5.1]{g-s}.
\end{proof}

\subsection{A relation between poles of the scattering matrix
on $\Lambda_{m+1}$ and zeros of a scalar function on $\Lambda_0$}
Proposition \ref{p:togettophys} is proved in this section by fairly 
algebraic techniques.

The following lemma is a consequence of
(\ref{eq:correct}) and $S^*(\overline{\lambda})=S^{-1}(\lambda)$.
\begin{lemma}\label{l:productformula}
Let $d$ be even, $\lambda\in \Lambda$, and $m\in \Natural_0$.  Then
$$\left(S(\lambda)\mcr\right)
 \left( S(e^{i\pi}\lambda) \mcr \right) \cdot\cdot \cdot
\left( S(e^{im\pi} \lambda)\mcr \right)  
= \left[ (m+1)S(\lambda)-mI\right] \mcr^{m+1}.
$$
Moreover, 
$$\left( S( e^{im\pi}\lambda)\mcr \right) 
\left( S(e^{i(m-1)\pi}\lambda)) \mcr \right) \cdot \cdot \cdot 
\left( S(\lambda)\mcr \right) = \mcr^m \left[ (m+1)S(\lambda)-mI\right] \mcr.$$
\end{lemma}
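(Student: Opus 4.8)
The plan is to reduce both identities to a single algebraic fact about the operators $S(e^{ik\pi}\lambda)$ conjugated by powers of $\mcr$, and then to assemble the two products by induction on $m$. The one nontrivial input is the combination of the even case of (\ref{eq:correct}) with the unitarity relation $S^*(\overline{\lambda})=S^{-1}(\lambda)$: writing the even case as $S(\overline{\mu})^*=2I-\mcr S(e^{i\pi}\mu)\mcr$ and substituting $S^*(\overline{\mu})=S^{-1}(\mu)$ gives, for every $\mu\in\Lambda$,
\begin{equation}
\mcr\, S(e^{i\pi}\mu)\,\mcr = 2I - S(\mu)^{-1}.
\end{equation}
I will also use repeatedly that $\mcr^2=I$, since $(\mcr f)(\theta)=f(-\theta)$. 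Throughout I abbreviate $S=S(\lambda)$.

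First I would set $S_k\defeq S(e^{ik\pi}\lambda)$ and introduce the conjugated operators $X_m\defeq \mcr^m S_m \mcr^m$. Applying the displayed relation with $\mu=e^{i(m-1)\pi}\lambda$ yields $\mcr S_m\mcr = 2I-S_{m-1}^{-1}$, and conjugating this by $\mcr^{m-1}$ (using $\mcr^{2(m-1)}=I$, so that $\mcr^{m-1}S_{m-1}^{-1}\mcr^{m-1}=(\mcr^{m-1}S_{m-1}\mcr^{m-1})^{-1}$) produces the scalar-like recursion
\begin{equation}
X_m = 2I - X_{m-1}^{-1}, \qquad X_0 = S.
\end{equation}
I would then solve this recursion in closed form, proving by induction that
\begin{equation}
X_m = \bigl[(m+1)S - mI\bigr]\bigl[mS - (m-1)I\bigr]^{-1}.
\end{equation}
The inductive step is a one-line computation once one observes that both factors on the right are polynomials in $S$ and hence commute, so that $2\bigl[mS-(m-1)I\bigr]-\bigl[(m-1)S-(m-2)I\bigr]=(m+1)S-mI$.

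With this closed form in hand, both product formulas follow by induction on $m$, the base case $m=0$ being immediate in each. For the first identity, writing $\Pi_m$ for its left-hand side, I peel off the rightmost factor, $\Pi_m=\Pi_{m-1}\,(S_m\mcr)$, and insert the inductive hypothesis $\Pi_{m-1}=\bigl[mS-(m-1)I\bigr]\mcr^m$; rewriting $\mcr^m S_m=X_m\mcr^m$ via the closed form and cancelling the factor $\bigl[mS-(m-1)I\bigr]$ against its inverse (legitimate since everything in sight is a polynomial in $S$) produces exactly $\bigl[(m+1)S-mI\bigr]\mcr^{m+1}$. The second identity is the mirror computation: I peel off the leftmost factor, insert the reverse inductive hypothesis $\mcr^{m-1}\bigl[mS-(m-1)I\bigr]\mcr$, rewrite $S_m\mcr^m=\mcr^m X_m$, and cancel as before to get $\mcr^m\bigl[(m+1)S-mI\bigr]\mcr$.

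The hard part is purely bookkeeping about invertibility: the recursion and the closed form for $X_m$ involve $X_{m-1}^{-1}$ and $\bigl[mS-(m-1)I\bigr]^{-1}$, which need not exist at isolated points of $\Lambda$. I would handle this by regarding every expression as a meromorphic operator-valued function of $\lambda$ and carrying out the algebra on the open dense subset of $\Lambda$ where $S(\lambda)$ is holomorphic and the relevant combinations are invertible. Since the two identities to be proved contain no inverses, they are identities of meromorphic functions and extend to all of $\Lambda$ by analytic continuation.
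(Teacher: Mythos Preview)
Your proof is correct. It uses the same essential ingredients as the paper's proof—the relation $\mcr S(e^{i\pi}\mu)\mcr=2I-S(\mu)^{-1}$ coming from (\ref{eq:correct}) together with $S^*(\overline{\lambda})=S^{-1}(\lambda)$, followed by induction on $m$—but the organization is different. You first isolate and solve the recursion $X_m=2I-X_{m-1}^{-1}$ in closed form and then run a simple one-step induction on the product, peeling off a single factor; the paper instead peels off \emph{two} factors at once, obtaining the inverse-free identity $S(e^{im\pi}\lambda)\mcr S(e^{i(m+1)\pi}\lambda)\mcr=2S(e^{im\pi}\lambda)-I$ and then using the inductive hypothesis at both $m-1$ and $m$ to reach $m+1$. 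Your route is more modular and makes the underlying M\"obius-type recursion transparent, at the price of carrying the inverse $[mS-(m-1)I]^{-1}$ through the argument and hence needing the meromorphic-continuation remark you supply at the end. The paper's two-step induction sidesteps that by never needing the closed form for $X_m$ standing alone, though of course it too implicitly passes through an inverse when deriving the key two-factor identity. Either way the content is the same.
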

\begin{proof}
The first identity trivially holds for $m=0$.  We now assume it holds for 
all integers between $0$ and $m$ inclusive, and show it holds for $m+1$.
From (\ref{eq:correct}), 
$$\mcr S(e^{i(m+1)\pi}\lambda) \mcr = 2I - S^*(e^{-i m \pi} \overline{\lambda})
= 2I -\left( S(e^{im\pi}\lambda)\right)^{-1}.$$
Multiplying both sides on the left by $S(e^{im\pi}\lambda)$ gives
\begin{equation}\label{eq:prodintermediate}
S(e^{im\pi}\lambda)\mcr S(e^{i(m+1)\pi}\lambda) \mcr = 2 S(e^{im\pi}\lambda)- I.
\end{equation}
We note that if $m=0$ this is the desired identity for $m+1=1$. So
assume $m\geq 1$.  Using 
the inductive hypothesis, multiply both sides of (\ref{eq:prodintermediate})
on the left by 
$$\left( S(\lambda)\mcr \right)\cdot \cdot \cdot 
\left( S(e^{i(m-1)\pi}\lambda)\mcr \right)
= [ m S(\lambda)-(m-1)I]\mcr^m$$
to obtain 
\begin{align*}
& \left(S(\lambda)\mcr\right) \left( S(e^{i\pi}\lambda) \mcr \right)
 \cdot\cdot \cdot
\left( S(e^{i(m+1)\pi} \lambda)\mcr \right)  \\
& = 2[ S(\lambda)\mcr \cdot \cdot \cdot S(e^{i(m-1)\pi }\lambda) 
\mcr S(e^{im \pi}\lambda)\mcr]
\mcr -[mS(\lambda)-(m-1)I]\mcr^m
\end{align*}
Using the inductive hypothesis again, 
we find
\begin{align*}
& \left(S(\lambda)\mcr\right) \left( S(e^{i\pi}\lambda)
 \mcr \right) \cdot\cdot \cdot
\left( S(e^{i(m+1)\pi} \lambda)\mcr \right) \\
& = 2 [ (m+1)S(\lambda)-mI]\mcr^{1+m+1}-[mS(\lambda )-(m-1)I]\mcr^m\\
& = [ (m+2)S(\lambda)-(m+1)I] \mcr^m
\end{align*}
as desired.

The proof of the second equality is very similar.
\end{proof}

\begin{prop}\label{p:togettophys}
For $\lambda_0\in \Lambda$, $m\in \Natural_0$,
\begin{align*}
\msca (\det( (m+1)S(\lambda)-mI), \lambda_0) & 
= \sum_{j=0}^m \msca(\det S(\lambda), e^{ij\pi} \lambda_0)\\
&  =\muS ( e^{i\pi (m+1)}\lambda_0)
- \muS ( \lambda_0).
\end{align*}
\end{prop}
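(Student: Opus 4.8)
The proposition has two equalities. The plan is to prove the first equality (relating the multiplicity of $\det((m+1)S(\lambda)-mI)$ at $\lambda_0$ to a sum of multiplicities of $\det S(\lambda)$ along the orbit $e^{ij\pi}\lambda_0$) using the product formula of Lemma \ref{l:productformula}, and then to deduce the second equality by invoking Lemma \ref{l:wellknown} and telescoping.

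For the first equality, I would start from the identity in Lemma \ref{l:productformula},
\[
\left(S(\lambda)\mcr\right)\left(S(e^{i\pi}\lambda)\mcr\right)\cdots\left(S(e^{im\pi}\lambda)\mcr\right)=\left[(m+1)S(\lambda)-mI\right]\mcr^{m+1},
\]
and take determinants of both sides. Since $\mcr$ is the (unitary, involutive) reflection operator, $\det\mcr$ is a nonzero constant independent of $\lambda$, so $\det(\mcr^{m+1})$ contributes nothing to $\msca(\cdot,\lambda_0)$. On the left the determinant factors as a product, giving
\[
\msca\!\left(\det\!\big((m+1)S(\lambda)-mI\big),\lambda_0\right)=\sum_{j=0}^{m}\msca\!\left(\det S(e^{ij\pi}\lambda),\lambda_0\right).
\]
The remaining point is to rewrite $\msca(\det S(e^{ij\pi}\lambda),\lambda_0)$ as $\msca(\det S(\lambda),e^{ij\pi}\lambda_0)$; this is just the chain rule for the order of a zero or pole under the holomorphic change of variable $\lambda\mapsto e^{ij\pi}\lambda$, which is a biholomorphism of $\Lambda$ preserving orders. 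Care must be taken that $\det S(\lambda)$ is a genuine (scalar) meromorphic function to which $\msca$ applies; this is legitimate because $S(\lambda)-I$ is finite-rank meromorphic, so $\det S(\lambda)$ is well defined and meromorphic, as already used implicitly in Lemma \ref{l:wellknown}.

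For the second equality I would express $\msca(\det S(\lambda),e^{ij\pi}\lambda_0)$ via Lemma \ref{l:wellknown}, which gives $\msca(\det S(\lambda),\mu_0)=\muS(\overline{\mu_0})-\muS(\mu_0)$ for any $\mu_0\in\Lambda$. Summing over $j=0,\dots,m$ with $\mu_0=e^{ij\pi}\lambda_0$ should produce a telescoping sum. The key computation is that $\overline{e^{ij\pi}\lambda_0}$ equals $e^{-ij\pi}\overline{\lambda_0}$, and—using the convention $\overline{\lambda}=|\lambda|\exp(-i\arg\lambda)$ and the fact that conjugation sends $\Lambda_k$ to $\Lambda_{-k-1}$—one has $\overline{e^{ij\pi}\lambda_0}=e^{i(-j-1)\pi}\,\overline{\lambda_0}\,\cdot e^{i\pi}$, so that the "$\overline{\mu_0}$" terms line up with the "$\mu_0$" terms shifted by one in $j$. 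Matching indices so that consecutive terms cancel should leave exactly $\muS(e^{i\pi(m+1)}\lambda_0)-\muS(\lambda_0)$.

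The main obstacle is the bookkeeping in this second step: getting the conjugation convention, the sheet-shifting $\lambda\mapsto e^{i\pi}\lambda$, and the involution $\lambda\mapsto\overline{\lambda}$ to interact correctly so that the telescoping is exact and the surviving endpoints are precisely $e^{i\pi(m+1)}\lambda_0$ and $\lambda_0$. I would verify this carefully on a small case ($m=0$ and $m=1$) before asserting the general pattern, and I would double-check that no zeros or poles of $S$ coincide with $\overline{\cdot}$-images in a way that violates the hypothesis on $\gamma_{\lambda_0}$ in Lemma \ref{l:wellknown}. The determinant-factoring step and the change-of-variables step are routine; the telescoping identity is where the real content lies.
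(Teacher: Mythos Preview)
Your outline follows the paper's proof, but two steps as written would not go through.

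First, the determinant step: $\mcr$ is a unitary involution on the infinite-dimensional space $L^2(\mathbb{S}^{d-1})$, but it is not of the form $I+(\text{trace class})$, so neither $\det\mcr$ nor $\det\big(S(\lambda)\mcr\big)$ exists as a Fredholm determinant. You cannot simply say ``$\det\mcr$ is a nonzero constant.'' The paper fixes this by multiplying the identity of Lemma~\ref{l:productformula} on the right by $\mcr$ and regrouping (using $\mcr^2=I$) so that every factor is either $S(e^{ij\pi}\lambda)$ or $\mcr S(e^{ij\pi}\lambda)\mcr$; each of these \emph{is} $I+(\text{trace class})$. Then $\det(\mcr S\mcr)=\det\big(I+\mcr(S-I)\mcr\big)=\det\big(I+(S-I)\big)=\det S$ via $\det(I+AB)=\det(I+BA)$, and the first equality follows.

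Second, and more substantively, the telescoping is not just bookkeeping on $\Lambda$. Lemma~\ref{l:wellknown} gives
\[
\msca(\det S,\,e^{ij\pi}\lambda_0)=\muS\big(\overline{e^{ij\pi}\lambda_0}\big)-\muS\big(e^{ij\pi}\lambda_0\big)=\muS\big(e^{-ij\pi}\overline{\lambda_0}\big)-\muS\big(e^{ij\pi}\lambda_0\big),
\]
but $e^{-ij\pi}\overline{\lambda_0}$ is \emph{not} equal to $e^{i(j+1)\pi}\lambda_0$ as a point of $\Lambda$, so no coordinate identity makes these terms line up. What you need is the symmetry $\muS(\mu)=\muS(e^{i\pi}\overline{\mu})$ of the \emph{pole multiplicities}, which is a consequence of the functional relation \eqref{eq:correct} (Proposition~\ref{p:correction}) for $S$ in even dimensions, not of the complex structure of $\Lambda$. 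Applying it with $\mu=e^{-ij\pi}\overline{\lambda_0}$ yields $\muS(e^{-ij\pi}\overline{\lambda_0})=\muS(e^{i(j+1)\pi}\lambda_0)$, after which the sum telescopes to $\muS(e^{i(m+1)\pi}\lambda_0)-\muS(\lambda_0)$. Your plan treats this as routine index-matching, but this invocation of \eqref{eq:correct} is the essential input and should be named explicitly.
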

Before proving the proposition, we note that it shows that the 
poles of the scattering matrix on $\Lambda_{m+1}$ correspond (with perhaps
a finite number of exceptions) to the zeros of a scalar function $\det ((m+1)S(\lambda)-mI)$
 on $\Lambda_0$.
This function is meromorphic on $\Lambda_0$, with at most finitely many 
poles (corresponding to eigenvalues of $P$) there. 
This is of course
familiar in the odd-dimensional case, where it is well known, and
has been extensively used, that with at most finitely many exceptions
the poles of the scattering matrix in the nonphysical half plane correspond to
zeros of the determinant of the scattering matrix in the physical half plane
$\Lambda_0$.

We also note that using the symmetry of the poles of the scattering 
matrix which is 
implied by Proposition \ref{p:correction}, poles of the scattering 
matrix on $\Lambda_{-m}$, $m\in \Natural$, can be identified with zeros 
of a scalar function using Proposition \ref{p:togettophys}.
\begin{proof}
We give the proof for $m=2l$ even; the proof for odd $m$ is similar.
Multiply both sides of the first identity of Lemma \ref{l:productformula}
(with $m$ replaced by $2l$)
by $\mcr$ on the right and rearrange slightly to get
$$S(\lambda)\left(\mcr S(e^{i\pi }\lambda) \mcr \right) S(e^{i 2\pi }\lambda) \cdot \cdot \cdot
\left(\mcr S(e^{i(2l-1)\pi}\lambda)\mcr\right) S(e^{i2l\pi}\lambda)= (2l+1)S(\lambda)-2l I.
$$
Since $S(\lambda)$ differs from the identity by a trace class operator,
so do $\mcr S \mcr$ and $(2l+1)S-2lI$.  Thus we have
\begin{multline*}
\det(S(\lambda))\det( \mcr S(e^{i\pi }\lambda) \mcr) \cdot \cdot \cdot 
\det( \mcr S(e^{i(2l-1)\pi}\lambda)\mcr )\det (S(e^{i2l\pi}\lambda))= \\
\det( (2l+1)S(\lambda)-2l I).
\end{multline*}
Using that $\mcr^2=I$ and $\det(I+AB)=\det(I+BA)$ when $A$ is trace class
and $B$ bounded,
$$\det(S(\lambda))\det(  S(e^{i\pi }\lambda)) \cdot \cdot \cdot 
\det( S(e^{i(2l-1)\pi}\lambda) )\det (S(e^{i2l\pi}\lambda))=
\det( (2l+1)S(\lambda)-2l I).
$$
This gives us
\begin{equation}\label{eq:sum}
\sum_{j=0}^{2l} \msca(\det (S(\lambda)), e^{i\pi j }\lambda_0)= 
\msca(\det\left( (2l+1)S(\lambda )-2l I\right), \lambda_0).
\end{equation}

By (\ref{eq:correct}), $\lambda_0\in \Lambda$ is a pole of $S(\lambda)$ if and only 
if $e^{\pi i}\overline{\lambda_0}$ is a pole of $S^*(\lambda)$, and the 
(maximum) multiplicities coincide.  
Applying this and Lemma \ref{l:wellknown},
\begin{align*}
\sum_{j=0}^{2l} \msca(\det (S(\lambda)), e^{i\pi j }\lambda_0) & = \sum _{j=0}^{2l}
\left( -\muS (e^{i\pi j} \lambda_0)+\muS (e^{-i\pi j}\overline{\lambda}_0)\right) \\ & 
 \sum _{j=0}^{2l}
\left( -\muS (e^{i\pi j} \lambda_0)+\muS (e^{i\pi (j+1)}\lambda_0)\right) \\&
= \muS (e^{i\pi(2l+1)}\lambda_0)-\muS (\lambda_0).
\end{align*}
Combined with (\ref{eq:sum}), this completes the proof.
\end{proof}

\section{Poles of the resolvent and poles of the scattering matrix} 
\label{s:rpsp} In this
section we work only in even dimensions $d$.  The main result of this 
section is Theorem \ref{thm:prpsm}, an identification between the
poles of the resolvent and the poles
of the scattering matrix.  
 While analogs of this result
are well known both in odd dimensions and for points in $\Lambda_1$ and $\Lambda_{-1}$ (see e.g. \cite{jensen, nedelec,pe-zwsc, s-t}), we are 
unaware of a proof in the literature which includes the other sheets of $\Lambda$.

We shall use \cite[Proposition 2.1]{pe-zwsc} which we
recall here for the convenience of the reader.  We have changed
the notation to be consistent with the notation of this paper.  We 
remark that there are a number of similar representations of 
the scattering matrix in the literature; see, for example, 
\cite[Section 2]{pe-st} or \cite[Section 3]{Zworski2}.  We recall
that our Hilbert space $\mch$ has an orthogonal decomposition
$$\mch = \mch_U\oplus L^2(\Real^d\setminus U)$$
where $U\subset \Real^d$ is a bounded open  set.
\begin{prop}(\cite[Proposition 2.1]{pe-zwsc})\label{p:pe-zw}   For $\phi\in C_c^{\infty}(\Real^d)$,
 let us denote by 
$${\mathbb E}^{\phi}_{\pm}(\lambda) :L^2(\Real^d)\rightarrow L^2(\Sphere^{d-1})$$
the operator with the kernel $\phi(x) \exp(\pm i \lambda \langle x,\omega \rangle )$.  Let us 
choose $\chi_i\in C_c^{\infty}(\Real^d)$, $i=1,\; 2,\;3$, such that $\chi_i\equiv 1$ near $U$
and $\chi_{i+1}\equiv 1$ on $\supp \chi_i$.  

Then for $0<\arg \lambda <\pi$ we have $S(\lambda)= I+A(\lambda)$, where 
$$A(\lambda) = i \pi (2\pi)^{-d} \lambda^{(d-1)/2} {\mathbb E}^{\chi _3}_+ (\lambda) [\Delta, \chi_1]
R(\lambda) [\Delta, \chi_2] ^t{\mathbb E}^{\chi_3}_-(\lambda)$$
where $^t{\mathbb E}$ denotes the transpose of ${\mathbb E}$.  The identity
holds for $\lambda \in \Lambda$ by analytic continuation.
\end{prop}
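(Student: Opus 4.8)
The plan is to derive the representation first in the physical region $0<\arg\lambda<\pi$ from the large-distance asymptotics of the distorted plane waves, and then to obtain the full statement on $\Lambda$ by analytic continuation. By \eqref{eq:scatmatrix} the kernel of $S(\lambda)-I$ is a fixed multiple of $\lambda^{(d-1)/2}\overline{s}_-(-\theta,\overline{\lambda},\omega)$, so it suffices to produce a resolvent formula for the scattering amplitude and then to rewrite it using \eqref{eq:stright1}--\eqref{eq:stright3}. First I would rewrite \eqref{eq:vpm}: since $\psi$ vanishes near $U$ and $P=-\Delta$ off $U$, the wave $\psi(x)e^{i\lambda x\cdot\omega}$ is supported where $P$ acts as $-\Delta$, and $(-\Delta-\lambda^2)e^{i\lambda x\cdot\omega}=0$. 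Hence $(P-\lambda^2)v_+=[\Delta,\psi]e^{i\lambda x\cdot\omega}$, whose right-hand side is a first-order operator supported on the compact set $\supp\nabla\psi$. For $\Im\lambda>0$ the resolvent selects the outgoing solution, so $v_+=R(\lambda)[\Delta,\psi]e^{i\lambda x\cdot\omega}$.

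Next I would read off the amplitude from the asymptotics of $v_+$. Choosing $\chi\in C_c^{\infty}(\Real^d)$ equal to $1$ near $U$ and on $\supp\nabla\psi$, the function $(1-\chi)v_+$ solves the free Helmholtz equation with the compactly supported source $[\Delta,\chi]v_+$ and is outgoing, so $(1-\chi)v_+=R_0(\lambda)[\Delta,\chi]v_+$ with $R_0(\lambda)$ the free outgoing resolvent. The classical leading asymptotics of its kernel give
$$R_0(\lambda)g(r\theta)=e_d(\lambda)\,\frac{e^{i\lambda r}}{r^{(d-1)/2}}\Big(\int e^{-i\lambda\theta\cdot y}g(y)\,dy+o(1)\Big),\qquad r\to\infty,$$
with $e_d(\lambda)$ the explicit constant coming from the asymptotics of the Hankel function $H^{(1)}_{(d-2)/2}$. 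Comparing with the defining expansion of $v_+$ identifies $s_+(\theta,\lambda,\omega)$, up to a scalar factor, with the integral of $e^{-i\lambda\theta\cdot y}$ against $[\Delta,\chi]R(\lambda)[\Delta,\psi]$ applied to the plane wave $x\mapsto e^{i\lambda x\cdot\omega}$. The integration against $e^{-i\lambda\theta\cdot y}$ is exactly the action of one ${\mathbb E}$-operator, while the incoming plane wave is the kernel of the transposed ${\mathbb E}$-operator on the other side; substituting into \eqref{eq:scatmatrix} and flipping the sign arguments via \eqref{eq:stright1}--\eqref{eq:stright3} reproduces the product ${\mathbb E}^{\chi_3}_+(\lambda)[\Delta,\chi_1]R(\lambda)[\Delta,\chi_2]\,{}^t{\mathbb E}^{\chi_3}_-(\lambda)$. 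The three nested cutoffs enter here: the commutators $[\Delta,\chi_1]$ and $[\Delta,\chi_2]$ localize the measurement and the source, and the condition that $\chi_{i+1}\equiv1$ on $\supp\chi_i$ guarantees that the outer cutoff $\chi_3$ is identically $1$ on the supports of $\nabla\chi_1$ and $\nabla\chi_2$, so that the plane waves in the ${\mathbb E}$-operators may be inserted without changing the composition.

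Finally I would pass to $\Lambda$: the operators ${\mathbb E}^{\chi_3}_\pm(\lambda)$ have kernels $\chi_3(x)e^{\pm i\lambda\langle x,\omega\rangle}$ that are entire in $\lambda$ because $\chi_3$ has compact support, and $R(\lambda)$ continues meromorphically to $\Lambda$, so the right-hand side $I+A(\lambda)$ is meromorphic on all of $\Lambda$; since it agrees with $S(\lambda)$ on the open physical region, the two coincide throughout $\Lambda$. The main obstacle I expect is not the operator-theoretic structure but the exact bookkeeping of the scalar factor: pinning down the precise constant $i\pi(2\pi)^{-d}$ and the power $\lambda^{(d-1)/2}$ requires the exact leading coefficient $e_d(\lambda)$ in the Hankel-function asymptotics, which is dimension dependent and sheet dependent on the logarithmic cover, together with a careful reconciliation of the normalizations of $s_+$, $s_-$, and the prefactor in \eqref{eq:scatmatrix}. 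Getting these signs, phases, and powers to combine into exactly the stated constant is the delicate step, whereas the overall structure follows directly from the distorted-plane-wave computation above.
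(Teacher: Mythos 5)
The first thing to note is that the paper contains no proof of this statement: Proposition \ref{p:pe-zw} is recalled verbatim (with adjusted notation) from \cite[Proposition 2.1]{pe-zwsc}, so there is no internal argument to compare yours against; the relevant comparison is with the derivation in the cited literature. Your sketch is essentially that standard derivation (compare also \cite[Section 3]{Zworski2}), and its structural steps are sound: the computation $(P-\lambda^2)v_+=[\Delta,\psi]e^{i\lambda x\cdot\omega}$, the outgoing identification $v_+=R(\lambda)[\Delta,\psi]e^{i\lambda x\cdot\omega}$, the extraction $(1-\chi)v_+=R_0(\lambda)[\Delta,\chi]v_+$ for $\chi\equiv 1$ near $U$ and on $\supp\nabla\psi$, the Hankel asymptotics of the free resolvent, and the continuation to $\Lambda$ by the identity theorem (the ${\mathbb E}$-kernels are entire in $\lambda$ because $\chi_3$ is compactly supported). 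One small wrinkle: the amplitudes $s_\pm$ and the expansion \eqref{eq:radcond} are defined in the paper only for $\lambda\in\Real\setminus\{0\}$, so the matching with \eqref{eq:scatmatrix} should be carried out at $\arg\lambda=0$ and then both sides continued, rather than ``in the physical region $0<\arg\lambda<\pi$'' as you phrase it.

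Two points keep this from being a complete proof of the proposition as stated. First, the exact prefactor $i\pi(2\pi)^{-d}\lambda^{(d-1)/2}$ is a substantial part of the content of the statement---it is what makes the representation usable quantitatively---and you explicitly leave its computation undone; an argument that stops at ``up to a scalar factor'' proves a strictly weaker statement. Second, there is a cutoff-bookkeeping issue you pass over: your derivation forces the \emph{measurement} cutoff to be $\equiv 1$ on the support of the gradient of the \emph{source} cutoff (you need $(1-\chi)[\Delta,\psi]\equiv 0$), whereas the proposition's nesting $\chi_{i+1}\equiv 1$ on $\supp\chi_i$ puts the larger cutoff $\chi_2$ on the source side and the smaller $\chi_1$ on the measurement side. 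To land on the stated form one must either transpose, using \eqref{eq:stright3} in the form $^tS(\lambda)=\mcr S(\lambda)\mcr$ together with the identities $\mcr\,{\mathbb E}^{\phi}_-(\lambda)={\mathbb E}^{\phi}_+(\lambda)$ and $[\Delta,\chi]^t=-[\Delta,\chi]$, or else prove that $A(\lambda)$ is independent of the admissible choice of the $\chi_i$. Your phrase ``flipping the sign arguments via \eqref{eq:stright1}--\eqref{eq:stright3}'' gestures at this, but without carrying it out you have produced a formula with a genuinely different cutoff configuration than the one asserted.
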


For $\lambda>0$ let $\Phi(\lambda,x,\omega)$  be the function satisfying
$$(P-\lambda^2)\Phi(\lambda,x,\omega)=0$$
$$\Phi(\lambda,r\theta,\omega)=e^{-i\lambda r \theta\cdot \omega}
+ r^{-(d-1)/2}e^{i\lambda r} 
\left(
s_+(\theta,\lambda, -\omega)+ o(r)\right) \text{as $r\rightarrow \infty$}.$$
Here we understand that $P$ acts in the $x$ variable, and 
$r>0$, $\theta\in \Sphere^{d-1}$.The function $\Phi$   
 has a meromorphic extension to $\lambda\in \Lambda$ which we denote in
the same way.  Note that if $\chi_1\in C_c^{\infty}(\Real^d)$ satisfies
the conditions of Proposition \ref{p:pe-zw}, then 
$$\Phi(\lambda, x, \omega)= (1-\chi_1)e^{-i\lambda x\cdot \omega}
-R(\lambda) [\Delta, \chi_1]e^{-i\lambda x \cdot \omega}.$$
 We shall also denote by $\Phi(\lambda)$ the operator 
from $L^2(\Sphere^{d-1})$ to $ \mch_U\oplus L^2_{\loc}(\Real^d \setminus U )$
 which is given by 
$$(\Phi(\lambda)f)(x)=\int_{\Sphere^{d-1}} f(\omega) 
\Phi(\lambda,x,\omega) dS_\omega,$$
and by $\Phi^t(\lambda)$ the transpose.
By Stone's formula, for $\lambda >0$
\begin{equation}\label{eq:resdiff}
R(\lambda)-R(\lambda e^{i\pi})=
 \alpha_d\lambda^{d-2} \Phi(\lambda) \Phi^t(\lambda e^{i\pi}),
\end{equation}
 where $\alpha_d= -i(2\pi)^{1-d}/2 $; compare \cite[(2.26)]{lrb}. 
 Since both sides have meromorphic extensions to $\Lambda$, the equality holds for $\lambda \in \Lambda$
 away from the poles.

The next two lemmas pave the way for Lemma \ref{l:almostonedirection},
which expresses the resolvent at $e^{im\pi} \lambda$ in terms of 
$S(e^{im\pi} \lambda)$, $R(\lambda)$, and $\Phi(\lambda)$.
\begin{lemma}  \label{l:phisphi}
For $\lambda\in \Lambda$,
$$\Phi(\lambda e^{i\pi }) = \Phi(\lambda ) \mcr S^*(\overline{\lambda}).$$
\end{lemma}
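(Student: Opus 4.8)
The plan is to compare the two operators $\Phi(\lambda e^{i\pi})$ and $\Phi(\lambda)\mcr S^*(\overline\lambda)$ by examining how each acts on $L^2(\Sphere^{d-1})$, reducing the operator identity to a pointwise identity among the kernel functions $\Phi(\lambda,x,\omega)$ and the scattering amplitudes $s_+,\,s_-$. The key is that $\Phi(\lambda,x,\omega)$ is characterized uniquely by the PDE $(P-\lambda^2)\Phi=0$ together with its asymptotic expansion, so I would try to show both sides satisfy the \emph{same} equation with the \emph{same} radiation-type asymptotics, and invoke uniqueness.

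\emph{First} I would unwind the definitions. Both sides map $L^2(\Sphere^{d-1})$ into $\mch_U\oplus L^2_{\loc}$. On the left, $\Phi(\lambda e^{i\pi})$ has kernel $\Phi(\lambda e^{i\pi},x,\omega)$. On the right, I must track what $\mcr S^*(\overline\lambda)$ does. Recall $\mcr$ acts by $(\mcr f)(\theta)=f(-\theta)$ and $S^*(\overline\lambda)=S^{-1}(\lambda)$ by unitarity. Using the kernel expression (\ref{eq:scatmatrix}) for $S(\lambda)$, I would compute the kernel of $S^*(\overline\lambda)$ in terms of $s_-$ and then apply $\mcr$, so that $\Phi(\lambda)\mcr S^*(\overline\lambda)$ becomes an explicit $\omega$-integral against $\Phi(\lambda,x,\cdot)$ composed with the scattering-matrix kernel. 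The symmetry relations (\ref{eq:stright1})--(\ref{eq:stright3}) and the correction Lemma \ref{l:correction}, $s_+(\theta,\lambda,\omega)=s_+(-\omega,\lambda,-\theta)$, will be exactly what is needed to simplify this composition.

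\emph{Next}, the cleanest route is via asymptotics and the known continuation relation $s_+(\theta,e^{i\pi}\lambda,\omega)=s_-(\theta,\lambda,-\omega)$ of (\ref{eq:stright2}), which tells me how $\Phi$'s outgoing amplitude transforms when $\lambda\mapsto e^{i\pi}\lambda$. Starting from the defining expansion
\begin{equation*}
\Phi(\lambda,r\theta,\omega)=e^{-i\lambda r\theta\cdot\omega}+r^{-(d-1)/2}e^{i\lambda r}\bigl(s_+(\theta,\lambda,-\omega)+o(r)\bigr),
\end{equation*}
I would replace $\lambda$ by $e^{i\pi}\lambda$, use that $e^{i\pi}\lambda=-\lambda$ in the exponential factors so the roles of incoming and outgoing parts interchange, and identify the resulting incoming plane-wave direction and outgoing amplitude with those produced by $\Phi(\lambda)$ acting after $\mcr S^*(\overline\lambda)$. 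Matching the incoming part fixes how the $\omega$-variable is reflected (accounting for the $\mcr$), and matching the outgoing part against the scattering-matrix kernel is where (\ref{eq:stright3}) and Lemma \ref{l:correction} enter. Since both sides solve $(P-\lambda^2)\,(\cdot)=0$, the uniqueness in the radiation problem forces equality for $\lambda$ real, and then analytic continuation extends it to all of $\Lambda$.

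\emph{The hard part} will be bookkeeping the reflection operator $\mcr$ and the conjugations correctly: getting the sign of the angular argument and the placement of $\mcr$ (left versus right, and whether it lands on $\theta$ or $\omega$) to come out consistent requires careful use of the reciprocity relation (\ref{eq:stright3}) together with the corrected identity of Lemma \ref{l:correction} rather than the erroneous \cite[(2.7)]{s-t}. I expect this is precisely the place where the subtlety of Proposition \ref{p:correction} propagates, so I would handle the $\omega\mapsto-\omega$ substitutions explicitly and verify the composed kernel reproduces the outgoing amplitude of $\Phi(\lambda e^{i\pi},\cdot,\cdot)$ term by term.
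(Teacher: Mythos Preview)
Your core strategy---show both sides solve $(P-\lambda^2)(\cdot)=0$, compare asymptotics for real $\lambda$, invoke uniqueness, then analytically continue---is exactly the paper's. The paper's execution is considerably leaner, though: rather than matching \emph{both} incoming and outgoing amplitudes term by term, it simply observes that the difference $\Phi(\lambda e^{i\pi})-\Phi(\lambda)\mcr S^*(\overline\lambda)$ has the form $e^{i\lambda|x|}|x|^{-(d-1)/2}\beta(x/|x|,\omega)+O(|x|^{-(d+1)/2})$, i.e.\ is purely outgoing, and then Rellich's uniqueness theorem kills it immediately. That the incoming parts cancel is essentially the defining property of the scattering matrix, so the paper never needs (\ref{eq:stright1})--(\ref{eq:stright3}) or Lemma~\ref{l:correction} here. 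In particular, your expectation that ``this is precisely the place where the subtlety of Proposition~\ref{p:correction} propagates'' is off: this lemma is independent of that correction, and dragging those identities in would only add bookkeeping without buying anything. Your plan would work, but the Rellich shortcut makes the explicit outgoing-amplitude match unnecessary.
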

\begin{proof}
Although this well known, we sketch 
the proof.  For $\lambda>0$, and $x\in \Real^n\setminus U$,
\begin{multline*}
\Phi(\lambda e^{i\pi}, x,\omega )-\left(\Phi(\lambda) 
\mcr S^*(\overline{\lambda})\right)
(x,\omega)\\ =
 e^{i\lambda|x|}|x|^{-(d-1)/2}( \beta(x/|x|, \omega)) + O(|x|^{-(d+1)/2})
\; \text{as $|x|\rightarrow \infty$}
\end{multline*}
for some function $\beta \in C^{\infty}(\Sphere^{d-1}\times \Sphere^{d-1})$.
By Rellich's uniqueness theorem, since $\Phi(\lambda 
e^{i\pi }) - \Phi(\lambda) \mcr S^*(\overline{\lambda})$ 
is in the null space of $P-\lambda^2$, this is enough to show the
difference is $0$.  The general result follows by analytic continuation.
\end{proof}

\begin{lemma}\label{l:phis} For $m \in \Natural$ and $\lambda \in \Lambda$,
\begin{multline*}
\Phi(\lambda e^{im \pi})\Phi^t(\lambda e^{i(m+1)\pi})\\
= \Phi(\lambda) \left[ (m+1)^2 \mcr^{m+1} S^*(\ol e^{-im\pi})\mcr^{m}
 -m^2 \mcr^m S^*(\ol e^{-i(m-1)\pi}) \mcr^{m-1} -2m\mcr\right] \Phi^t(\lambda).
\end{multline*} 
\end{lemma}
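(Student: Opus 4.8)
The plan is to iterate Lemma \ref{l:phisphi} so as to write both $\Phi(\lambda e^{im\pi})$ and $\Phi(\lambda e^{i(m+1)\pi})$ as $\Phi(\lambda)$ times an explicit operator product, evaluate those products in closed form via Lemma \ref{l:productformula}, and then match the outcome against the stated right-hand side by reducing everything to a rational identity in $S(\lambda)$.

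First I would set up the iteration. Since $\overline{\lambda e^{ik\pi}}=\ol\,e^{-ik\pi}$, applying Lemma \ref{l:phisphi} with $\lambda$ replaced by $\lambda e^{ik\pi}$ gives the recursion $\Phi(\lambda e^{i(k+1)\pi})=\Phi(\lambda e^{ik\pi})\,\mcr S^*(\ol e^{-ik\pi})$, hence
$$\Phi(\lambda e^{im\pi})=\Phi(\lambda)\,V_0V_1\cdots V_{m-1},\qquad V_k\defeq \mcr S^*(\ol e^{-ik\pi}).$$
Two facts make these products tractable. By unitarity $S^*(\ol e^{-ik\pi})=S^{-1}(\lambda e^{ik\pi})$, so $V_k=(S(\lambda e^{ik\pi})\mcr)^{-1}$, which is exactly the inverse of a factor occurring in Lemma \ref{l:productformula}. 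Moreover, the reciprocity relation (\ref{eq:stright3}) together with the kernel formula (\ref{eq:scatmatrix}) yields the transpose symmetry $S(\mu)^t=\mcr S(\mu)\mcr$; combined with $\mcr^t=\mcr$ and $\mcr^2=I$ this gives $V_k^t=V_k$. Consequently transposition merely reverses a product of the $V_k$, so $\Phi^t(\lambda e^{i(m+1)\pi})=V_mV_{m-1}\cdots V_0\,\Phi^t(\lambda)$ and
$$\Phi(\lambda e^{im\pi})\Phi^t(\lambda e^{i(m+1)\pi})=\Phi(\lambda)\,(V_0\cdots V_{m-1})(V_m\cdots V_0)\,\Phi^t(\lambda).$$

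Next I would evaluate the two one-sided products. Writing $C_k\defeq kS(\lambda)-(k-1)I$, Lemma \ref{l:productformula} gives $\prod_{k=0}^{m-1}(S(\lambda e^{ik\pi})\mcr)=C_m\mcr^m$ and $\prod_{k=0}^{m}(S(\lambda e^{ik\pi})\mcr)=C_{m+1}\mcr^{m+1}$. Taking inverses and using $V_k^t=V_k$ together with $C_k^t=\mcr C_k\mcr$, a short computation that collapses the powers of $\mcr$ via $\mcr^2=I$ yields $V_0\cdots V_{m-1}=\mcr C_m^{-1}\mcr^{m+1}$ and $V_m\cdots V_0=\mcr^{m+1}C_{m+1}^{-1}$, whence
$$\Phi(\lambda e^{im\pi})\Phi^t(\lambda e^{i(m+1)\pi})=\Phi(\lambda)\,\mcr\,C_m^{-1}C_{m+1}^{-1}\,\Phi^t(\lambda),$$
with all inverses understood away from the discrete set of poles and zeros, where the equality then extends by analytic continuation.

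Finally I would reconcile this with the stated bracket. From the same product identities one extracts $S(\lambda e^{ik\pi})=\mcr^kC_k^{-1}C_{k+1}\mcr^k$, so that $S^*(\ol e^{-ik\pi})=\mcr^kC_{k+1}^{-1}C_k\mcr^k$; this simplifies the two terms on the right to $\mcr^{m+1}S^*(\ol e^{-im\pi})\mcr^m=\mcr C_{m+1}^{-1}C_m$ and $\mcr^mS^*(\ol e^{-i(m-1)\pi})\mcr^{m-1}=\mcr C_m^{-1}C_{m-1}$, so the right-hand bracket becomes $\mcr\big[(m+1)^2C_{m+1}^{-1}C_m-m^2C_m^{-1}C_{m-1}-2mI\big]$. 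Since all $C_k$ are polynomials in the single operator $S(\lambda)$ and hence commute, it remains to verify the scalar identity $(m+1)^2C_{m+1}^{-1}C_m-m^2C_m^{-1}C_{m-1}-2m=(C_{m+1}C_m)^{-1}$; clearing the denominator $C_{m+1}C_m$ and substituting $C_k=k(S-1)+1$, the resulting numerator is a quadratic in $t=S-1$ whose $t^2$- and $t$-coefficients vanish and whose constant term equals $1$, completing the proof. The conceptual crux is the symmetry $V_k^t=V_k$, which is what reduces the transpose of a long product to its reversal; the main obstacle is then purely bookkeeping, namely tracking the powers of $\mcr$ correctly through the inverses and transposes.
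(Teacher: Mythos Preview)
Your argument is correct. Both you and the paper iterate Lemma~\ref{l:phisphi}, invoke the symmetry $(\mcr S^*)^t=\mcr S^*$, and appeal to Lemma~\ref{l:productformula}, so the ingredients are identical; the organization of the algebra differs. The paper works directly with the long product $\mcr S^*(\ol)\mcr S^*(\ol e^{-i\pi})\cdots\mcr S^*(\ol e^{-im\pi})\cdots\mcr S^*(\ol)$, applies the adjoint of Lemma~\ref{l:productformula} to collapse the left and right halves, and arrives at the stated bracket by a short distribution. You instead pass to the inverses $V_k=(S(\lambda e^{ik\pi})\mcr)^{-1}$, use the product formula to obtain the compact closed form $\Phi(\lambda)\,\mcr\,C_m^{-1}C_{m+1}^{-1}\,\Phi^t(\lambda)$, and then separately check the rational identity $(m+1)^2C_{m+1}^{-1}C_m-m^2C_m^{-1}C_{m-1}-2m=C_m^{-1}C_{m+1}^{-1}$ in the commutative algebra generated by $S(\lambda)$. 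Your route has the pleasant by-product of exhibiting the very clean expression $\mcr C_m^{-1}C_{m+1}^{-1}$ for the middle operator, at the cost of an extra polynomial verification and of having to justify inverses via analytic continuation; the paper's route reaches the stated formula more directly without ever inverting the $C_k$.
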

\begin{proof} 
By repeatedly applying Lemma \ref{l:phisphi} and the identity 
$(\mcr S^*)^t= \mcr S^*$, we have
\begin{multline}\label{eq:rdiff1}
\Phi(\lambda e^{im \pi})\Phi^t (\lambda e^{i(m+1)\pi}) 
= \\
\Phi(\lambda) \mcr S^*(\overline{\lambda})
 \mcr S^* (\ol e^{-i\pi}) \cdot \cdot \cdot \mcr S^*(\ol e^{-i(m-1) \pi})
\mcr S^*(\ol e^{-im\pi}) \cdot \mcr S^{*}(\ol e^{-i(m-1)\pi})\cdot \cdot \cdot \mcr  S^{*}(\ol)
\Phi^t(\lambda). 
\end{multline}
Lemma \ref{l:productformula} implies that for $p\in \Natural_0$
$$\mcr S^*(e^{ip\pi}\lambda)\mcr S^*(\lambda e^{i(p-1)\pi})\cdot \cdot \cdot \mcr S^*(\lambda)= \mcr^{p+1}\left[ (p+1) 
S^*(\lambda)-pI\right] .$$
Applying this identity with $\lambda e^{ip \pi}$ replaced by $\ol $
and with $p =m$, we find
 that  (\ref{eq:rdiff1})
 is 
$$ \Phi(\lambda) \mcr^{m+1}\left[
(m+1)S^*(\ol e^{-im\pi})-mI \right] \mcr 
S^*(\ol e^{-i(m-1)\pi})\cdot \cdot \cdot \mcr S^*(\ol)
\Phi^t(k).$$
Distributing and then using the second part of Lemma \ref{l:productformula}
twice, this is
\begin{align*} & 
\Phi(\lambda)  \mcr^{m+1} (m+1)\left[
(m+1)S^*(\ol e^{-im\pi})-mI \right] \mcr^{m} \Phi^t(\lambda) \\
& \hspace{3mm}
-m  \Phi(\lambda) \mcr^m \left[mS^*(\ol e^{-i(m-1)\pi})-(m-1)I\right]\mcr^{m-1}
  \Phi^t(\lambda)\\ & 
= \Phi(\lambda) \left[ (m+1)^2 \mcr^{m+1} S^*(\ol e^{-im\pi})\mcr^{m}
 -m^2 \mcr^m S^*(\ol e^{-i(m-1)\pi}) \mcr^{m-1} -2m\mcr\right] \Phi^t(\lambda).
\end{align*}
\end{proof}

The next lemma allows us to express the resolvent on
$\Lambda _m$, $m\in \Natural$, in terms of the resolvent on $\Lambda_0$,
the generalized eigenfunctions $\Phi(\lambda)$ on $\Lambda_0$, 
and the scattering matrix $S$ on $\Lambda_m$.
\begin{lemma}\label{l:almostonedirection}
 Let $P$  satisfy the general black box conditions recalled in
 Section \ref{s:correction}.  
Then
for $m\in \Natural$,
\begin{equation}\label{eq:summary}
R(e^{im\pi}\lambda)-R(\lambda)= 
\alpha_d m\lambda^{d-2} \Phi(\lambda) \mcr^{m+1}\left[ m S(e^{im\pi}\lambda)
 -(m+1)I \right]\mcr^m \Phi^t(\lambda)
\end{equation}
\end{lemma}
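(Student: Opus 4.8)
The plan is to telescope the Stone-formula identity (\ref{eq:resdiff}) across the $m$ sheets lying between $\lambda$ and $e^{im\pi}\lambda$, rewrite each resolvent difference by means of Lemma \ref{l:phis}, and then collapse the resulting sum a second time. First I would apply (\ref{eq:resdiff}) with $\lambda$ replaced by $e^{ij\pi}\lambda$ for $j=0,1,\dots,m-1$. Because $d$ is even, $d-2$ is even, so $(e^{ij\pi}\lambda)^{d-2}=\lambda^{d-2}$ and the prefactor is the same for every $j$; summing the differences telescopes on the left to give
$$R(\lambda)-R(e^{im\pi}\lambda)=\alpha_d\lambda^{d-2}\sum_{j=0}^{m-1}\Phi(e^{ij\pi}\lambda)\Phi^t(e^{i(j+1)\pi}\lambda),$$
all identities holding for $\lambda\in\Lambda$ away from poles by analytic continuation.

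Next I would insert Lemma \ref{l:phis} (with its $m$ replaced by $j$) into each summand. Setting $T_j\defeq (j+1)^2\mcr^{j+1}S^*(\ol e^{-ij\pi})\mcr^j$ and $T_{-1}\defeq 0$, the bracket produced by Lemma \ref{l:phis} is precisely $T_j-T_{j-1}-2j\mcr$, so a second telescoping occurs:
$$\sum_{j=0}^{m-1}\bigl(T_j-T_{j-1}-2j\mcr\bigr)=T_{m-1}-m(m-1)\mcr=m^2\mcr^m S^*(\ol e^{-i(m-1)\pi})\mcr^{m-1}-m(m-1)\mcr,$$
using $\sum_{j=0}^{m-1}2j=m(m-1)$. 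The $j=0$ term is treated directly via Lemma \ref{l:phisphi} and the identity $(\mcr S^*)^t=\mcr S^*$, matching the formula under the convention $T_{-1}=0$. This expresses $R(\lambda)-R(e^{im\pi}\lambda)$ sandwiched between $\Phi(\lambda)$ and $\Phi^t(\lambda)$.

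The final step converts the adjoint factor into the scattering matrix on the correct sheet. Since $\ol e^{-i(m-1)\pi}=\overline{e^{i(m-1)\pi}\lambda}$, Proposition \ref{p:correction} in the even case gives $S^*(\ol e^{-i(m-1)\pi})=2I-\mcr S(e^{im\pi}\lambda)\mcr$. Substituting this and simplifying with $\mcr^2=I$ (so that $\mcr^m\mcr^{m-1}=\mcr$ and $\mcr^{m+1}\mcr^m=\mcr$) turns the inner matrix into $m(m+1)\mcr-m^2\mcr^{m+1}S(e^{im\pi}\lambda)\mcr^m$; negating both sides reproduces exactly $\alpha_d m\lambda^{d-2}\Phi(\lambda)\mcr^{m+1}\bigl[mS(e^{im\pi}\lambda)-(m+1)I\bigr]\mcr^m\Phi^t(\lambda)$, which is (\ref{eq:summary}).

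The main obstacle is bookkeeping rather than any conceptual difficulty: one must keep the reflection powers $\mcr^k$ and the conjugated arguments $\ol e^{-ij\pi}$ aligned so that the two telescopings cancel correctly, and one must use the even-dimension fact $(e^{ij\pi}\lambda)^{d-2}=\lambda^{d-2}$ to pull the uniform prefactor $\alpha_d\lambda^{d-2}$ out of the sum. The only genuine analytic input is the corrected symmetry relation of Proposition \ref{p:correction}; it is precisely the factor $2I-\mcr S\mcr$ (rather than $2I-S$) that causes the $\mcr^k$ powers to collapse into the claimed $\mcr^{m+1}(\cdots)\mcr^m$ form.
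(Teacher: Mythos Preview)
Your argument is correct and follows essentially the same route as the paper's own proof: telescope the resolvent difference via (\ref{eq:resdiff}), apply Lemma \ref{l:phis} to each term, telescope again, and then convert $S^*(\ol e^{-i(m-1)\pi})$ to $2I-\mcr S(e^{im\pi}\lambda)\mcr$ using Proposition \ref{p:correction}. Your treatment is in fact slightly more careful than the paper's in two respects: you explicitly justify the uniform prefactor via $(e^{ij\pi}\lambda)^{d-2}=\lambda^{d-2}$ for even $d$, and you handle the $j=0$ summand separately (Lemma \ref{l:phis} is only stated for $m\in\Natural$), whereas the paper tacitly extends Lemma \ref{l:phis} to cover that case.
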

\begin{proof}
We have
\begin{align*} 
R(e^{im\pi } \lambda)-R(\lambda) & = 
\sum_{j=1}^m (R(e^{i j\pi}\lambda)- R(e^{i (j-1)\pi}\lambda))\\ 
& = - \sum_{j=1}^m \alpha_d \lambda^{d-2} \Phi(e^{i (j-1)\pi}
\lambda ) \Phi^t(\lambda e^{ij\pi}) \\
& = - \sum_{j=1}^m \alpha_d \lambda^{d-2} 
\Phi(\lambda) \left[ j^2 \mcr^{j} S^*(\overline{\lambda}e^{-i(j-1)\pi})
\mcr^{j-1} \right. 
\\ & \hspace{8mm} \left. -
(j-1)^2\mcr^{j-1} S^*(\overline{\lambda}e^{-i(j-2)\pi})\mcr^{j-2} 
 - 2(j-1) \mcr\right] \Phi^t(\lambda )
\end{align*}
where the second equality follows from (\ref{eq:resdiff}) and 
the third follows from Lemma \ref{l:phis}.

Now 
\begin{multline*}\sum_{j=1}^m \left[j^2 \mcr^j 
S^*(\overline{\lambda} e^{-i(j-1)\pi})
\mcr^{j-1}
-(j-1)^2 \mcr^{j-1} S^*(\overline{\lambda}e^{-i(j-2)\pi}) \mcr^{j-2}-2
(j-1) \mcr \right] \\
= m^2 \mcr^{m} S^*(\overline{\lambda}e^{-i(m-1) \pi})\mcr^{m-1} -m(m-1)\mcr
\end{multline*}
using the fact that the first two summands telescope.  
Since
$$S^*(\overline{\lambda }e^{-i(m-1)\pi})= 2I-\mcr S(e^{i m\pi}\lambda )\mcr$$
from (\ref{eq:correct}), this proves the lemma.
\end{proof}

We now turn more directly to the central result of this section.
\begin{thm}\label{thm:prpsm} Let
$d$ be even and
 $P$  satisfy the 
 general black box conditions recalled in Section \ref{s:correction},
and let $\chi\in C_c^{\infty}(\Real^d)$ have $\chi\equiv 1$ on $\overline{U}$.
Then for $\lambda_0\in \Lambda$,
$$\mu_{(1-\chi)R}(\lambda_0) - \mu_{(1-\chi)R}(\overline{\lambda_0})
=-\msca(\det S(\lambda),\lambda_0) = \muS(\lambda_0)-\muS(\overline{\lambda_0})
$$
and
$$ 
\mu_{R}(\lambda_0) - \mu_{R}(\overline{\lambda_0})
=-\msca(\det S(\lambda),\lambda_0)=\muS(\lambda_0)-\muS(\overline{\lambda_0}) .$$
\end{thm}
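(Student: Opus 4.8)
The plan is to derive both displayed equalities from a single central identity comparing the resolvent at $\lambda_0$ and at $e^{im\pi}\lambda_0$, namely Lemma~\ref{l:almostonedirection}, and then to extract the multiplicities by taking residues. The right-hand side of the second equality, $\muS(\lambda_0)-\muS(\overline{\lambda_0})=-\msca(\det S(\lambda),\lambda_0)$, is already Lemma~\ref{l:wellknown} (up to sign and the interchange $\lambda_0\leftrightarrow\overline{\lambda_0}$), so the real content is to show that the pole multiplicities of the resolvent, in either the $\mu_R$ or the $\mu_{(1-\chi)R}$ version, produce the same quantity. Accordingly, I would treat the two versions in parallel, since Lemma~\ref{l:diffresmult} tells us that $\mu_R$ and $\mu_{(1-\chi)R}$ differ only by the dimension of the space of genuine eigenfunctions vanishing off $U$, and this eigenfunction contribution is symmetric under $\lambda_0\mapsto\overline{\lambda_0}$ (both correspond to the same real value $\lambda_0^2$). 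Hence the \emph{difference} $\mu_R(\lambda_0)-\mu_R(\overline{\lambda_0})$ equals $\mu_{(1-\chi)R}(\lambda_0)-\mu_{(1-\chi)R}(\overline{\lambda_0})$, so it suffices to prove the first displayed equality and deduce the second.

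First I would fix $\lambda_0\in\Lambda$ and locate it on a sheet $\Lambda_m$; by the symmetry relation following Proposition~\ref{p:correction}, the involution $\lambda\mapsto\overline{\lambda}e^{i\pi}$ carries $\Lambda_m$ to $\Lambda_{-m}$, and $\overline{\lambda_0}$ sits on $\Lambda_{-m-1}$. The key step is to integrate the identity of Lemma~\ref{l:almostonedirection},
\begin{equation*}
R(e^{im\pi}\lambda)-R(\lambda)=\alpha_d\, m\,\lambda^{d-2}\,\Phi(\lambda)\,\mcr^{m+1}\bigl[mS(e^{im\pi}\lambda)-(m+1)I\bigr]\mcr^m\,\Phi^t(\lambda),
\end{equation*}
around a small circle $\gamma$ enclosing the relevant singularity, and to compute $\rank\int_\gamma(\cdot)\,d\lambda$ on the left using the definitions of $\mu_{(1-\chi)R}$, after inserting the cutoff $(1-\chi)$. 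Because $\Phi(\lambda)=(1-\chi_1)e^{-i\lambda x\cdot\omega}-R(\lambda)[\Delta,\chi_1]$ and $R(\lambda)$ is holomorphic at a point on $\Lambda_0$ (away from eigenvalues), the rank of the singular part of the right-hand side is governed by the poles of the scattering-matrix factor $S(e^{im\pi}\lambda)$, whose multiplicity is $\muS(e^{im\pi}\lambda_0)$. Running this comparison between consecutive sheets $\Lambda_m$ and $\Lambda_{m-1}$ should telescope the resolvent multiplicities against the chain of scattering-matrix multiplicities, exactly as in Proposition~\ref{p:togettophys}.

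The hard part will be the rank bookkeeping on the right-hand side: showing that $\rank\int_\gamma(1-\chi)R(\lambda)\,d\lambda$ equals the order of the pole of the scalar $\det S$ factor requires that the factorization $\Phi(\lambda)\,[\cdots]\,\Phi^t(\lambda)$ neither loses rank (through $\Phi$ having a nontrivial kernel on the relevant finite-dimensional range) nor gains spurious rank. Here I would invoke the Gohberg--Sigal machinery of Lemma~\ref{l:wellknown}, writing $S$ in the normal form $E(\lambda)D(\lambda)F(\lambda)$ near the pole, so that the singular part of the bracketed operator is a sum of rank-one terms $(\lambda-\lambda_0)^{-k}P_j$, and then tracking how $\Phi(\lambda)\mcr^{\pm}$ transports these projections. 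The essential injectivity statement — that $\Phi(\lambda_0)$ restricted to the range of the scattering data recovers exactly the pole multiplicity and no more — is precisely the content indicated by the reference to Lemma~\ref{l:almostonedirection} in the proof of Lemma~\ref{l:diffresmult}, and it is where uniqueness (Rellich, via Lemma~\ref{l:phisphi}) and self-adjointness of $P$ must be used. Once this is established, combining the telescoped resolvent identity with Lemma~\ref{l:wellknown} yields $\mu_{(1-\chi)R}(\lambda_0)-\mu_{(1-\chi)R}(\overline{\lambda_0})=\muS(\lambda_0)-\muS(\overline{\lambda_0})=-\msca(\det S(\lambda),\lambda_0)$, and the $\mu_R$ statement follows from the symmetric eigenfunction correction noted above.
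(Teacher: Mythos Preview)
Your strategy diverges substantially from the paper's, and the divergence occurs exactly at the step you yourself flag as ``the hard part.'' You propose to extract the multiplicity identity directly from the resolvent formula of Lemma~\ref{l:almostonedirection} by integrating around~$\gamma_{\lambda_0}$ and matching the rank of the singular part of $(1-\chi)R$ to the order of the pole of $\det S$ via the Gohberg--Sigal normal form. The obstruction is precisely the ``essential injectivity'' of $\Phi(\lambda_0)$ on the finite-rank singular piece: you need that conjugation by $\Phi$ and $\Phi^t$ neither kills nor creates rank at a pole of arbitrary order. You assert this is ``precisely the content indicated by the reference to Lemma~\ref{l:almostonedirection} in the proof of Lemma~\ref{l:diffresmult},'' but that reference only supplies one inequality, not the exact equality you need. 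In fact the paper establishes this rank matching \emph{only in the case of a simple rank-one pole} (Lemma~\ref{l:simplep}), and even there it requires a hands-on computation with the Hankel/Bessel expansion of the resonant state and the Wronskian identity to show the relevant integral does not vanish. Carrying out an analogous argument for higher-order poles directly would be considerably more delicate, and you have not indicated how.

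The paper sidesteps this difficulty entirely by a perturbation argument modeled on Borthwick--Perry and Klopp--Zworski. It first proves the theorem in the simple-pole case (Lemmas~\ref{l:simplep} and~\ref{l:multdiff1}), then invokes Agmon's perturbation theory of resonances to show (Proposition~\ref{prop:generic}) that for a generic real potential $V\in C_c^\infty(\Real^d\setminus\overline U)$ all poles of $(1-\chi)R_{P+V}$ are simple with rank-one residue. One then chooses such a $V$ small enough that no poles of $R_{P+tV}$ or $R_{P+tV}^*(\overline{\cdot})$ cross $\gamma_{\lambda_0}$ for $t\in[0,1]$; the ranks $\rank\int_{\gamma_{\lambda_0}}R_{P+tV}$ and the winding integral $\tr\int_{\gamma_{\lambda_0}}S_{P+tV}^{-1}S_{P+tV}'$ are continuous, integer-valued, hence constant in~$t$, and at the generic endpoint the simple-pole lemma applies. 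This reduces the whole question to the rank-one case and avoids the higher-order rank bookkeeping altogether.

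Your reduction of the $\mu_R$ statement to the $\mu_{(1-\chi)R}$ statement via Lemma~\ref{l:diffresmult} is correct and agrees with the paper's route (compare equation~(\ref{eq:diffequality})). But the core of your argument has a genuine gap: without an independent proof of the injectivity of $\Phi(\lambda_0)$ on the singular range at poles of arbitrary order, the direct approach does not close, and the paper's detour through generic simplicity is what makes the argument go through.
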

We note the second equality in each displayed equation follows from Lemma
\ref{l:wellknown}.

This result is well-known in odd dimensions, and in even dimensions is known for
 $\lambda \in \Lambda_1 
\cup \Lambda_{-1}$, \cite{jensen, nedelec, pe-zwsc, s-t}.  
As we are unaware of a proof in the literature valid
 for other points in $\Lambda$ 
for even dimensions $d$, we include it in this section. 
 The proof we shall give of Theorem \ref{thm:prpsm} follows rather closely the proof of an analogous result
of Borthwick and Perry
for asymptotically hyperbolic manifolds given in \cite{bo-pe},
and a similar result (for a subset of $\Lambda$) in \cite{pe-zwsc}.  
The paper \cite{bo-pe} uses Agmon's perturbation
theory of resonances \cite{agmon} together with some ideas of Klopp and Zworski's paper \cite{kl-zw} to
prove a result on generic simplicity of resonances (away from certain points).  The analog of Theorem 
\ref{thm:prpsm} is then proved first for situations in which the resonances are simple, and then for the 
general case using as an ingredient the genericity result.

Denote by $Y^m_l$, $l=0,1,2,...$, $m=1,2,... m(l)$ a complete orthonormal set of
 the spherical harmonics
on $\Sphere^{d-1}$, where $m(l)= \frac{2l+d-2}{d-2}\binom{l+d-3}{d-3}$.  These eigenfunctions of the Laplacian $\Delta_{\Sphere^{d-1}}$
on $\Sphere^{d-1}$ satisfy
$$-\Delta _{\Sphere^{d-1}}Y^m_l=l(l+d-2)Y^m_l,\;
 l=0,\; 1,\;2,...\; m=1,\; 2,\;...,\; m(l).$$
From \cite[Lemma 3]{stefanov}
\footnote{The proof in \cite{stefanov} holds for $d\geq 3$ at least; 
the proof for $d=2$ follows from
the Jacobi-Anger expansion.}
\begin{equation}\label{eq:expexp}
e^{i \lambda x\cdot \omega}= (2\pi )^{d/2}\sum_{l=0}^\infty \sum_{m=1}^{m(l)} i^l \overline{Y}^m_l(\theta) 
Y^m _l(\omega)
( \lambda r)^{1-d/2}J_{l+d/2-1}(\lambda r),\; x=r\theta.
\end{equation} 
This equality is classical for $d=3$.

The next
 two lemmas prove special cases of  Theorem \ref{thm:prpsm}
 under an assumption of
simplicity of the pole of $(1-\chi)R$ at $\lambda_0$. 

\begin{lemma}\label{l:simplep}
Let $P$ be self-adjoint and satisfy the other general black box conditions recalled in Section
\ref{s:correction} and let $\chi\in C_c^{\infty}(\Real^d)$, with 
$\chi \equiv 1 $ on $\overline{U}$. 
Let $\lambda_0\in \Lambda$ and suppose $\mu_{(1-\chi)R}(\lambda_0)\leq 1.$  Then $\mu_{(1-\chi)R}(\lambda_0)=\muS(\lambda_0)$. 
\end{lemma}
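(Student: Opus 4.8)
The plan is to establish the identity $\mu_{(1-\chi)R}(\lambda_0)=\muS(\lambda_0)$ in the simple case by directly relating the residue structure of $(1-\chi)R$ at $\lambda_0$ to the leading singular behavior of $S(\lambda)$ near $\lambda_0$, using the resolvent-scattering formula of Lemma \ref{l:almostonedirection} and the representation of the scattering matrix in Proposition \ref{p:pe-zw}. First I would dispose of the trivial direction: if $\mu_{(1-\chi)R}(\lambda_0)=0$, then $(1-\chi)R$ is holomorphic at $\lambda_0$, and I would argue from the formula $A(\lambda)=i\pi(2\pi)^{-d}\lambda^{(d-1)/2}\mathbb{E}^{\chi_3}_+(\lambda)[\Delta,\chi_1]R(\lambda)[\Delta,\chi_2]\,{}^t\mathbb{E}^{\chi_3}_-(\lambda)$ of Proposition \ref{p:pe-zw} that $S(\lambda)=I+A(\lambda)$ is then also holomorphic at $\lambda_0$ (the commutators $[\Delta,\chi_i]$ are supported where $1-\chi$ can be taken to equal $1$, so only $(1-\chi)R$ enters), whence $\muS(\lambda_0)=0$ as well.

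The main case is $\mu_{(1-\chi)R}(\lambda_0)=1$. Here, by Lemma \ref{l:gu-zw} the singular part of $R$ at $\lambda_0$ is a simple pole $\tfrac{A_1(\lambda_0)}{\lambda^2-\lambda_0^2}$ with $A_1(\lambda_0)=c\,\varphi\otimes\varphi$ for a single generalized eigenfunction $\varphi$ (rank one, since the multiplicity is one and the relevant matrix $a_1$ is symmetric of rank $q=1$), and $(1-\chi)\varphi\not\equiv 0$. I would feed this into the representation of $A(\lambda)$ from Proposition \ref{p:pe-zw}: since $[\Delta,\chi_i]$ is supported away from $U$, the factor $[\Delta,\chi_1]R(\lambda)[\Delta,\chi_2]$ inherits exactly the same simple pole, with residue a rank-one operator built from $[\Delta,\chi_1]\varphi$. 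Composing with the Poisson-type operators $\mathbb{E}^{\chi_3}_\pm$ shows $S(\lambda)-I$ has a simple pole at $\lambda_0$ with a rank-one singular part $B_1(\lambda_0)$, and I would then compute $\det(I+S_s(\lambda_0,\lambda))=\det\!\bigl(I+\tfrac{B_1(\lambda_0)}{\lambda-\lambda_0}\bigr)$, which for a rank-one $B_1$ equals $1+\tfrac{\tr B_1(\lambda_0)}{\lambda-\lambda_0}$. The key computation is to verify that $\tr B_1(\lambda_0)\neq 0$, so that $\msca(\det(I+S_s),\lambda_0)=-1$ and hence $\muS(\lambda_0)=1$ by definition \eqref{eq:ourSmult}.

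The crux, and the step I expect to be the main obstacle, is showing that this scalar residue $\tr B_1(\lambda_0)$ does not vanish — i.e.\ that a genuine simple pole of $(1-\chi)R$ produces a genuine (nonremovable) simple pole of $S$. I would extract this from Lemma \ref{l:almostonedirection}: for the appropriate sheet the formula expresses $R(e^{im\pi}\lambda)-R(\lambda)$ as $\alpha_d m\lambda^{d-2}\Phi(\lambda)\mcr^{m+1}[mS(e^{im\pi}\lambda)-(m+1)I]\mcr^m\Phi^t(\lambda)$, so that the pole of $R$ on $\Lambda_m$ is carried precisely by the singular part of $S$ there, conjugated by the generalized-eigenfunction operators $\Phi(\lambda),\Phi^t(\lambda)$. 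Matching residues, the rank-one residue of $R$ forces a nonzero rank-one residue of $S$ provided $\Phi^t(\lambda_0)$ does not annihilate the relevant eigenfunction data; this nondegeneracy is exactly the statement that $\varphi$ is a nontrivial outgoing solution, which follows from $(1-\chi)\varphi\not\equiv 0$ together with Rellich's theorem (as used in Lemma \ref{l:phisphi}). I would thus conclude that the residue is nonzero, giving $\muS(\lambda_0)=1=\mu_{(1-\chi)R}(\lambda_0)$, and close the simple case; the general (higher-multiplicity) case is then handled separately, presumably via a genericity/perturbation argument in the style of \cite{bo-pe,kl-zw} as flagged in the surrounding text.
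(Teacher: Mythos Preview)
Your strategy diverges from the paper's at the crucial step. To show that the simple rank-one pole of $(1-\chi)R$ at $\lambda_0$ forces a genuine pole of $S$, the paper does \emph{not} invoke Lemma~\ref{l:almostonedirection}. Instead it proves directly that
\[
\int \varphi(x)\,[\Delta,\chi_j]\,e^{i\lambda_0 x\cdot\omega}\,dx \not\equiv 0
\]
as a function of $\omega\in\Sphere^{d-1}$. The argument is an explicit special-function computation: from $(1-\chi)R(\lambda)=(1-\chi)R_0(\lambda)(I+K(\lambda))^{-1}$ one obtains, for large $r$, an expansion $(1-\chi)\varphi(r\theta)=\sum c_{lm}Y_l^m(\theta)\,r^{1-d/2}H^{(1)}_{l+d/2-1}(r\lambda_0)$ with not all $c_{lm}$ zero; Green's theorem converts the pairing above to a boundary term at $|x|=R$; and a Wronskian calculation, using the continuation identities $J_\alpha(e^{im\pi}\lambda)=e^{im\alpha\pi}J_\alpha(\lambda)$ and $H^{(1)}_\alpha(e^{im\pi}\lambda)=(-1)^{m\alpha}\bigl[H^{(1)}_\alpha(\lambda)-2mJ_\alpha(\lambda)\bigr]$ together with the large-argument asymptotics of $J_\nu$ and $H^{(1)}_\nu$, shows the relevant coefficient equals $2i/\pi\neq 0$. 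This works uniformly across all sheets of $\Lambda$.

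Your alternative via Lemma~\ref{l:almostonedirection} is a reasonable idea, but two points need attention. First, the clause ``provided $\Phi^t(\lambda_0)$ does not annihilate the relevant eigenfunction data'' and the appeal to Rellich are off-target. In the identity of Lemma~\ref{l:almostonedirection} the operators $\Phi(\lambda),\Phi^t(\lambda)$ sit at the \emph{base point} $\lambda_1\in\Lambda_0$ (with $\lambda_0=e^{im\pi}\lambda_1$), where they are simply holomorphic whenever $R$ is; the argument is then a bare contrapositive --- if $S$ were holomorphic at $\lambda_0$ the right side would be holomorphic at $\lambda_1$, forcing $R$ holomorphic at $\lambda_0$ --- and no nondegeneracy of $\Phi^t$ on $\varphi$ enters. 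Second, and more seriously, the route is vacuous when $\lambda_0\in\Lambda_0$ (then $m=0$ and the identity is $0=0$), and it also breaks down whenever the base point $\lambda_1$ itself sits over a negative eigenvalue of $P$, since then $R(\lambda)$ and $\Phi(\lambda)$ are singular at $\lambda_1$ and the residue matching is contaminated. Both situations are in scope for the lemma as stated, and the paper's Hankel--Bessel computation handles them without modification; your sketch does not say how you would.
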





\begin{proof}
When $\mu_{(1-\chi)R}(\lambda_0)\leq 1$, from Proposition \ref{p:pe-zw} we see that we must have $\muS(\lambda_0)\leq \mu_{(1-\chi)R}(\lambda_0)$.

So suppose  $\mu_{(1-\chi)R}(\lambda_0)=1$.  It follows from 
Lemma \ref{l:gu-zw}, Lemma \ref{l:diffresmult} and its proof 
that $R$ has a simple
pole at $\lambda_0$.   It is enough to show that $\muS(\lambda_0)\geq 1$.
From   the proof of \cite[Lemma 3.2]{sj-zw},  \cite[Equation 4.2]{pe-zwsc}
or
\cite[Equation 4.1]{vodeveven},
\begin{equation}\label{eq:rform}
(1-\chi) R(\lambda)= (1-\chi) R_0(\lambda)(I+K(\lambda))^{-1}
\end{equation}
where $R_0(\lambda)$
 is the resolvent for $-\Delta$ on $\Real^d$ and 
$K(\lambda): \mch_U\oplus L^2_{0}(\Real^d \setminus U)
\rightarrow \mch_U\oplus L^2_{0}(\Real^d \setminus U) $ is a compact operator.  Thus since $(1-\chi)R(\lambda)$ has a simple pole of rank $1$
at $\lambda_0$ by assumption, the residue of $(1-\chi)R$ at $\lambda_0$ is of the form (see Lemma \ref{l:gu-zw})
$$a (1-\chi)\varphi \otimes \varphi$$
where $a$ is a nonzero constant,  $(P-\lambda^2_0)\varphi=0$, 
$\varphi \not \equiv 0$, 
and $(1-\chi)\varphi\not \equiv 0$.  
Moreover, since by (\ref{eq:rform}) and the more explicit expressions
for $K(\lambda)$ found in the references given,
$(1-\chi)\varphi =(1-\chi) R_0(\lambda_0) g$ for some 
$g\in \mch_U\oplus L^2_{0}(\Real^d 
\setminus U)$, 
we have
\begin{equation}\label{eq:varphiexp}
(1-\chi)\varphi(r\theta)= \sum_{l,m} c_{lm} Y^m_l(\theta) r^{1-d/2} 
H^{(1)}_{l+d/2-1}(r\lambda_0 ),\; \text{for sufficiently large $r$}.
\end{equation}
By unique continuation not all of the $c_{lm}$ can be $0$.

To show that $\muS(\lambda_0)\geq 1$, by Proposition \ref{p:pe-zw} it is enough to show that 
$$C^{\infty}(\Sphere^{d-1})\ni \int \varphi(x) [\Delta, \chi_j]e^{i\lambda x\cdot \omega} dx \not \equiv 0, \; j=1,2$$
for $\chi_j$ as in the statement of the proposition.
 Using Green's
 theorem,
\begin{align}\label{eq:ibp}
\int \varphi(x) [\Delta, \chi_j]e^{i\lambda  x\cdot \omega } dx &  =
\int \varphi(x) [\Delta, \chi_j -1]e^{i\lambda  x\cdot \omega } dx \nonumber 
\\ & = 
 -
\int_{|x|=R}\left(  \varphi (x) \frac{\partial}{\partial |x|} e^{i\lambda x\cdot \omega }- 
e^{i \lambda x\cdot \omega }
 \frac{\partial}{\partial |x|}\varphi \right) dS
\end{align}
 if $R>R_0$, where $R_0$ satisfies $\supp \chi_j
\subset  B(0,R_0)=\{ x\in \Real^d: |x|<R_0\}$.
Notice the right hand side of (\ref{eq:ibp}) is  independent  $R$ satisfying this condition. 
Applying this with the expansions (\ref{eq:varphiexp}) and (\ref{eq:expexp})
 gives, for $R$ sufficiently large
\begin{equation} \label{eq:intexpsh}
\int \varphi(x) [\Delta, \chi_j]e^{i\lambda_0  x\cdot \omega } dx  
= \sum  Y^m_l(\omega)  c_{lm}(2\pi)^{d/2} i^l \lambda^{1-d/2}_0 
g_{lm}
\end{equation}
where 
\begin{multline*}
g_{lm}= 
R^{d/2} 
\left[J_{l+d/2-1}(\lambda_0 R)  \frac{\partial}{\partial R} \left( R^{1-d/2} H^{(1)}_{l+d/2-1} (\lambda_0 R)\right) \right. \\ \left.
-H  ^{(1)}_{l+d/2-1} (\lambda_0 R)\frac{\partial}{\partial R} \left( R^{1-d/2} J_{l+d/2-1}(\lambda_0 R) \right) \right].
\end{multline*}  Of course, $g_{lm} $ is in fact independent of $R$.

Now we use that if $\alpha \in \Natural$,  $J_\alpha(\lambda)$ and 
$H_\alpha (\lambda) $ are holomorphic functions on $\Lambda$ satisfying, for
$m\in \Integers,$
$$J_\alpha(e^{im\pi}\lambda )= e^{im\alpha \pi}J_\alpha (\lambda)$$
and 
$$H^{(1)}_\alpha (e^{im\pi}\lambda) = 
(-1)^{m\alpha}\left[ H^{(1)}_\alpha(\lambda)-2 m J_\alpha  (\lambda)\right]$$
\cite[pg.\ 239, 4.13]{Olver1}.  Using that $\lambda_0= e^{im\pi}\lambda_1$ for some 
$m \in \Integers$ and $\lambda_1$ with $0\leq \arg \lambda_1 <\arg \pi,$ 
\begin{equation} 
g_{lm}
=R \left[ J_{l+d/2-1}(\lambda_1 R)
\frac{\partial}{\partial R} H^{(1)}_{l+d/2-1}(\lambda_1 R) 
- H^{(1)}_{l+d/2-1}(\lambda_1 R)\frac{\partial}{\partial R} J_{l+d/2-1}(\lambda_1 R) \right].
\end{equation}

We recall the expansions, valid as $|z|\rightarrow \infty$,
\begin{equation}\label{eq:hankelexp}
 H^{(1)}_\nu (z) = \left(\frac{2}{\pi z}\right)^{1/2} e^{i(z -\pi \nu/2-\pi/4)}
\left( 1+ O(|z|^{-1}) \right), \; \text{if}\;  -\pi+\delta \leq \arg z \leq 2\pi -\delta
\end{equation}
and 
\begin{equation}\label{eq:besselexp}
J_\nu(z) =\left(\frac{1}{2\pi z}\right)^{1/2}
e^{-i(z-\nu \pi/-\pi/4)}\left( 1+O(1/|z|)\right) \; 
\text{if} \; \delta <\arg z< \pi  -\delta
\end{equation}
\cite[Equation 9.2.7]{Olverintorder} and \cite[Equation 9.2.1]{Olverintorder}.
In each case, $\delta>0$ and the principal branch of the square root is taken.
If we apply these expansions and the related one for the derivatives
(See  \cite[Equations 9.2.11 and 9.2.13]{Olverintorder};
the leading order terms can be obtained by differentiating the 
leading order terms of (\ref{eq:hankelexp}) and (\ref{eq:besselexp}).) we find
$$g_{lm}= 2i/\pi+O(1/R)\; \text{as} \; R\rightarrow \infty $$
provided $\arg \lambda_1 \not = 0$.  Since $g_{lm}$ is in fact constant,
 we have shown that $g_{lm}= 2i/\pi$. 
Thus by (\ref{eq:intexpsh}), $\int \varphi(x) [\Delta, \chi_j]e^{i\lambda_0  x\cdot \omega } dx  \not \equiv 0$ and $S$ has a simple pole with a residue of rank $1$ at
$\lambda_0$ if $\arg \lambda_1\not = 0$.

We finish our proof 
 by noting that our assumption that $\lambda_0$ is a simple pole
of $(1-\chi)R(\lambda)$ means that $\arg (\lambda _0 )/\pi \not \in \Integers$.
To see this, recall that under our assumption that $P$ is self-adjoint 
this is well known for $\arg \lambda_0 =0$.  Then Proposition \ref{p:togettophys},
combined with the fact that $S$ is unitary on $\arg \lambda =0$ since
$P$ is self-adjoint,
shows that $S(\lambda)$ cannot have poles with 
$(\arg \lambda)/\pi \in \Integers$.
Finally, (\ref{eq:summary}) and its adjoint equation show that 
$(1-\chi)R(\lambda)
 $
cannot have poles with  $(\arg \lambda)/\pi \in \Integers$.
\end{proof}


The next lemma builds a bit on the previous one.
\begin{lemma}\label{l:multdiff1} Let $P$  satisfy the
general black box conditions recalled in Section \ref{s:correction},
 and let $\chi\in C_c^{\infty}(\Real^d)$, with $\chi \equiv 1$ on $\overline{U}$.
Let $\lambda_0\in \Lambda$ and suppose both $\mu_{(1-\chi)R}(\lambda_0)\leq 1 $ 
and $\mu_{(1-\chi)R}(\overline{\lambda_0})\leq 1. $
Then
$$\mu_{R}(\lambda_0)- \mu_{R}(\overline{\lambda_0}) = -\msca(\det S(\lambda),
\lambda_0).$$
\end{lemma}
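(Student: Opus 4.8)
The plan is to reduce Lemma~\ref{l:multdiff1} to the simple-pole case already handled in Lemma~\ref{l:simplep}, exploiting the relation $\mu_R = \mu_{(1-\chi)R} + (\text{eigenspace dimension})$ from Lemma~\ref{l:diffresmult} together with the symmetry of the scattering matrix coming from Proposition~\ref{p:correction}. The hypothesis gives $\mu_{(1-\chi)R}(\lambda_0)\leq 1$ and $\mu_{(1-\chi)R}(\overline{\lambda_0})\leq 1$, so at each of the two points we are in a situation where Lemma~\ref{l:simplep} applies and yields $\mu_{(1-\chi)R}(\lambda_0)=\muS(\lambda_0)$ and $\mu_{(1-\chi)R}(\overline{\lambda_0})=\muS(\overline{\lambda_0})$. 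Subtracting these two identities immediately gives
\begin{equation*}
\mu_{(1-\chi)R}(\lambda_0)-\mu_{(1-\chi)R}(\overline{\lambda_0})
=\muS(\lambda_0)-\muS(\overline{\lambda_0}),
\end{equation*}
and the right-hand side equals $-\msca(\det S(\lambda),\lambda_0)$ by Lemma~\ref{l:wellknown}.

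First I would therefore establish that it suffices to prove the statement with $\mu_R$ replaced by $\mu_{(1-\chi)R}$, which is the content of the previous paragraph. The remaining task is then to pass from $\mu_{(1-\chi)R}$ to $\mu_R$, that is, to show that the eigenspace-dimension correction terms supplied by Lemma~\ref{l:diffresmult} cancel between $\lambda_0$ and $\overline{\lambda_0}$. Concretely, Lemma~\ref{l:diffresmult} gives
\begin{equation*}
\mu_R(\lambda_0)-\mu_R(\overline{\lambda_0})
=\bigl(\mu_{(1-\chi)R}(\lambda_0)-\mu_{(1-\chi)R}(\overline{\lambda_0})\bigr)
+\bigl(E(\lambda_0)-E(\overline{\lambda_0})\bigr),
\end{equation*}
where $E(\mu)\defeq\dim\{f\in\mch:(P-\mu^2)f=0,\ (1-\chi)f\equiv 0\}$ counts eigenfunctions of $P$ supported inside the black box. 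So I would argue that $E(\lambda_0)=E(\overline{\lambda_0})$. Because such an $f$ is a compactly supported (hence genuinely $L^2$) eigenfunction, it can be nonzero only when $\lambda_0^2$ is a real eigenvalue of $P$; and $\lambda_0^2$ is real precisely when $\arg\lambda_0/\pi\in\Integers$, in which case $\overline{\lambda_0}$ and $\lambda_0$ determine the same value $\lambda_0^2=\overline{\lambda_0}^2$ of the spectral parameter (both project to $\pm|\lambda_0|$), so the two eigenspaces literally coincide and $E(\lambda_0)=E(\overline{\lambda_0})$. When $\arg\lambda_0/\pi\notin\Integers$ both counts vanish. Either way the correction terms cancel, and $\mu_R(\lambda_0)-\mu_R(\overline{\lambda_0})=\mu_{(1-\chi)R}(\lambda_0)-\mu_{(1-\chi)R}(\overline{\lambda_0})$, completing the reduction.

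The main obstacle I anticipate is the careful bookkeeping at the special points where $\arg\lambda_0/\pi\in\Integers$, since these are exactly the points where the two notions of resolvent multiplicity can differ (as the annulus example in the excerpt shows) and where $\lambda_0^2$ is real. I would want to verify that the relation $(P-\lambda_0^2)f=0$ with $(1-\chi)f\equiv 0$ forces $f$ to be a bona fide eigenfunction of the self-adjoint operator $P$, so that $E(\mu)$ depends only on $\mu^2$ and not on the sheet; this uses self-adjointness in an essential way, exactly as flagged in Section~\ref{s:correction}. A secondary point to check is that the two points $\lambda_0$ and $\overline{\lambda_0}$ correctly pair up under the involution $\lambda\mapsto\overline{\lambda}$ appearing in Lemma~\ref{l:wellknown}, so that $\msca(\det S(\lambda),\lambda_0)$ is the right quantity; this is precisely the symmetry furnished by Proposition~\ref{p:correction} and is already built into the statement of Lemma~\ref{l:wellknown}, so no new work should be needed there beyond invoking it.
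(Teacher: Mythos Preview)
Your proposal is correct and follows essentially the same route as the paper's proof: first use Lemma~\ref{l:simplep} and Lemma~\ref{l:wellknown} to get the identity for $\mu_{(1-\chi)R}$, then use Lemma~\ref{l:diffresmult} to reduce the passage to $\mu_R$ to showing that the ``interior eigenspace'' dimension $E(\lambda_0)$ equals $E(\overline{\lambda_0})$. One small slip: you write that $\lambda_0^2$ is real precisely when $\arg\lambda_0/\pi\in\Integers$, but the correct condition is $2\arg\lambda_0/\pi\in\Integers$ (the purely imaginary case $\arg\lambda_0=\pi/2+k\pi$ also gives $\lambda_0^2$ real, namely negative); fortunately this does not affect your conclusion, since in that case one still has $\lambda_0^2=\overline{\lambda_0}^2$ and hence $E(\lambda_0)=E(\overline{\lambda_0})$, exactly as the paper observes via $(\arg\lambda_0-\arg\overline{\lambda_0})/\pi\in\Integers$.
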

\begin{proof}
Applying Lemmas \ref{l:wellknown} and \ref{l:simplep} we see that
\begin{equation}\label{eq:lemmap}
\mu_{(1-\chi)R}(\lambda_0)- \mu_{(1-\chi)R}(\overline{\lambda_0}) = 
-m_{sc}(\det S(\lambda), \lambda_0).
\end{equation}
It follows from Lemmas \ref{l:diffresmult} and \ref{l:almostonedirection}
that for $m\in \Integers$
\begin{equation}\label{eq:diffequality}
\mu_R(\lambda e^{im\pi})- \mu_{(1-\chi)R}(\lambda e^{im\pi})= 
\mu_R(\lambda)- \mu_{(1-\chi)R}(\lambda).
\end{equation}
Next we note that if 
$\mu_R(\lambda)- \mu_{(1-\chi)R}(\lambda)>0$ then $2(\arg \lambda)/\pi \in \Integers$ by
Lemma \ref{l:diffresmult} and using the self-adjointness of $P$.  But if this holds,
then $(\arg \lambda -\arg \overline{\lambda})/\pi \in \Integers$.
Thus applying (\ref{eq:lemmap}) and (\ref{eq:diffequality}) finishes the proof.
\end{proof}

The following proposition extends the main result of \cite{kl-zw} and
is very closely related to 
Theorem 5.1 of \cite{bo-pe}.  
Below we use the notation $R_{P+V}(\lambda)$ for the meromorphic continuation
of the resolvent of $P+V$ to $\Lambda$.
\begin{prop}\label{prop:generic} Suppose $P$ satisfies the black box conditions 
recalled in Section \ref{s:correction}, and 
let $U\subset \Real^d$ be the bounded open set as in the statement of 
the hypotheses on $P$,
 with $Pf=-\Delta f$ for $f\in H^2(\Real^d \setminus U)$.
 Let $\chi\in C_c^{\infty}(\Real^d)$ satisfy $\chi \equiv 1$ on $\overline{U}$.
  Then the set of potentials $V\in C_c^{\infty}(\Real^d \setminus \overline{U};\Real)$ 
for which all poles  
of $(1-\chi)R_{P+V}$ are simple with rank $1$ residues
 is a dense $G_\delta$ set in $C_c^{\infty}(\Real^d \setminus \overline{U};\Real)$.
\end{prop}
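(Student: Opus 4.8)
The plan is to follow the Baire-category strategy of Klopp--Zworski \cite{kl-zw} and Borthwick--Perry \cite{bo-pe}, adapted to the black box setting. Since $C_c^{\infty}(\Real^d \setminus \overline{U};\Real)$ is a strict inductive limit, I would run the argument inside each Fr\'echet space $C_L^{\infty}$ of real potentials supported in a fixed compact $L \subset \Real^d \setminus \overline{U}$, which is a Baire space, and then let $L$ run through an exhaustion. I would also exhaust $\Lambda$ by compact sets $\mcb_n$ and set $\mathcal{G}_n \subset C_L^{\infty}$ to be the potentials $V$ for which $\mu_{(1-\chi)R_{P+V}}(\lambda) \le 1$ for every $\lambda \in \mcb_n$; by Lemma \ref{l:simplep} and Lemma \ref{l:gu-zw} this says exactly that all poles of $(1-\chi)R_{P+V}$ in $\mcb_n$ are simple with rank-one residue. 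It then suffices to show each $\mathcal{G}_n$ is open and dense, since $\bigcap_n \mathcal{G}_n$ is the desired $G_\delta$. \textbf{Openness} is the easy half: by the Fredholm representation $(1-\chi)R_{P+V}(\lambda)=(1-\chi)R_0(\lambda)(I+K(\lambda,V))^{-1}$ from (\ref{eq:rform}), the poles of $(1-\chi)R_{P+V}$ are the zeros of a holomorphic Fredholm determinant $D(\cdot,V)$ depending real-analytically on $V$; after shrinking $\mcb_n$ so that $V_0$ has no poles on $\partial \mcb_n$, the total multiplicity in $\mcb_n$ (an argument-principle integral of $D$) is locally constant, and a simple rank-one pole is structurally stable, so $\mathcal{G}_n$ is open.

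\textbf{Density} is the crux. Fix $V_0 \in C_L^{\infty}$ and a pole $\lambda_0 \in \mcb_n$ of $(1-\chi)R_{P+V_0}$ with $\mu_{(1-\chi)R_{P+V_0}}(\lambda_0) = N > 1$; there are only finitely many such poles in $\mcb_n$. By Lemma \ref{l:gu-zw} and Lemma \ref{l:diffresmult} the singular part at $\lambda_0$ is described by generalized resonant states $\varphi_1,\dots,\varphi_q$ in the null space of a power of $P+V_0-\lambda_0^2$. Applying Agmon's perturbation theory of resonances \cite{agmon} to the family $V = V_0 + tW$ with $W \in C_L^{\infty}$ real, the cluster of poles near $\lambda_0$ is the zero set of $\det \mathcal{E}(\lambda;t)$, where $\mathcal{E}$ is a finite-dimensional effective Hamiltonian, holomorphic in $\lambda$ and real-analytic in $t$, whose first variation at $t=0$ is governed by the complex-symmetric matrix $M(W)=\bigl(\int W \varphi_i\varphi_j\,dx\bigr)_{i,j}$ --- the bilinear pairing built into $\varphi_l\otimes\varphi_m$, consistent with the symmetry of $a_1(\lambda_0)$ in Lemma \ref{l:gu-zw}. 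The decisive structural input is that on $\Real^d\setminus(\overline{U}\cup\supp V_0)$ one has $(-\Delta-\lambda_0^2)\varphi_m=0$, so the $\varphi_m$ are real-analytic there; unique continuation then gives linear independence of the products $\{\varphi_i\varphi_j\}$ on that open set. Choosing $\supp W$ there, the real-linear map $W\mapsto M(W)$ has image rich enough to realize complex-symmetric matrices with pairwise distinct eigenvalues, which split the $N$-fold cluster at $\lambda_0$ into $N$ simple zeros of $\det\mathcal{E}$ for small $t\neq 0$; since total multiplicity in $\mcb_n$ is conserved, each resulting pole has $\mu_{(1-\chi)R}=1$ and hence is simple with rank-one residue.

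To convert this infinitesimal splitting into genericity I would, as in \cite{kl-zw,bo-pe}, fix finitely many $W_1,\dots,W_r\in C_L^{\infty}$ and pass to the finite-dimensional affine slice $V_0+\sum_k t_k W_k$. The locus of $(t_1,\dots,t_r)$ for which some pole in $\mcb_n$ fails to be simple with rank-one residue is the zero set of a real-analytic discriminant assembled from resultants of $\det\mathcal{E}$ over the finitely many clusters in $\mcb_n$; the first-order computation above exhibits a direction in which this discriminant is not identically zero, so the bad locus is a proper analytic subvariety, of measure zero and dense complement. Exhausting by a countable family of such slices and invoking the Baire property of $C_L^{\infty}$ yields that $\mathcal{G}_n$ is dense, and intersecting over $n$ and over the exhaustion by $L$ gives the dense $G_\delta$ inside $C_c^{\infty}(\Real^d\setminus\overline{U};\Real)$.

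\textbf{The main obstacle} is the density step, and within it two points demand care. First, one must rigorously establish the linear independence of the products $\varphi_i\varphi_j$ of resonant states on the region where $P+V_0=-\Delta$; this rests on real-analyticity together with unique continuation, and it is precisely what guarantees that \emph{real} perturbations $W$ realize enough of the matrices $M(W)$ to break every degeneracy. Second, one must track not only the order of the pole but also the rank of the residue, verifying through the reduction $\mu_{(1-\chi)R}\le 1 \Leftrightarrow$ simple rank-one pole (Lemma \ref{l:simplep}, Lemma \ref{l:gu-zw}) that the split poles are genuinely rank one and that the discriminant simultaneously detects both possible failures; this bookkeeping, via the effective-Hamiltonian formalism of \cite{agmon,kl-zw}, is the part I expect to be most technically demanding.
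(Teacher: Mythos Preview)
Your proposal is correct and follows exactly the approach the paper itself indicates: the paper does not write out a proof of this proposition but simply observes that Agmon's perturbation theory \cite{agmon} and unique continuation apply in this setting, and refers the reader to \cite[Theorem~5.1]{bo-pe} (building on \cite{kl-zw}) for the argument. Your outline is a faithful and more detailed rendering of precisely that Baire-category/effective-Hamiltonian strategy.
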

In fact, \cite{kl-zw} proved that generically the resonances in $\Lambda_1\cup \Lambda_{-1}$ are
simple.  Here we say a statement holds generically if it holds for a dense
$G_\delta $ set.  The proof of \cite{kl-zw} uses complex scaling and so does not obviously immediately extend to other sheets of 
$\Lambda$.  The paper \cite{bo-pe} proved the 
generic simplicity of resonances for $\Delta_g +V$ on 
 asymptotically hyperbolic 
manifolds.  There the main ingredient of the proof 
is Agmon's perturbation theory for resonances \cite{agmon}, 
which is sufficiently general to be applicable to the 
situation described in Proposition \ref{prop:generic}.  
A second part of the proof is a unique continuation theorem,
also valid here.  The proof of Proposition \ref{prop:generic}
 in this setting follows so closely the proof of 
\cite[Theorem 5.1]{bo-pe} that we do not repeat it here.

\vspace{2mm}
\noindent
{\em Proof of Theorem \ref{thm:prpsm}}.
Let $\gamma_{\lambda_0}$ be a small circle centered at $\lambda_0$ 
which does not enclose any other poles of $R_P(\lambda)$
 and so that $\gamma_{\lambda_0}$ encloses no poles of 
$R^*_P(\overline{\lambda})$
except possibly one at $\lambda_0$.  Moreover, we require that both
$R_P(\lambda)$ and $R_P^*(\overline{\lambda})$ are holomorphic on $\gamma_{\lambda_0}$ itself.

For $t\in \Real$ and $V\in C_c^{\infty}(\Real^d \setminus \overline{U})$, 
we shall denote by $R_{P+tV}$ the resolvent of $P+tV$ and
 similarly by $S_{P+tV}$ the associated scattering 
matrix.  For any $V\in C_c^{\infty}(\Real^d \setminus \overline{U})$ so that no poles of 
$R_{P+tV}(\lambda)$ cross $\gamma_{\lambda_0}$ for $t\in [0,1]$, the
operator-valued integral
$$\int_{\gamma_{\lambda_0}} R_{P+tV}(\lambda) d\lambda$$
is continuous for $t\in[0,1]$.  Hence for such $V$ and for
all $t\in[0,1]$, the rank of the residue is constant and equal to its value
at $t=0$:
\begin{align}\label{eq:constrank1}
\rank  \int_{\gamma_{\lambda_0}} R_{P+tV}(\lambda) d\lambda & = 
\mu_{R_{P}}(\lambda_0) 
\end{align}
and 
\begin{align}
\rank  \int_{\gamma_{\lambda_0}}(1-\chi) R_{P+tV}(\lambda)
 d\lambda & = 
\mu_{(1-\chi)R_{P}}(\lambda_0).
\end{align}
We note that it follows from Lemma \ref{l:gu-zw}
that
$$\rank \int_{\gamma_{\lambda_0}}R_{P+tV}(\lambda)(1-\chi)= 
\rank \int_{\gamma_{\lambda_0}}(1-\chi)R_{P+tV}(\lambda).$$
Likewise, if no poles of $R^*_{P+tV}(\overline{\lambda})$ cross
 $\gamma_{\lambda_0}$ for $t\in [0,1]$, then for all $t\in [0,1]$
\begin{align}\label{eq:constrank2}
\rank  \int_{\gamma_{\lambda_0}} R^{*}_{P+tV}(\overline{\lambda}) d\lambda  = 
\mu_{R_{P}}(\overline{\lambda_0})
\end{align}
 and
\begin{align}\label{eq:constantranklast}
\rank   \int_{\gamma_{\lambda_0}} R^*_{P+tV}(\overline{\lambda})(1-\chi)
 d\lambda  = 
\mu_{(1-\chi)R_{P}}(\overline{\lambda_0}),
\end{align}
using that the rank of $A^*$ is equal to the rank of $A$.

By Proposition \ref{prop:generic}, we may choose 
$V\in C_c^{\infty}(\Real^d \setminus \overline{U};\Real)$ so that no poles  of either
$R_{P+tV}(\lambda)$ or $R^*_{P+tV}(\overline{\lambda})$
cross $\gamma_{\lambda_0}$ for $t\in [0,1]$ and so that 
$(1-\chi)R_{P+tV}(\lambda)$ 
has simple poles with rank one residues for 
some $t_0\in [0,1]$.  

Then we use  (\ref{eq:constrank1})- (\ref{eq:constantranklast}), (\ref{eq:lemmap}), and
Lemma \ref{l:multdiff1} to show that 
\begin{equation}\label{eq:t0}
\mu_{R_{P}}(\lambda_0)- \mu_{R_P}(\overline{\lambda_0})=
\mu_{(1-\chi)R_P}(\lambda_0)- \mu_{(1-\chi)R_P}(\overline{\lambda_0}) =
 -\msca(\det(S_{P+t_0V}(\lambda)), \lambda_0).
\end{equation}
When $V$ is chosen as above, 
by Lemma \ref{l:wellknown} and using the choice of $\gamma_{\lambda_0}$,
$$\msca(\det(S_{P}(\lambda)), \lambda_0) = \frac{1}{2\pi i}\tr \int_{\gamma_{\lambda_0}}
S^{-1}_{P}(\lambda)S'_{P}(\lambda)d\lambda.$$
Since neither zeros nor poles of $S_{P}(\lambda)$ lie
on $\gamma_{\lambda_0}$ for $t\in [0,1]$, the operator-valued integral 
$$\frac{1}{2\pi i}\tr \int_{\gamma_{\lambda_0}}
S^{-1}_{P+tV}(\lambda)S'_{P+tV}(\lambda)d\lambda$$
is a continuous function of $t\in [0,1]$, and hence, being an integer, is the constant
$\msca(\det(S_{P}(\lambda)), \lambda_0)$.  Combining this observation with (\ref{eq:t0})
proves the theorem.
\qed

\vspace{2mm}

We give a corollary to Theorem \ref{thm:prpsm} which may be helpful 
in studying resonances
on  $\Lambda_m$.
\begin{cor}\label{c:rpsp}
 Under the hypotheses of Theorem \ref{thm:prpsm},
for $\lambda_1 \in \Lambda$, $m\in \Natural$,
\begin{align*}\mu_{(1-\chi)R}(\lambda_1 e^{im\pi})- \mu_{(1-\chi)R}(\lambda_1) 
& = -\msca( \det( mS(\lambda)-(m-1)I), \lambda_1) \\
&  = \mu_{R}(\lambda_1 e^{im\pi})- \mu_{R}(\lambda_1).
\end{align*}
\end{cor}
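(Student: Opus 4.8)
The plan is to derive Corollary \ref{c:rpsp} directly from Theorem \ref{thm:prpsm} together with the symmetry relations established earlier in the paper. The key observation is that Theorem \ref{thm:prpsm} gives, for any single point $\lambda_0 \in \Lambda$, the identity $\mu_R(\lambda_0) - \mu_R(\overline{\lambda_0}) = -\msca(\det S(\lambda), \lambda_0)$, and likewise for $\mu_{(1-\chi)R}$. The idea is to apply this at two cleverly chosen points so that the unwanted $\overline{\lambda}$ terms cancel, producing a difference across $m$ sheets. Specifically, I would apply Theorem \ref{thm:prpsm} at $\lambda_0 = \lambda_1 e^{im\pi}$ and at $\lambda_0 = \lambda_1$, then subtract.

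First I would write down, using Theorem \ref{thm:prpsm},
\begin{align*}
\mu_{R}(\lambda_1 e^{im\pi}) - \mu_{R}(\overline{\lambda_1 e^{im\pi}}) &= -\msca(\det S(\lambda), \lambda_1 e^{im\pi}),\\
\mu_{R}(\lambda_1) - \mu_{R}(\overline{\lambda_1}) &= -\msca(\det S(\lambda), \lambda_1),
\end{align*}
with the same two equations holding for $\mu_{(1-\chi)R}$. Recall from the involution $\lambda \mapsto \overline{\lambda}$ that $\overline{\lambda_1 e^{im\pi}} = \overline{\lambda_1}\, e^{-im\pi}$. The main point is then to relate $\mu_R(\overline{\lambda_1 e^{im\pi}})$ to $\mu_R(\overline{\lambda_1})$ so that subtracting the two displayed equations leaves only quantities on the sheets $\Lambda_0$ and $\Lambda_m$ of the $\lambda_1$ family. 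Here I expect to invoke the symmetry of resonances under $\lambda_0 \mapsto \overline{\lambda_0}\, e^{i\pi}$ noted after Proposition \ref{p:correction}, which forces multiplicities on symmetric sheets to coincide; applied iteratively across $m$ sheets this should convert the $\overline{\lambda_1}\,e^{-im\pi}$ contribution back into a $\overline{\lambda_1}$ contribution up to matching terms.

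Next I would handle the scattering side. The difference $\msca(\det S(\lambda), \lambda_1 e^{im\pi}) - \msca(\det S(\lambda), \lambda_1)$ needs to be reorganized into $\msca(\det(mS(\lambda) - (m-1)I), \lambda_1)$. For this I would lean on Proposition \ref{p:togettophys}, which already packages sums of $\msca(\det S(\lambda), e^{ij\pi}\lambda_0)$ over $j$ into $\msca$ of the scalar determinant $\det((m+1)S(\lambda) - mI)$, together with the relation to $\muS$ on shifted sheets. Telescoping the Proposition \ref{p:togettophys} identities for consecutive values of $m$ should isolate the single-sheet contribution and produce exactly the combination $mS - (m-1)I$ appearing in the corollary. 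In effect, the claimed formula is the ``per-sheet'' increment of the cumulative formula in Proposition \ref{p:togettophys}.

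The step I expect to be the main obstacle is bookkeeping the interaction between the two symmetries: the reflection $\lambda \mapsto \overline{\lambda}$ inside Theorem \ref{thm:prpsm} and the $m$-fold sheet shift $\lambda \mapsto e^{im\pi}\lambda$, making sure the barred terms genuinely cancel rather than accumulate, and confirming that the index on $\muS$ and the argument of the determinant line up correctly (distinguishing $mS - (m-1)I$ from the $(m+1)S - mI$ of Proposition \ref{p:togettophys}). I would verify the $m=1$ case by hand against Theorem \ref{thm:prpsm} as a consistency check, then argue the general case either by induction on $m$ or by directly telescoping, taking care that the equality of the two stated right-hand expressions (the $(1-\chi)R$ difference and the $R$ difference) follows because the correction term $\mu_R - \mu_{(1-\chi)R}$ is supported only at points with $2(\arg\lambda)/\pi \in \Integers$, exactly as exploited in the proof of Lemma \ref{l:multdiff1}.
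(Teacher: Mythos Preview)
Your plan has a structural gap: applying Theorem \ref{thm:prpsm} at only the two points $\lambda_1$ and $\lambda_1 e^{im\pi}$ and subtracting does \emph{not} make the barred terms cancel. After subtraction you are left with
\[
\bigl[\mu_R(\lambda_1 e^{im\pi})-\mu_R(\lambda_1)\bigr]-\bigl[\mu_R(\overline{\lambda_1}\,e^{-im\pi})-\mu_R(\overline{\lambda_1})\bigr]
=-\msca(\det S,\lambda_1 e^{im\pi})+\msca(\det S,\lambda_1),
\]
and the bracketed difference $\mu_R(\overline{\lambda_1}\,e^{-im\pi})-\mu_R(\overline{\lambda_1})$ is (after the symmetry $\mu_R(\overline{\lambda})=\mu_R(e^{i\pi}\lambda)$) equal to $\mu_R(\lambda_1 e^{i(m+1)\pi})-\mu_R(\lambda_1 e^{i\pi})$, i.e.\ the same unknown one sheet over. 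The symmetry $\lambda_0\mapsto \overline{\lambda_0}\,e^{i\pi}$ is an involution, not an iteration device, so it cannot ``walk $\overline{\lambda_1}\,e^{-im\pi}$ back to $\overline{\lambda_1}$'' as you suggest. Likewise, on the scattering side, the two-term difference $\msca(\det S,\lambda_1 e^{im\pi})-\msca(\det S,\lambda_1)$ is \emph{not} what equals $\msca(\det(mS-(m-1)I),\lambda_1)$; by Proposition \ref{p:togettophys} that quantity is the \emph{sum} $\sum_{j=0}^{m-1}\msca(\det S,e^{ij\pi}\lambda_1)$.

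The paper's proof does what your final sentence gestures at but your outline never sets up: it first uses the resolvent symmetry $R(\lambda)=R^*(e^{i\pi}\overline{\lambda})$ to replace the bar in Theorem \ref{thm:prpsm} by a one-sheet shift, obtaining
\[
\mu_R(\lambda_0)-\mu_R(e^{i\pi}\lambda_0)=-\msca(\det S,\lambda_0),
\]
and then applies this at the $m$ consecutive points $\lambda_1,\,e^{i\pi}\lambda_1,\ldots,e^{i(m-1)\pi}\lambda_1$ and adds, so the left side telescopes to $\mu_R(\lambda_1)-\mu_R(e^{im\pi}\lambda_1)$ and the right side is exactly the sum that Proposition \ref{p:togettophys} (with $m$ replaced by $m-1$) identifies as $-\msca(\det(mS-(m-1)I),\lambda_1)$. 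Your observation about the equality of the $\mu_R$ and $\mu_{(1-\chi)R}$ versions is correct in spirit, but in the paper it falls out simply because the one-sheet identity and the telescoping argument apply verbatim to $\mu_{(1-\chi)R}$.
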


\begin{proof}
We note first that $R(\lambda)=R^*(e^{i\pi}\overline{\lambda}) $
means that $\mu_R(\lambda)= \mu_R(e^{i\pi}\overline{\lambda}) $, and 
similarly for $\mu_{(1-\chi)R}$.
Using this and applying Theorem \ref{thm:prpsm} gives
$\mu_R(\lambda_0)-\mu_R(e^{i\pi}\lambda_0)= -\msca(\det S, \lambda_0)$
any $\lambda_0\in \Lambda$.  
Repeatedly using this identity with $\lambda_0$ replaced by $\lambda_1$,
$e^{i\pi}\lambda_1$,..., $e^{i(m-1) \pi}\lambda_1$ in turn and adding gives
\begin{multline}
\mu_R(\lambda_1)-\mu_R(e^{i\pi } \lambda_1)+ 
\mu_R(e^{i\pi } \lambda_1)-  \mu_R(e^{i2 \pi } \lambda_1)+ \cdot \cdot \cdot
+ \mu_R(e^{i(m-1)\pi } \lambda_1)- \mu_R(e^{im\pi } \lambda_1)\\
= -\msca (\det S, \lambda_1) -\msca (\det S, e^{i\pi}  \lambda_1) 
-\cdot \cdot \cdot -\msca (\det S, e^{i(m-1)\pi} \lambda_1).
\end{multline}
Applying Proposition \ref{p:togettophys}, we find
$$\mu_R(\lambda_1)- \mu_R(e^{im\pi } \lambda_1) = -
\msca\left(\det(m S(\lambda)- (m-1)I), \lambda_1 \right).$$
The result for $\mu_{R(1-\chi)}$ follows similarly.
\end{proof}


\section{Purely imaginary poles and (\ref{eq:correct})}\label{s:poles}

The purpose of this section is to prove some consequences of (\ref{eq:correct})
regarding poles of the scattering matrix in even dimensions.  
Among other things, we shall 
 point out the importance of the 
distinction between (\ref{eq:correct}) and (\ref{eq:incorrect}) 
when related to the question of the existence of purely imaginary poles of the
scattering matrix or resolvent.
In this section we again assume $d$ is even.

\begin{thm}\label{thm:pureimag}
Let $\sigma>0$, and denote by $i\sigma$ the point of $\Lambda$ with 
argument $\pi /2$ and norm $\sigma$.  If $S(i\sigma)-I$
has only purely imaginary eigenvalues, then $S(\lambda)$ is analytic
in a neighborhood of $e^{i(m \pi +\pi/2)}\sigma, $
$m\in \Integers \setminus \{ 0\}$.  Moreover, if  $-\sigma^2$ is not an 
eigenvalue of $P$, then $P$ does not have a resolvent resonance at
$e^{i(m \pi +\pi/2)}\sigma.$
\end{thm}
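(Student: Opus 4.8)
The plan is to reduce both assertions to a single computation of the scalar function $\det(mS(\lambda)-(m-1)I)$ at the physical-sheet point $i\sigma$. First I observe that the purely imaginary point of the sheet $\Lambda_m$ is exactly $e^{i(m\pi+\pi/2)}\sigma = e^{im\pi}(i\sigma)$, with $i\sigma\in\Lambda_0$. Since the reflection $\lambda\mapsto e^{i\pi}\overline{\lambda}$ sends the purely imaginary point of $\Lambda_m$ to that of $\Lambda_{-m}$ (the argument $m\pi+\pi/2$ maps to $-m\pi+\pi/2$, with norm preserved), and since this reflection preserves $\muS$ (the remark following Proposition \ref{p:correction} and the proof of Proposition \ref{p:togettophys}) and preserves $\mu_R$ via $R(\lambda)=R^*(e^{i\pi}\overline{\lambda})$ (as in the proof of Corollary \ref{c:rpsp}), it suffices to treat $m\in\Natural$. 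For such $m$, Proposition \ref{p:togettophys}, with its index $m$ replaced by $m-1$, gives
$$\muS(e^{im\pi}(i\sigma))-\muS(i\sigma)=\msca(\det(mS(\lambda)-(m-1)I),\,i\sigma),$$
and Corollary \ref{c:rpsp} gives the companion identity for the resolvent,
$$\mu_R(e^{im\pi}(i\sigma))-\mu_R(i\sigma)=-\msca(\det(mS(\lambda)-(m-1)I),\,i\sigma).$$
Thus everything reduces to showing $\msca(\det(mS(\lambda)-(m-1)I),i\sigma)=0$.

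Second, I would carry out the determinant computation, which is where the hypothesis on the eigenvalues enters. The assumption that $S(i\sigma)-I$ has (purely imaginary) eigenvalues presupposes that $S(i\sigma)$ is a well-defined bounded operator, so $S$ is holomorphic near $i\sigma$ and $\lambda\mapsto S(\lambda)-I$ is a holomorphic family of trace-class operators there; hence $\lambda\mapsto\det(mS(\lambda)-(m-1)I)=\det(I+m(S(\lambda)-I))$ is holomorphic near $i\sigma$. By Lidskii's theorem its value at $i\sigma$ is $\prod_k(1+m\mu_k)$, where the $\mu_k$ are the eigenvalues of $S(i\sigma)-I$ counted with algebraic multiplicity. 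By hypothesis each $\mu_k=i\beta_k$ with $\beta_k\in\Real$, so each factor equals $1+im\beta_k$, of modulus $\sqrt{1+m^2\beta_k^2}\geq 1$ since $m\in\Integers$ is real. Because $S(i\sigma)-I$ is trace class the product converges absolutely, and as no factor vanishes the (convergent) product is nonzero. Therefore $\det(mS(\lambda)-(m-1)I)$ is holomorphic and nonvanishing at $i\sigma$, which is precisely $\msca(\det(mS(\lambda)-(m-1)I),i\sigma)=0$.

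Third, I conclude. Since $S$ is holomorphic at $i\sigma$ we have $\muS(i\sigma)=0$, so the first identity yields $\muS(e^{im\pi}(i\sigma))=0$; as $\muS$ vanishes exactly when $S$ has no pole (Lemma \ref{l:wellknown}), $S$ is analytic near $e^{im\pi}(i\sigma)$, and the symmetry covers $e^{-im\pi}(i\sigma)$, giving the first assertion. For the second, the extra hypothesis that $-\sigma^2$ is not an eigenvalue of $P$ means $R$ has no pole at $i\sigma\in\Lambda_0$, i.e. $\mu_R(i\sigma)=0$; the second identity then gives $\mu_R(e^{im\pi}(i\sigma))=0$, and the symmetry again handles $\Lambda_{-m}$, so $P$ has no resolvent resonance at any purely imaginary point off $\Lambda_0$.

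The conceptual heart is short: a real integer multiple of a purely imaginary spectrum can never hit the eigenvalue $-1$, so the relevant determinant cannot vanish. I expect the only delicate points to be bookkeeping ones—confirming that $e^{im\pi}(i\sigma)$ is the purely imaginary point of $\Lambda_m$ and that $\lambda\mapsto e^{i\pi}\overline\lambda$ interchanges the $\Lambda_m$ and $\Lambda_{-m}$ points while preserving both $\muS$ and $\mu_R$, and being careful to treat $\det(mS(\lambda)-(m-1)I)$ as a holomorphic function near $i\sigma$ (so that $\msca=0$ genuinely rules out both a zero and a pole) rather than merely as a number.
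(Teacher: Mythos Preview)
Your proposal is correct and follows essentially the same strategy as the paper: the key observation that purely imaginary eigenvalues of $S(i\sigma)-I$ force $mS(i\sigma)-(m-1)I$ to be invertible, combined with Proposition~\ref{p:togettophys} and the $\lambda\mapsto e^{i\pi}\overline{\lambda}$ symmetry, is exactly the paper's argument. The only variation is in the resolvent statement, where you invoke Corollary~\ref{c:rpsp} (together with $\mu_R(i\sigma)=0$) while the paper appeals directly to the explicit formula of Lemma~\ref{l:almostonedirection} (together with regularity of $R$ and $\Phi$ at $i\sigma$); these are equivalent packagings of the same information.
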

\begin{proof}
We note that if $S(i \sigma)-I$, a compact operator, has only purely
imaginary eigenvalues, then $(m+1)S(i\sigma)-mI= (m+1)(S(i\sigma)-I)+I$
has no nontrivial null space.  This gives
 $\msca(\det( (m+1)S(\lambda)-mI),i\sigma)\leq 0.$
  Thus if $m>1$ by applying Proposition
\ref{p:togettophys} we see that $S$ cannot have a pole at 
$e^{i(m\pi+ \pi/2)}\sigma$.  

Again assuming $m>0$, we note that $S$ has a pole at 
$e^{i(-\pi m +  \pi/2)}\sigma$
if and only if $S$ has a pole at $e^{i \pi} e^{i(\pi m -  \pi/2)}\sigma
= e^{i(\pi m +  \pi/2)}\sigma$, by (\ref{eq:correct}).  But 
we have just shown this is impossible.  Thus $S$ has no poles at 
$e^{i(m\pi +\pi/2)} \sigma$ for any $m\in \Integers \setminus \{0\}$.

If $P$ does not have eigenvalue $-\sigma^2$, then $R(\lambda)$ and 
$\Phi(\lambda)$ are both regular at $i\sigma$.  Thus combining the first part 
of the theorem with Lemma \ref{l:almostonedirection} we see that
$R(\lambda)$ is regular at $e^{i(m\pi+\pi/2)\sigma}$.
\end{proof}

Now we can see immediately why the distinction between (\ref{eq:incorrect})
and (\ref{eq:correct}) is so important here.  If (\ref{eq:incorrect})
were true, in even dimension $d$ we would have that 
$I-S(i\sigma)$ is skew-adjoint, and hence has only imaginary eigenvalues.  
However, since it is rather that $\mcr (I-S(i\sigma))$ which is skew-adjoint,
the question is more subtle.  However, something can still be said.  

\begin{cor}\label{c:generalpureimag} Let $d$ be even.
Suppose $\pot \in L^{\infty}_{0}(\Real^d;\Real)$ and $\mco \subset \Real^d$
is an open bounded set with smooth boundary.  
Let $P$ be the operator $-\Delta +\pot$
on $\Real^d\setminus \overline{\mco}$, with Dirichlet or Neumann boundary conditions.
If $P$ has no negative eigenvalues and if both $\{x\in \Real^d: -x\in \mco \} =\mco$ and 
$\pot (-x)\equiv \pot (x)$,
then $P$ has no resonances with argument
 $\pi/2+ m\pi$, $m\in \Integers \setminus \{0\}$.
\end{cor}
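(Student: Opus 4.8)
The plan is to reduce the claim to Theorem \ref{thm:pureimag} by checking its hypothesis, namely that $S(i\sigma)-I$ has only purely imaginary eigenvalues, for every $\sigma>0$. Fix $\sigma>0$ and recall that $i\sigma$ denotes the point of $\Lambda$ of argument $\pi/2$ and norm $\sigma$. Since $P$ has no negative eigenvalues we have $P\geq 0$, so $-\sigma^2$ lies in the resolvent set of $P$; hence $R(\lambda)$, and therefore (via the representation in Proposition \ref{p:pe-zw}) the scattering matrix $S(\lambda)$, is holomorphic at $\lambda=i\sigma$, and $S(i\sigma)-I$ is a compact operator. In particular $-\sigma^2$ is not an eigenvalue of $P$, so both conclusions of Theorem \ref{thm:pureimag} will be available once its hypothesis is verified.

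The decisive step is to use the symmetry hypotheses to strip the reflection operators $\mcr$ out of the identity (\ref{eq:correct}). Because $\{x:-x\in\mco\}=\mco$ and $V(-x)\equiv V(x)$, the operator $P$ satisfies the hypotheses of Corollary \ref{c:even}, which in even dimension gives the cleaner identity
$$S(\overline{\lambda})^*=2I-S(e^{i\pi}\lambda).$$
I would then evaluate this at $\lambda=e^{-i\pi/2}\sigma$, the point of argument $-\pi/2$ and norm $\sigma$. For this choice one has $\overline{\lambda}=e^{i\pi/2}\sigma=i\sigma$ and $e^{i\pi}\lambda=e^{i\pi/2}\sigma=i\sigma$, so the identity collapses to
$$S(i\sigma)^*=2I-S(i\sigma).$$
Equivalently $(S(i\sigma)-I)^*=-(S(i\sigma)-I)$, so $S(i\sigma)-I$ is skew-adjoint; being also compact, it is a compact normal operator and its eigenvalues are purely imaginary.

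With the hypothesis of Theorem \ref{thm:pureimag} confirmed, that theorem yields that $S$ is analytic near $e^{i(m\pi+\pi/2)}\sigma$ for every $m\in\Integers\setminus\{0\}$ and, since $-\sigma^2$ is not an eigenvalue of $P$, that $P$ has no resolvent resonance there. Finally I would let $\sigma$ range over $(0,\infty)$: every purely imaginary point of $\Lambda$ lying on a nonphysical sheet has argument $\pi/2+m\pi$ with $m\neq 0$, hence is of the form $e^{i(m\pi+\pi/2)}\sigma$ for some $\sigma>0$, and this covers all the points in the statement.

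I do not anticipate a serious obstacle; the substance lies entirely in the bookkeeping of where the relevant points sit on $\Lambda$ and in recognizing that the symmetry of $\mco$ and $V$ is exactly what is needed to pass from the correct identity (\ref{eq:correct}), under which only $\mcr(I-S(i\sigma))$ is skew-adjoint, to the stronger conclusion that $I-S(i\sigma)$ itself is skew-adjoint. This is precisely the subtlety flagged in the discussion preceding the corollary, and it explains why the evenness assumptions on $\mco$ and $V$ cannot be removed.
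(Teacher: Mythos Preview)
Your argument is correct and follows essentially the same route as the paper: invoke Corollary~\ref{c:even} to conclude that $I-S(i\sigma)$ is skew-adjoint under the symmetry hypotheses, then feed this into Theorem~\ref{thm:pureimag}. The paper's proof additionally cites Theorem~\ref{thm:prpsm}, but as you observed, the second clause of Theorem~\ref{thm:pureimag} already delivers the resolvent-resonance conclusion once you note that $-\sigma^2$ is not an eigenvalue, so your streamlining is legitimate.
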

\begin{proof}
This follows immediately from Theorem \ref{thm:pureimag} 
and Theorem \ref{thm:prpsm} when 
combined with Corollary \ref{c:even} which showed that in this 
setting $I-S(i\sigma)$ is skew adjoint.  Note that we do not
specify whether the purely imaginary resonances are resolvent resonances or
scattering resonances, since the theorem and our assumptions guarantee
that there are neither.
\end{proof}

The case of $m=-1$  (and thus also $m=1$)
of the following corollary is proved in 
\cite[Theorem 4.4]{beale}.  The results of \cite{beale}
combined with Theorem \ref{thm:pureimag} immediately give us more.
\begin{cor}\label{c:obstaclepureimag} Let $\mco \subset \Real^d$
be an open bounded set with smooth boundary 
 so that $\Real^d \setminus \overline{ \mco}$ is 
connected.
Let $P$ the the operator $-\Delta$ on $\Real^d \setminus 
\overline{\mco}$ with 
Dirichlet or Neumann boundary conditions.  Then $P$ has no resonances with argument $\pi/2+ m\pi$, $m\in \Integers \setminus \{0\}$.  If $P$ is instead
the operator $-\Delta$ on $\Real^d \setminus 
\overline{\mco}$  satisfying the 
Robin-type boundary condition 
$$f u +\frac{\partial u}{\partial \nu }=0 \; \text{on $\partial(\Real^d
\setminus  \overline{\mco})$}$$
where $\nu$ is the outward normal and $f$ is a non-negative $C^1$ 
function, then $P$ has at most finitely many resonances (resolvent
or scattering) with
argument $\pi/2+ m\pi$ for each  $m\in \Integers \setminus \{0\}$.
\end{cor}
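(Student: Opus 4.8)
The plan is to obtain both assertions from Theorem~\ref{thm:pureimag}, whose hypothesis is the single spectral condition that $S(\isig)-I$ have only purely imaginary eigenvalues at each physical purely imaginary point $\isig\in\Lambda_0$. First I would note that in every case here $P\geq 0$: for Dirichlet or Neumann conditions this is immediate, and for the Robin condition with $f\geq 0$ the form $\int_{\Real^d\setminus\overline{\mco}}|\nabla u|^2+\int_{\partial\mco}f\,|u|^2$ is nonnegative. In particular $-\sigma^2$ is never an eigenvalue of $P$, so whenever the first clause of Theorem~\ref{thm:pureimag} applies, its ``moreover'' clause also rules out a \emph{resolvent} resonance at $e^{i(m\pi+\pi/2)}\sigma$; Theorem~\ref{thm:prpsm} (equivalently Corollary~\ref{c:rpsp}) ties the two notions together, so it is enough to verify the spectral condition.

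The content lies in that spectral condition, and I would first record two facts that hold with no hypothesis on $\mco$. By (\ref{eq:correct}) the operator $\mcr(S(\isig)-I)$ is skew-adjoint, whence $\langle (S(\isig)-I)\phi,\mcr\phi\rangle=\langle \mcr(S(\isig)-I)\phi,\phi\rangle$ is purely imaginary for every $\phi$. Moreover, since $\mcr$ is a real self-adjoint unitary involution on $L^2(\Sphere^{d-1})$, the number $\langle\phi,\mcr\phi\rangle$ is always \emph{real}: the antipodal change of variables gives $\langle\mcr\phi,\phi\rangle=\langle\phi,\mcr\phi\rangle$, which with conjugate symmetry forces reality. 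Combining the two, if $(S(\isig)-I)\phi=\mu\phi$ with $\langle\phi,\mcr\phi\rangle\neq 0$, then $\mu\langle\phi,\mcr\phi\rangle\in i\Real$ forces $\mu\in i\Real$. Hence the only way the spectral condition can fail is through an eigenvector $\phi$ that is $\mcr$-null, $\langle\phi,\mcr\phi\rangle=0$, carrying a non-imaginary eigenvalue. This is exactly the subtlety introduced by the reflection in (\ref{eq:correct}); in the centrally symmetric case of Corollary~\ref{c:generalpureimag} the reflections cancel and no such exception can occur.

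The remaining and decisive step is to exclude these $\mcr$-null exceptional eigenvectors, and this is where the obstacle hypotheses and the work of \cite{beale} enter. By Proposition~\ref{p:togettophys} a purely imaginary pole on $\Lambda_k$, $k\geq 1$, occurs precisely when $-1/k$ is an eigenvalue of $S(\isig)-I$ (the sheets $\Lambda_{-k}$ being governed by the pole symmetry of Proposition~\ref{p:correction}), so what must be ruled out is an $\mcr$-null eigenvector with eigenvalue $-1/k$. For the Dirichlet and Neumann problems I would appeal to the exterior-problem analysis of \cite{beale} at pure imaginary energy, which supplies precisely the missing nondegeneracy and shows that $S(\isig)-I$ has no nonzero real eigenvalue for any $\sigma>0$; Beale's Theorem~4.4 is the case $m=\pm 1$ of the corollary, i.e.\ the exclusion of the eigenvalue $-1$, and feeding the full spectral condition into Theorem~\ref{thm:pureimag} then removes purely imaginary poles from every sheet $\Lambda_m$, $m\neq 0$, at once. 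This is the promised extension of \cite{beale}, and I expect this nondegeneracy to be the hard part, since the purely algebraic identity (\ref{eq:correct}) yields only the skew-adjointness of $\mcr(I-S(\isig))$.

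For the Robin condition the same scheme runs, but now the positivity provided by $f\geq 0$ controls the eigenvalues of $S(\isig)-I$ only up to a finite-dimensional defect, so a forbidden nonzero real eigenvalue can appear for some $\sigma$. To obtain the stated finiteness on a fixed sheet $\Lambda_k$, I would use that $\sigma\mapsto\det(kS(\isig)-(k-1)I)$ is real-analytic on $(0,\infty)$, tends to $1$ as $\sigma\to\infty$ (where $S(\isig)\to I$ and the scattering becomes negligible), and is not identically zero; by Proposition~\ref{p:togettophys} its zeros are exactly the purely imaginary poles on $\Lambda_k$, and together with the control of the low-energy regime afforded by \cite{beale} these are finite in number. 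Thus each $\Lambda_k$, $k\neq 0$, carries at most finitely many purely imaginary scattering poles, and by Theorem~\ref{thm:prpsm} at most finitely many purely imaginary resolvent resonances, giving the stated finiteness for resonances of either type.
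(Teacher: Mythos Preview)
Your overall scheme is right: check that $S(i\sigma)-I$ has only purely imaginary eigenvalues and feed this into Theorem~\ref{thm:pureimag}. But the way you propose to verify the spectral condition, and your Robin argument, diverge from the paper and leave gaps.

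\textbf{The key input you do not name.} What \cite{l-p} and \cite{beale} actually supply is that $i^{-(d-1)}(I-S(i\sigma))\mcr$ is a \emph{definite} (positive or negative) self-adjoint operator: for Dirichlet (resp.\ Neumann) boundary conditions this holds for every $\sigma>0$, and for the Robin condition it holds for $\sigma>\sigma_0$. Definiteness immediately yields the spectral hypothesis of Theorem~\ref{thm:pureimag}: writing $D=\pm i(I-S(i\sigma))\mcr\geq 0$, one has $I-S(i\sigma)=\mp iD\mcr$, and the nonzero spectrum of $D\mcr$ coincides with that of the self-adjoint operator $D^{1/2}\mcr D^{1/2}$, hence is real; so $I-S(i\sigma)$ has only purely imaginary eigenvalues. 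This single step replaces your intermediate analysis of $\mcr$-null eigenvectors, which is correct but unnecessary once definiteness is in hand.

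\textbf{The Robin case.} Here your argument has a genuine gap. You claim $S(i\sigma)\to I$ as $\sigma\to\infty$ without justification, and you attribute ``low-energy control'' to \cite{beale}; in fact \cite[Theorem~3.5]{beale} controls \emph{large} $\sigma$ (it gives definiteness for $\sigma>\sigma_0$), not small $\sigma$. Real analyticity of $\sigma\mapsto\det(kS(i\sigma)-(k-1)I)$ alone does not preclude zeros accumulating at $\sigma=0$. The paper closes this gap differently: definiteness from \cite{beale} rules out purely imaginary poles for $\sigma>\sigma_0$, and the global upper bounds on resonances of Vodev \cite{vodeveven,vodev2} (or alternatively \cite{mu-st}) guarantee that the bounded segment $e^{i(\pi/2+m\pi)}(0,\sigma_0]$ carries only finitely many. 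You should invoke one of these counting results rather than an unproved limit.
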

\begin{proof}
We note first the absence of negative eigenvalues in this setting.

From \cite[Theorem 3.5]{beale}, for the Robin boundary condition for
a fixed obstacle there
is a $\sigma_0$
so that  $i^{-(d-1)}(I-S(e^{i\pi/2} \sigma))\mcr$
is a negative operator for  $\sigma \in (\sigma_0,\infty)$.  From \cite{l-p} or 
\cite{beale}, for the Dirichlet
(Neumann) boundary condition,
 $i^{-(d-1)}(I-S(e^{i\pi/2} \sigma))\mcr$
is positive (negative) for any $\sigma>0$.  From the results 
of \cite[Section 4]{l-p}, the eigenvalues of
$(I-S(e^{i\pi/2} \sigma))\mcr$ are purely imaginary, for $\sigma>\sigma_0$
for the Robin case, and all $\sigma>0$ for the other cases. 
  Then
Theorem \ref{thm:pureimag} finishes the proof, since we know from results
of Vodev \cite{vodeveven,vodev2} that there are only finitely many
resonances on the interval $e^{i(\pi/2+m\pi)}(0,\sigma_0)$.
\footnote{ We note that using results of \cite{mu-st} gives
an alternate approach to those of \cite{vodeveven,vodev2} to showing that 
there are only finitely many resonances on $e^{i(\pi/2+m\pi)}(0,\sigma_0)$,
and indeed in a region $\{ \lambda\in \Lambda: -M\leq \arg \lambda
\leq M,\; |\lambda|\leq M\}$ for any finite $M$.}
\end{proof}

\section{Purely imaginary resolvent resonances for fixed sign potentials}\label{s:fixedsign}

In this section we prove Theorem \ref{thm:fixedsign} on resolvent resonances for Schr\"odinger operators with potentials 
$V\in L^{\infty}_{0}(\Real^d)$ with fixed sign. Again in this section we assume $d$ is even. It would be possible to prove
a slightly weaker version of Theorem \ref{thm:fixedsign}
in a manner analogous to the obstacle case, Corollary \ref{c:obstaclepureimag},
invoking some results of \cite{l-p,vasy}.  However, we choose to do this in a somewhat different way relying on the structure of the resolvent.
This method has some advantages. For example, we
prove that if $e^{i(m\pi+\pi/2)}\sigma$, $\sigma>0$, $m\in \Integers$ is a
resonance of $-\Delta +V$ and $V\leq 0$, $V\in L^{\infty}_{0}(\Real^d)$, then $-\sigma^2$ is an
eigenvalue of $-\Delta +V$.

We use the notation $R_V(\lambda)$ for the meromorphic continuation of the
resolvent of $-\Delta+V$, so that for $\lambda \in \Lambda_0$,
$R_V(\lambda)= (-\Delta +V-\lambda^2)^{-1}$.

As in \cite{ch-hi2}, we reduce the problem to studying an operator on $\Lambda_0$ using the following identity. For $\lambda \in \Lambda_0$,
the point $e^{i m \pi} \lambda \in \Lambda_m$. The resolvent $R_0 (\lambda)$ of $H_0$ satisfies
\begin{equation}\label{eq:reduction1}
R_0 (e^{i m \pi} \lambda) = R_0 (\lambda) + i m  T(\lambda),
\end{equation}
where the operator $T(\lambda)$ has the integral kernel:
\begin{equation}\label{eq:t-op1}
T(x,y;\lambda) =  \frac{1}{2} (2 \pi)^{1-d} \lambda^{d-2} \int_{\Sphere^{d-1}} ~e^{i \lambda \omega \cdot (x-y)} ~d \omega .
\end{equation}
A crucial property of this operator is that for $d \geq 2$  even
 and
$\chi\in L^{\infty}_0(\Real^d;\Real)$, the operator $ \chi T(i \sigma)
\chi$ is self-adjoint for real $\sigma >0$.
To see this, note that from \eqref{eq:t-op1}, we have
\begin{equation}\label{eq:skew1}
T(x,y; i \sigma) = \frac{ (-1)^{d/2 + 1}} {2} (2\pi)^{1-d}\sigma^{d-2}  \int_{\Sphere^{d-1}}
 ~e^{-\sigma \omega \cdot (x-y) } ~d \omega 
\end{equation}
with $d/2\in \Natural$.
It follows, using the change of variable $\omega=-\omega'$,
 that $( \chi T( i \sigma)\chi) ^* = \chi T(i \sigma)\chi$.  We note that
this is in contrast with the case of $d \geq 1$ {\em odd}, where the
difference
$R_0(i\sigma)-R_0(-i\sigma)$ is self-adjoint.

\vspace{3mm}
\noindent
{\it Proof of Theorem \ref{thm:fixedsign}.}
 As outlined above, we
 look for zeros of the form $ \sigma_m = e^{i (m + 1/2) \pi } \sigma$, where $\sigma > 0$ and
$\sigma_m \in \Lambda_m$. We define the multiplication operator $\sgn V(x) = +1$ on $\{ x \in \R^d ~|~ V(x) \geq 0 \}$ and $\sgn V(x) = -1$
for $\{ x \in \R^d ~|~ V(x) < 0 \}$. Then the potential has the decomposition $V(x) = \sgn V(x) ~|V(x)|$.
For $\lambda \in \Lambda_0$ we write the resolvent formula as
\begin{equation}\label{eq:resolvent1}
|V|^{1/2} R_V( \lambda ) |V|^{1/2}( I + ( \sgn V) |V|^{1/2} R_0 ( \lambda ) |V|^{1/2} ) =  |V|^{1/2} R_0( \lambda ) |V|^{1/2},
\end{equation}
so  we study the operator
\begin{equation}\label{eq:K-defn1}
I+ {K}(\lambda) =  I + (\sgn V) |V|^{1/2} R_0(\lambda  ) |V|^{1/2}.
\end{equation}

Since $V$ has compact support, the operator $K(\lambda)$ has an analytic continuation to the 
Riemann surface $\Lambda$. It follows from this fact and
\eqref{eq:resolvent1} that the zeros
of $I+K(\lambda )$ on the $m^{th}$-sheet $\Lambda_m$ are the resonances of the operator $H_V$ on $\Lambda_m$.
The analytic continuation formula for the free
resolvent (\ref{eq:reduction1}) gives
\begin{align}\label{eq:mresolv1}
K(e^{i \pi m} \lambda) =&  (\sgn V) |V|^{1/2} R_0 ( e^{i m \pi} \lambda ) |V|^{1/2} \nonumber \\
 =&  (\sgn V) |V|^{1/2} R_0 (  \lambda ) |V|^{1/2}  - i m (\sgn V) |V|^{1/2} T(\lambda) |V|^{1/2} .
\end{align}




In order to investigate purely imaginary poles on $\Lambda_m$, $m \in \Integers^*$,
we restrict to $\lambda = i \sigma= e^{i\pi/2}\sigma \in \Lambda_0$, with $\sigma > 0$, so that $e^{i (m \pi + \pi /2)}
\sigma \in \Lambda_m$ is purely imaginary. We obtain from
the reduction in (\ref{eq:mresolv1})
\begin{equation*}
{K} ( i e^{im \pi}\sigma  ) =  (\sgn V) |V|^{1/2} R_0 (i\sigma ) |V|^{1/2}  - i m (\sgn V) |V|^{1/2} T(i \sigma) |V|^{1/2} .
\end{equation*}
Since $R_0(i\sigma)=(-\Delta+\sigma^2)^{-1}$, $R_0(i\sigma)$ is
self-adjoint and $R_0(i\sigma)> 0$.  On the other hand, we showed
that $i |V|^{1/2}T(i\sigma)|V|^{1/2}$ is skew-adjoint in the case of $d$ even.

We now consider the operator $I+K(i e^{im\pi}\sigma)$ for potentials $V$ 
having fixed sign.
If $V \geq 0$, the operator $I +  V^{1/2} R_0(i \sigma ) V^{1/2} $ is strictly positive.
On the other hand,
the trace-class operator
$iV^{1/2} T (i \sigma) V^{1/2}$ is skew-adjoint and therefore has only pure imaginary eigenvalues. Consequently, $I+K(e^{i m \pi}i \sigma)$ has no zeros for $\sigma > 0$,
meaning there are no purely imaginary zeros on any sheet.
If $-V \geq 0$, the formula for $K(i e^{im\pi}\sigma)$ becomes:
\begin{equation*}
{K} ( i e^{im\pi}\sigma ) = - |V|^{1/2} R_0(i\sigma) |V|^{1/2}  + i m |V|^{1/2} T(i \sigma) |V|^{1/2} .
\end{equation*}
The operator $I - |V|^{1/2} R_0 (i\sigma ) |V|^{1/2}$ has a zero if and only if $-\sigma^2$ is an
eigenvalue
of $-\Delta+V$, and the multiplicities agree.   Because the trace-class operator
$i |V|^{1/2} T(i \sigma) |V|^{1/2}$ is still skew-adjoint, there can be no zeros
of $I+K(i e^{i\pi m} \sigma)$ unless $-\sigma^2$ is
an eigenvalue of $-\Delta+V$.
 Consequently, if $N_V$ denotes the 
number of negative eigenvalues of $-\Delta +V$,  there
can be at most $N_V < \infty$ purely imaginary resonances on any 
sheet $\Lambda_m$, $m \in \Integers^*$.
\qed

\end{document}